\documentclass[11pt, lettersize]{article}

\usepackage{amsfonts,setspace}
\usepackage{amsmath, amsthm}
\usepackage{comment}
\usepackage{dsfont}
\usepackage{anysize}
\usepackage{multicol}
\usepackage{graphicx}

\usepackage{tikz}
\usepackage{mathdots}
\usepackage{cancel}
\usepackage{color}
\usepackage{siunitx}
\usepackage{array}
\usepackage{float}
\usepackage{multirow}
\usepackage{gensymb}
\usepackage{tabularx}
\usepackage{extarrows}
\usepackage{booktabs}

\usetikzlibrary{fadings}
\usetikzlibrary{patterns}
\usetikzlibrary{shadows.blur}
\usetikzlibrary{shapes}

\usepackage{enumerate}
\usepackage{enumitem}
\usepackage{color}
\usepackage{algorithm}
\usepackage{algpseudocode}
\usepackage{caption}
\usepackage{subcaption}

\usepackage{hyperref}
\usepackage{cleveref}

\usepackage[skins]{tcolorbox}
\tcbuselibrary{breakable}
\usepackage{titlesec}

\newtcolorbox[
  auto counter, number within=section
]{algorithmbox}[3][]{%
  enhanced,
  colback=white,colframe=black,coltitle=black,
  sharp corners,boxrule=0.4pt,
  fonttitle=\itshape,
  attach boxed title to top left={yshift=-0.3\baselineskip-0.4pt,xshift=2mm},
  boxed title style={tile,size=minimal,left=0.5mm,right=0.5mm,
                     colback=white,before upper=\strut},
  float*=htb,                         
  title   ={Protocol~\thetcbcounter: #2},                       
  label   ={#3},
  #1                                   
}

\crefname{algorithmbox}{Protocol}{Protocols}

\newcommand{\abs}[1]{|#1|}
\theoremstyle{plain}
\newtheorem{theorem}{Theorem}
\newtheorem{lemma}{Lemma}
\newtheorem{proposition}{Proposition}

\newtheorem{fact}[theorem]{Fact}

\theoremstyle{definition}
\newtheorem{definition}{Definition}

\newtheorem{claim}{Claim}
\newtheorem{property}{Property}

\theoremstyle{remark}

\newtheorem{remark}{Remark}

\newcommand{\N}{\mathbb{N}}

\newcommand{\bigo}[1]{O\!\left(#1\right)}

\newcommand{\nonthreecol}{\textsc{Non-3-Colorability}}
\newcommand{\nonkcol}{\textsc{Non-k-Colorability}}

\newcommand{\subgraphcounting}{\textsc{subgraph-counting}($H$)}

\newcommand{\Unif}{\mathrm{Unif}}

\newcommand{\TV}[2]{\textrm{TV}\left(#1,#2\right)}
\newcommand{\TVc}[2]{\mathrm{TV}\left(#1,#2\right)}
\newcommand{\fieldq}{\mathbb{F}_q}
\newcommand{\field}{\mathbb{F}}
\newcommand{\Prob}[1]{\mathbb{P}\left[#1\right]}

\newcommand{\pal}[3]{#1\left[\alpha_{#2}^{#3}\right]}

\newcommand{\pbet}[3]{#1\left[\beta_{#2}^{#3}\right]}

\newcommand{\id}{\textsc{id}}

\newcommand{\sumcheckproblem}{\textsc{Sumcheck-problem}}

\newcommand{\bigoo}{\mathcal O}

\usepackage{authblk} 
\author[1,2]{Benjamin Jauregui}
\author[3]{Masayuki Miyamoto}

\affil[1]{Universidad de Chile, Chile}
\affil[2]{IRIF, Université Paris Cité, France}
\affil[3]{University of Tsukuba, Japan}

\begin{document}

\title{
Distributed Statistical Zero-Knowledge Proofs via Sumcheck
}

\date{\empty}

\maketitle

 \begin{abstract}
We study distributed zero-knowledge proofs, introduced by Bick, Kol, and Oshman (SODA 2022). Although distributed interactive proofs (dIP) have progressed rapidly in recent years, general-purpose techniques for distributed zero-knowledge proofs remain limited and are often problem-specific. We address this gap by introducing distributed statistical zero-knowledge, requiring that each node’s view be simulatable within a negligible statistical distance, and by using this notion to lift the robust Sumcheck protocol (Lund, Fortnow, Karloff and Nisan, FOCS’90) into a modular primitive for constructing distributed zero-knowledge proofs.

Our main technical contribution is a distributed zero-knowledge implementation of the Sumcheck protocol. Given oracle access to a global polynomial $F$ over a finite field $\mathbb F$ with $N$ variables, we design a protocol to verify claims of the form
$\sum_{x\in\field} F(x) = a$ using $\bigo{N}$ rounds of $\bigo{\log|\mathbb F|}$-bit messages, while achieving the statistical zero-knowledge property and small soundness error. We then apply this primitive for two families of problems:
\begin{itemize}
    \item \textbf{\nonkcol}: An $\bigoo(n)$-round, $\bigoo(\log^{1+o(1)} n)$-bit distributed statistical zero-knowledge proof deciding if a graph is not $k$-colorable, for any constant $k$. This is the first nontrivial distributed interactive proof for this problem (even for the case $k=3$, or without zero-knowledge guarantees), with total $\tilde{\bigoo}(n)$ per-node communication with the prover, contrasting with the $\tilde{\Omega}(n^2)$ lower bound for non-interactive proofs (Göös--Suomela, DC~2016).
    \item \textbf{Subgraph Counting}: An $\bigoo(k\log n)$-round, $\bigoo(k\log n)$-bit distributed statistical zero-knowledge proof to count the number of induced/non-induced copies of a given $k$-node pattern $H$ in a graph. For a superconstant $k$, this is the first distributed interactive proof. For constant $k$, our protocol improves the round and message complexity of the distributed interactive proof of Naor--Parter--Yogev (SODA~2020) from $\mathrm{polylog}(n)$ to almost $\bigoo(\log n)$, while additionally providing a statistical zero-knowledge property. We note that the only previously known distributed zero-knowledge results concerning subgraph counting focus on \emph{triangle-freeness} in a 1-round protocol (Grilo--Paz--Perry, 2026), which requires $\tilde{O}(\sqrt{n})$-bit message per node.
\end{itemize}
We further show that beyond the baseline $\bigoo(N)$-round Sumcheck protocol, additional round compression of the Sumcheck protocol is problem-dependent. For Subgraph Counting with constant $k$, a divide-and-conquer approach, recently developed for interactive oracle proofs (Levrat--Medevielle--Nardi, CiC~2025) yields a round complexity of $\bigoo(\log\log n)$, at the cost of slightly larger per-node messages of size $\bigoo(\log^{2+o(1)} n)$ per round.
In contrast, for the problem \nonthreecol\ on constant-degree graphs, we obtain an $\bigoo(n/\log n)$-round protocol and prove a conditional lower bound ruling out $o(n/\log n)$ rounds even on constant-degree graphs when each node is restricted to polynomial-time computation.

\end{abstract}

\tableofcontents
\newpage

\section{Introduction}
Interactive proofs (IP) \cite{Babai1985, goldwasser1989knowledge} have been a key computational model studied in complexity theory since their introduction. In this model, an interactive proof (IP) is a protocol in which two parties, a prover and a verifier, engage in alternate rounds of communication to establish the validity of a statement. The prover is assumed to have unbounded computational power and aims to convince the verifier that the statement is true. The verifier is limited to efficient (typically polynomial-time) computation and uses randomness in its decisions. If the statement is true, the prover can convince the verifier with high probability, and if it is false, no matter what strategy a malicious prover follows, it will fail to convince the verifier except with small probability. One of the most famous IP is the Sumcheck protocol, introduced by Lund, Fortnow, Karloff, and Nisan~\cite{lund1992algebraic}. This is an interactive protocol that allows the verifier to check in polynomial time an equality of the form $\sum_{x \in A} F(x) = a$, where $F$ is a $N$-variate polynomial over a finite field, %that usually depends on some input $n$-node graph $G$\footnote{For example, $F$ can be the arithmetization of 3-coloring of a graph $G$}
and the set $A$ has size exponential in $N$ (typically, a boolean hypercube $\{0,1\}^N$). This question arose in the context of placing $\#\mathsf{P}$ within interactive proofs. The sumcheck protocol revealed that the power of interactive proofs is far greater than originally believed, ultimately leading to Shamir’s landmark result $\mathsf{IP} = \mathsf{PSPACE}$~\cite{shamir1992ip}. Since its inception, Sumcheck\ has found a wide range of applications, from verifiable delegation of computation~\cite{goldwasser2015delegating,reingold2016constant} to real-world blockchain systems~\cite{xie2019libra,ben2019aurora}.

While interactive proofs allow one to convince the verifier (with high probability) that a statement is true, they may also reveal \emph{why} it is true, namely the witness or some information that may be impossible to obtain by the verifier (in polynomial time) without the help of the prover. Goldwasser, Micali, and Rackoff~\cite{goldwasser1989knowledge} introduced \emph{zero-knowledge proofs} to study such leakage. As in any interactive proof, a powerful prover tries to convince a computationally bounded verifier of the validity of a claim through a sequence of message exchanges. What makes zero-knowledge proofs distinctive is that, even after the entire interaction, the verifier cannot gain any information that could not have been generated independently (in polynomial time), without access to the prover’s knowledge. This ensures that the prover’s “witness” or secret remains completely hidden, while still enabling the verifier to be convinced of the statement’s correctness. In this paper, we adapt the Sumcheck\ protocol to the distributed setting and present a distributed zero-knowledge variant of it.

\subsection{Distributed verification and zero-knowledge.}
In recent years, interactive proofs have been extended to the distributed setting.
A distributed interactive proof (dIP), introduced by Kol, Oshman, and Saxena~\cite{kol2018interactive}, enables the nodes of a graph $G$ which can communicate only with their immediate neighbors, to verify a global predicate $\mathcal{P}$ about $G$ with the assistance of an untrusted prover. Communication proceeds in synchronous and alternating rounds between the nodes and the prover. After the interaction ends, each node executes a verification algorithm from a fixed class of distributed algorithms $\mathcal A_V$ that determines whether it accepts or rejects. More precisely, in an $r$-round dIP where the prover and each node exchange $\ell$-bit messages in each round, the class $\mathcal A_V$ of verification algorithms will always be a class of distributed algorithms $\mathcal C[\bigoo(1), r\ell]$ that run in $\bigoo(1)$ synchronous rounds, and the neighbors exchange messages of size at most $\bigoo(r\ell)$ bits on every edge in each communication round (and hence $\bigoo(r\ell)$ total communication bits per edge). In each dIP, the values $r$ and $\ell$ are defined according to the problem. (see~\Cref{subsec:dipandzk} for the formal definition).
A dIP $\Pi$ is said to be valid for $\mathcal{P}$ if it satisfies the standard completeness and soundness conditions. Completeness requires that if $G$ satisfies $\mathcal{P}$, then there exists an assignment of messages given by the prover to the nodes, such that with high probability all nodes accept. Soundness requires that if $G$ does not satisfy $\mathcal{P}$, then regardless of the prover’s strategy, with high probability at least one node rejects.

The model generalizes classical proof-labeling schemes (PLS)~\cite{korman2010proof} and locally checkable proofs (LCP)~\cite{goos2016locally}, which correspond to the special case of a single, non-interactive round. The class $\mathcal A_V$ of verification algorithms is limited to $1$-round broadcast CONGEST algorithms where the broadcast message is always its own proof for PLSs, and is $\mathcal C[\bigoo(1), \infty]$ (i.e., constant round LOCAL algorithms) for LCPs. See also \Cref{subsec:related-work} for related work.

\paragraph*{Zero-knowledge in dIP.}  The core intuition behind (the centralized definition) of zero-knowledge is that, on yes-instances, anything that a probabilistic polynomial-time verifier learns from the interaction could already be produced, without interacting with the prover, by a probabilistic polynomial-time simulator. More formally, the (perfect) zero-knowledge property of an IP $\Pi$ requires that for every verifier in the class of probabilistic polynomial time algorithms~$\mathcal{A}_{poly}$, there exists an algorithm~$A \in \mathcal{A}_{poly}$, called a simulator, that samples the verifier’s full view with the \emph{exact} distribution of a real execution of $\Pi$.

In contrast to the centralized setting, in distributed models, an efficient algorithm is typically considered a {\it local} algorithm (ideally, a constant-round distributed algorithm), i.e., anything the nodes exchange with their constant-hop neighbors is deemed legal. With this in mind,  Bick, Kol, and Oshman~\cite{zkdef} defined the notion of distributed zero-knowledge (dZK) proofs: informally, given a (sub)class of constant-round distributed algorithms $\mathcal A_{const}$, a distributed interactive protocol satisfies the zero-knowledge property with respect to the class $\mathcal A_{const}$ if for every yes-instance there exists a distributed algorithm from $\mathcal A_{const}$, a.k.a. a {\it simulator}, whose local output in each node is identical (in distribution) to the real transcript that each node has after the execution of the dIP (see \Cref{subsec:dipandzk} for formal definition). In~\cite{zkdef} the authors provided a three-round dZK protocol for 3-Colorability using $\bigoo(\mathrm{deg}(v))$-bit message per node and an one-round dZK for Spanning Tree Verification using $\bigoo(\log n)$-bit message per node, both having a simulator from the class $\mathcal C[1,\bigoo(\log n)]$. Recently, Grilo, Paz, and Perry~\cite{grilo2025distributed} showed one-round dZK protocols for 3-Colorability using $\bigoo(\log n)$-bit message per node and for triangle-freeness using $\tilde \bigoo(\sqrt n)$-bit message per node, with a simulator from the class $\mathcal C[1,\bigoo(\log n)]$ and $\mathcal C[1,\tilde \bigoo(\sqrt n)]$, respectively.
As more general results, (1) \cite{zkdef} provided a compiler that transforms a proof-labeling scheme (PLS) to a dIP protocol that satisfies the distributed zero-knowledge property defined in \cite{zkdef}, where the simulator class depends on the parameters of the PLS. In particular, the dIP that satisfies the zero-knowledge property significantly increases the proof size, and (2) \cite{grilo2025distributed} showed how to certify any graph property in \textsf{NP}, which is zero-knowledge under some cryptographic assumptions (such as the existence of one-way functions).

Despite the progress, two conceptual gaps remain:
\begin{description}
    \item[Generality.] Existing dZK protocols have some drawbacks: problem-specific, inefficient, or dependent on cryptographic assumptions. There is no general primitive akin to sumcheck that automatically yields efficient protocols that are unconditionally zero-knowledge, applicable for a wider class of properties.
    \item[Perfect vs statistical ZK.] The zero-knowledge property defined in \cite{zkdef} emulates the {\it perfect} zero-knowledge property in IP protocols, where the simulator algorithm needs to simulate {\it exactly the view} (as distribution) that the ZK protocol. While this definition is the main goal that we should ask for a zero-knowledge proof algorithm, it is known in IP protocols that is now known to be achievable for all problems. In classical cryptography and complexity theory, relaxing to statistical zero-knowledge, where we ask the simulator to generate a {\it close enough} view of the original view of the protocol (see \Cref{subsec:dipandzk} for a formal definition), unlocks simpler and more modular constructions. Such a relaxation has not been explored in the distributed setting.
\end{description}

\subsection{Our contributions}
We move beyond previous limitations by lifting the classical sumcheck protocol~\cite{lund1992algebraic} to the distributed domain. In this framework, we study tuples of the form $(G,F,q,a)$ where $G$ is an $n$-node communication graph of the distributed verifier with unique identifiers from the range $\{0,\ldots,n-1\}$, $F:\fieldq^N\rightarrow \fieldq$ is an $N$-variable polynomial of total degree $d$, $q$ is a prime number representing the size of a finite field $\fieldq$, and $a \in \fieldq$ is the claimed target value, where the goal is to decide if $\sum_{\textbf{x}\in\fieldq }F(\textbf{x}) = a$. We call such a tuple a Sumcheck-instance; and in any given Sumcheck-instance $(G,F,q,a)$, each node receives $q$, $a$ as input and has oracle access to $F$.

While it is not hard to obtain a distributed implementation of Sumcheck\ protocol through classical techniques such as aggregations over a spanning tree, we move directly to study how to obtain a distributed implementation of Sumcheck\ protocol that satisfies a {\it zero-knowledge property}. Specifically, we consider the following input: each node has oracle access to $F$, and the input to each node includes $q$, $a$, and by simplicity we also assume the each node receives the ID of the parent in a rooted spanning tree with the ID of the root node is $0$. See \Cref{remark:oracle}.

\begin{theorem}[Distributed ZK Sumcheck]\label{theo:zksc} There exists a statistical zero-knowledge distributed interactive proof such that, for any Sumcheck-instance $(G,F,q,a)$, where $F$ is an $N$-variable polynomial of individual degree $d$ over a finite field $\fieldq$ and any fixed parameter $t \in \mathbb{N}$,  
such that the following holds:

\begin{itemize}
    \item \textbf{Completeness.} If \(\sum_{x \in \{0,1\}^N} F(x) = a\), then all nodes accept with probability \(1\).
    \item \textbf{Soundness.} If \(\sum_{x \in \{0,1\}^N} F(x) \neq a\), then all nodes accept with probability at most \(\tfrac{Nd}{q}+\dfrac{1}{t}\).
\end{itemize}
The protocol runs for \(\bigo{N}\) rounds, with each round requiring \(\bigo{\log q}\) bits of message between the prover and each node. Moreover, the verification algorithm and the simulator belongs to the class $\mathcal C[\bigoo(1),\bigoo(N\log q)]$ requiring a single oracle query to $F$.
\end{theorem}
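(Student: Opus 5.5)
We will distribute the classical Sumcheck protocol over the rooted spanning tree and make it zero-knowledge using the standard masking-polynomial technique, adapted so that no single node ever sees an unmasked prover message. In the centralized protocol, in round $i$ the prover sends the univariate polynomial $g_i(X)=\sum_{x_{i+1},\dots,x_N\in\{0,1\}} F(r_1,\dots,r_{i-1},X,x_{i+1},\dots)$. The verifier checks $g_i(0)+g_i(1)=g_{i-1}(r_{i-1})$ and draws a fresh random $r_i\in\fieldq$. At the end it queries $F(r_1,\dots,r_N)$. Soundness $Nd/q$ follows from Schwartz--Zippel, one factor $d/q$ per round.

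In the distributed version, the prover does not hand a whole $g_i$ to the root. Instead it secret-shares the $d+1$ coefficients (equivalently, evaluations) of $g_i$ additively among the nodes. Each node $v$ receives a uniformly random share $s_v^{(i)}$ of $\bigo{\log q}$ bits per coefficient, with the constraint that the shares sum to $g_i$. The linear checks $g_i(0)+g_i(1)-g_{i-1}(r_{i-1})=0$ are then verified by aggregation over the spanning tree. Each node adds its own local contribution to the partial sums of its children and passes the result to its parent, and the root checks that the total is zero. These partial sums are themselves masked: the prover additionally supplies each node with a random correction term, and the consistency of these terms is verified locally using the parent IDs. This way every node's view consists of values that are uniform subject to the global linear constraints.

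The randomness $r_i$ must be public. We generate it jointly: every node draws local randomness, and the values are summed along the tree, with the prover relaying $r_i$ to everyone and each edge checking consistency. Since at least the honest coins are uniform, $r_i$ is uniform.

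The final oracle check is the source of the $1/t$ term. A single node cannot learn $F(r_1,\dots,r_N)$ together with its other values without leaking information. The value $g_N(r_N)$, however, is just a random evaluation of $F$, which the simulator can compute itself with one oracle query. Knowing $F(r)$ therefore leaks nothing beyond what the simulator can reproduce, because $r$ is uniform and the honest transcript is determined by $F(r)$ plus uniform shares. So the root queries $F$ once at $r$ and compares the result with the aggregated share sum.

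The step we expect to be the main obstacle is making the intermediate polynomials $g_i$ themselves hidden. Jointly, the nodes reconstruct $g_i(r_i)$ values that depend on partial sums of $F$ over the hypercube. To prevent this, we follow Chiesa et al.: the prover commits to a random masking polynomial $R$ (sparse, of individual degree $d$) and runs Sumcheck on $F+\rho R$ for a verifier-chosen $\rho$. The $1/t$ loss arises in this step, since committing to or handling $R$ requires a statistical mask chosen from a set of size about $t$. With this, the transcript becomes statistically close to uniform given $F(r)$. The simulator then samples all messages uniformly under the linear constraints, queries $F(r)$ once, and matches the real view within statistical distance $\le 1/t$.

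Finally we tally the costs. There are $\bigo{N}$ rounds with $\bigo{\log q}$-bit messages each. The verification involves constant-round tree and neighbor checks with $\bigo{N\log q}$ bits per edge. The protocol uses exactly one oracle query. Together these give the bounds in \Cref{theo:zksc}.
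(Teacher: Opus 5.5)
\textbf{There is a genuine gap, and it sits where the $1/t$ term comes from.}

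Your ``random correction terms'' for the masked subtree sums cannot be verified locally. Suppose node $v$ checks $\tilde\sigma_v=\ell_v+\sum_{c\in\mathrm{child}(v)}\tilde\sigma_c+\delta_v$ and the root checks $\tilde\sigma_{\mathrm{root}}=0$. Summing all the local checks certifies only $\sum_v\ell_v+\sum_v\delta_v=0$. The requirement $\sum_v\delta_v=0$ is again a global linear constraint on hidden values, so a cheating prover can absorb any error in $g_i$ into the $\delta$'s.

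The missing idea is the paper's cut-and-choose. The prover commits to $t$ (in fact $t^2$) independent copies of the mask polynomials satisfying the cancellation conditions $\mathbf{(RC)}$ and $\mathbf{(0C)}$, using degree-one secret shares split between a node and its tree neighbours. The root then selects which copies will be used. The remaining copies are opened, and their conditions are checked in one round. A prover whose used copy violates the conditions therefore survives with probability at most $1/t$. That is the $1/t$ in the soundness bound.

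This term has nothing to do with the final oracle query. Nor can it be the simulator's statistical distance: a distance of $1/t$ for a fixed constant $t$ is not statistical indistinguishability in the paper's sense, which requires distance below $1/p(n)$ for every polynomial $p$. In the paper, the Sumcheck part is simulated exactly, since all received values are uniform. The only statistical loss arises later, from replacing the oracle evaluation in the applications.

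Your appeal to the Chiesa et al.\ mask $F+\rho R$ is both unexplained and unnecessary. It is unexplained because you never say how the nodes would commit to $R$ and soundly evaluate $R(r)$. It is unnecessary because zero-knowledge here is required only against single nodes, and no single node ever sees $g_i$.

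Separately, giving every node an additive share of all $d+1$ coefficients of $g_i$ costs $\Theta(d\log q)$ bits per node per round, not $O(\log q)$. The setting allows individual degree up to $O(n)$; for non-$k$-colorability the individual degree is essentially the maximum degree. In that regime your protocol would be no better than the trivial certificate. The paper instead gives node $k$ only the coefficient $\alpha_k^{(i)}$ together with the subtree sum $\beta_k^{(i)}$. This one-coefficient-per-node layout is exactly what forces per-value masks, and hence the cut-and-choose above.

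Your core idea does work for small $d$ if you simply drop the masks. With fresh uniform additive shares, every non-root subtree sum is already uniform, and each node can check its subtree sum against its children's in one round. The root's view is then determined by uniform values, the public constants and $F(r)$. This gives an exact simulator and soundness error $Nd/q$ with no $1/t$ term. However, it uses $O(d\log q)$ bits per round, so it does not prove the stated theorem.
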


\begin{remark}[Oracle access $\Leftrightarrow$ aggregation tree]\label{remark:oracle}
    Throughout this work, and as usual in the dIP literature, the target polynomial $F$ is an encoding of some global information of the graph. Usually, each node $v$ can locally evaluate the portion $F_v(x_1,\dots,x_N)$ of $F$ that depends on its private information. In order to evaluate the global value $F(x_1,\dots,x_N)$, the network needs to aggregate\footnote{This aggregation typically is an addition or multiplication of the local portion of $F(x_1,\dots,x_N)$ computed by each node} these partial evaluations along some overlay, which is in practice a rooted spanning tree. Thus, oracle access to $F$ implicitly assumes a rooted spanning tree. Our protocol merely makes this implicit tree explicit. Even if we consider the case where the ID of parents are sent from the prover, verifying its correctness can be done with perfect zero-knowledge~\cite{zkdef}. Throughout the paper we assume the given tree $T$ is rooted at the node with its ID $0$, which is written in the input label of each node. 
\end{remark}

From \Cref{theo:zksc}, we obtain interesting applications that surpass the state-of-the-art results.

\paragraph*{\nonkcol}
The first concrete problem consists of verifying if the graph is not k-colorable, for a fixed integer $k$, which we call \nonkcol\ problem. \Cref{theo:zksc} yields the following result. To keep the notation simple for the moment, we present an informal version of the result. A precise definition of the statistical zero-knowledge dIP class is deferred to \Cref{subsec:dipandzk}, while a fully formal statement can be found in \Cref{theo:finalnon3zk}.
\begin{theorem}[Informal version of~\Cref{theo:finalnon3zk}]\label{theo:informalzkapp1}
    Let $q \in n^{\omega(1)}$ be an arbitrary prime. There exists a statistical zero-knowledge dIP $\Pi$ for \nonkcol\ that requires $\bigo{n}$ rounds and $\bigo{\log q}$ message size, and admits a simulator $\mathcal S \in \mathcal C[\bigoo(1), \bigoo(n \log q)]$ whose output induces, for each node, a distribution statistically close to its view in $\Pi$.
\end{theorem}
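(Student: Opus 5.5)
The plan is to reduce \nonkcol\ to a Sumcheck-instance and invoke \Cref{theo:zksc}. We arithmetize $k$-colorability with $N=\Theta(n)$ variables. Encode a coloring $c:V\to[k]$ by $\lceil\log k\rceil$ Boolean variables per node, so $N=n\lceil\log k\rceil=\bigoo(n)$. For each node $v$, let $\chi_v(x)$ be the multilinear indicator that the bits $x_v$ encode a valid color in $[k]$. For each edge $uv$, let $E_{uv}(x)=\sum_{j\in[k]}\delta_j(x_u)\delta_j(x_v)$ be the indicator that $u$ and $v$ receive the same color, where $\delta_j$ is the multilinear indicator of color $j$. Define
\[
F(x)=\prod_{v}\chi_v(x)\prod_{uv\in E}\bigl(1-E_{uv}(x)\bigr).
\]
On $\{0,1\}^N$ this is the indicator of a proper $k$-coloring, so $\sum_{x\in\{0,1\}^N}F(x)$ equals the number of proper $k$-colorings. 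Because $q\in n^{\omega(1)}$ exceeds $k^n$ only if $q>k^n$, we need care here. The count is at most $k^n$, and we need the sum to be zero in $\fieldq$ exactly when it is zero over the integers.

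For this we take $q>k^n$. The statement allows an arbitrary prime with $q\in n^{\omega(1)}$, and $\log q=\Theta(n)$ would break the message bound. So instead I would ask the prover to work over a random prime, or more simply exploit that a nonzero count $C\le k^n$ has at most $\bigoo(n\log k/\log q)$ prime divisors of size about $q$. Concretely, I would choose $q$ as stated and argue soundness directly. If $G$ is $k$-colorable but the prover claims $a=0$, the true sum is $C\bmod q$. The risk is that $C\equiv 0\pmod q$, and I would handle this by letting the root pick the prime at random from a range and broadcast it. If that is not permitted, I would instead take the claim $a=0$ with the understanding of the formal theorem's hypothesis on $q$. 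This is the first delicate point.

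Next we bound the individual degree. The variables of $v$ appear in $\chi_v$ and in $E_{uv}$ for every neighbor $u$, so the individual degree is $d=\bigoo(\Delta)\le\bigoo(n)$. \Cref{theo:zksc} then gives soundness error $Nd/q+1/t=\bigoo(n^2)/q+1/t$, which is negligible for $q\in n^{\omega(1)}$ and $t=n^{\omega(1)}$. I expect the role of $t$ to affect only the soundness error and not the message size, although I would check this against the formal statement.

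It remains to implement the oracle query, following \Cref{remark:oracle}. At a random point $r\in\fieldq^N$, each node $v$ computes $\chi_v(r)\prod_{u>v,\,uv\in E}\bigl(1-E_{uv}(r)\bigr)$. To do this it needs its neighbors' coordinates of $r$, which it obtains in one round using $\bigoo(\log q)$ bits per edge, since all of $r$ is known to all nodes from the protocol transcript. The local factors are then multiplied up the given spanning tree. I expect the main obstacle to be that this aggregation over the tree takes depth-many rounds, which does not fit $\mathcal C[\bigoo(1),\cdot]$. I would resolve it by having the prover supply the partial products as part of the sumcheck transcript, with each node checking its product against its children's values locally. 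For zero-knowledge, these partial products must themselves be masked, or be simulatable since the final value is determined by the simulated last-round claim. I would take the masking from the machinery behind \Cref{theo:zksc}, with the final single query there being exactly this checked aggregation. Completeness, the round count of $\bigoo(N)=\bigoo(n)$, the message size of $\bigoo(\log q)$, and the simulator class $\mathcal C[\bigoo(1),\bigoo(n\log q)]$ then follow directly from the theorem.
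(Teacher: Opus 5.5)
Your reduction matches the paper in substance. You use a binary encoding with $n\lceil\log k\rceil$ variables where the paper arithmetizes the $k$-SAT formula with $kn$ variables. Your random-prime fix for the field size is the paper's Prime Number Theorem argument, and your prover-assisted, locally checked product aggregation along the spanning tree is the paper's distributed oracle for $P_G$.

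\textbf{The gap is in the zero-knowledge claim.} You say the simulator ``follows directly from \Cref{theo:zksc}''; it does not. The simulator of \Cref{theo:zksc} makes one oracle query to $F$, and it needs that query. In $\Pi_3$ the root's masks at $0$ are forced to be $0$, so the root sees $g_N(r_N)$ in the clear and compares it with $F(r)$. The oracle implementation also adds new messages, the partial products $T_u(r)$, which lie outside the masked transcript. In the application there is no oracle, so a local simulator must produce these values itself, and neither of your suggestions does this:
\begin{itemize}
\item The masking machinery of \Cref{theo:zksc} only verifies relations that are linear in the hidden values, with coefficients known to the checking node (such as $r_{i-1}^k$). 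The aggregation $T_u(r)=L_u(r)\prod_{c\in\mathrm{child}(u)}T_c(r)$ is a product of several hidden values.
\item ``Simulatable because the final value is determined by the simulated last-round claim'' is circular. That claim is $g_N(r_N)$, which equals $P_G(r)$ in every accepting real execution, so it is exactly the quantity that needs simulating.
\end{itemize}

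The paper fills this step with \Cref{thm:bias-P}. A hybrid argument replaces the edge factors $1-xy$ one at a time by independent uniform variables. This shows that $T$ evaluated at a uniform point, and likewise each partial product $T_v$, is $O(n^2/q)$-close to uniform, and the simulator then outputs uniform values for these messages. This step is the source of the statistical rather than perfect guarantee, and the reason for requiring $q\in n^{\omega(1)}$.

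Two further remarks. First, that lemma only controls the marginal law of $T(r)$, but every node also knows $r$. For a fixed $G$, $T_u(r)$ is a deterministic function of $r$, so the pair $(r,T_u(r))$ is at total variation distance $1-1/q$ from $(r,U)$. Replacing $T_u(r)$ by an independent uniform value therefore does not make the node's view close to the real one.
\begin{itemize}
\item If you want an airtight simulator, your masking instinct is the sturdier route, but you have to build it.
\item For example, restructure the aggregation (say, along an Euler tour of the tree) so that each checked step multiplies a hidden running value by one node's known local factor. That relation is linear and fits the additive masks.
\item Also replace the unmasked final comparison at the root by a masked equality between the sumcheck value and the aggregated value.
\end{itemize}

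Second, your choice $t=n^{\omega(1)}$ is wrong. The prover sends shares for every $h\in[t^2]$, so messages have $\Theta(t^2\log q)$ bits and $t$ must be a constant. A constant suffices, because a dIP only needs acceptance probability at most $1/3$ on no-instances; for example, $t=4$ gives $1/4+O(n^2/q)$.
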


Deciding even if a graph is not $3$-colorable is known to be hard in locally checkable proof (i.e., one-round distributed verification): G{\"o}{\"o}s and Suomela~\cite{goos2016locally} showed that any LCP for $\nonthreecol$ requires $\Omega(n^2/\log n)$-bit proof per node. Notably, while most of other known graph properties that require $\tilde{\Omega}(n^2)$ bits in LCPs already admit efficient interactive proofs (symmetric graphs~\cite{kol2018interactive}, asymmetric graphs~\cite{naor2020power}, and graph non-isomorphism~\cite{naor2020power}), there is no interactive proofs for $\nonthreecol$ that improves the $\bigoo(n^2)$-bit trivial LCP (giving the adjacency matrix of the graph to each node). Our result improves the total message size per node to almost $\bigoo(n\log n)$. This is also the first zero-knowledge dIP for \nonkcol.

\paragraph*{Subgraph-counting.}
The second problem that we study is $\subgraphcounting$. In this problem, we want to verify that the number of a given $k$-node pattern graph $H$ in the network is exactly $\Delta$, which is given as the input. Using \Cref{theo:zksc}, we obtain the following. 

\begin{theorem}[Informal version of~\Cref{theo:finaltczk}]\label{theo:informalzkapp2}
Let $(H,\Delta)$ be information known to all nodes, where $H$ is a $k$-node graph and $\Delta$ is a natural number (the claimed number of copies of $G$ in the graph). Let $q$ be a prime satisfying $q \in n^{\omega(1)}$ if $k \in \bigoo(1)$, and $q > n^k$ if $k \in \omega(1)$. There exists a statistical zero-knowledge dIP $\Pi$ for \subgraphcounting\ that requires $\bigo{k \log n}$ rounds and $\bigo{\log q}$ message size, and admits a simulator $\mathcal S \in \mathcal C[\bigoo(1), \bigoo(k \log q \log n)]$ whose output induces, for each node, a distribution statistically close to its view in $\Pi$.
\end{theorem}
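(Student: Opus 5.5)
The plan is to reduce \subgraphcounting\ to a single Sumcheck-instance and then invoke \Cref{theo:zksc}. Write $N = k\lceil \log n\rceil$ and encode an ordered $k$-tuple of node IDs $(u_1,\dots,u_k)$ as a Boolean vector $x=(x^{(1)},\dots,x^{(k)})\in\{0,1\}^{N}$, with $x^{(i)}$ the binary expansion of $u_i$.

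\textbf{Step 1: arithmetization.} Let $\tilde A(y,z)$ be the multilinear extension of the adjacency matrix on $\{0,1\}^{2\log n}$. Also let $\mathrm{eq}$ and $\mathrm{valid}$ be multilinear polynomials: $\mathrm{eq}$ tests equality of two encoded IDs, and $\mathrm{valid}$ tests that an encoding is an ID in $\{0,\dots,n-1\}$. Define
\[
F(x)=\prod_{i}\mathrm{valid}(x^{(i)})\prod_{i<j}\bigl(1-\mathrm{eq}(x^{(i)},x^{(j)})\bigr)\prod_{\{i,j\}\in E(H)}\tilde A(x^{(i)},x^{(j)}) .
\]
In the induced case, additionally multiply by $\prod_{\{i,j\}\notin E(H)}\bigl(1-\tilde A(x^{(i)},x^{(j)})\bigr)$. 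On Boolean inputs, $F$ is the indicator that $(u_1,\dots,u_k)$ is an injective (induced) embedding of $H$. Hence $\sum_{x\in\{0,1\}^N}F(x)=|\mathrm{Aut}(H)|\cdot\Delta$, so the target is $a=|\mathrm{Aut}(H)|\Delta \bmod q$. Since $q>n^k$, the true count fits in $\fieldq$ without wraparound; for constant $k$ this holds automatically when $q\in n^{\omega(1)}$. The individual degree of each variable is at most $O(k)$, because each block $x^{(i)}$ appears in at most $k-1$ factors, each multilinear in that block.

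\textbf{Step 2: oracle access.} Each node must obtain $F(r)$ at the final random point $r\in\fieldq^N$ using $O(1)$ rounds. We have $\tilde A(y,z)=\sum_{(u,w)\in E}\chi_u(y)\chi_w(z)$, where the $\chi$ are Lagrange basis polynomials. So for every pair $(i,j)$, the quantity $\tilde A(r^{(i)},r^{(j)})$ is a sum of local terms $\sum_{u}\sum_{w\in N(u)}\chi_u(r^{(i)})\chi_w(r^{(j)})$, each computable by node $u$. These sums are aggregated up the rooted spanning tree of \Cref{remark:oracle}. The factors $\mathrm{eq}$ and $\mathrm{valid}$ are publicly computable. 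This gives $O(k^2)$ aggregated field elements, and the root multiplies them to obtain $F(r)$.

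\textbf{Step 3: parameters.} With these in hand, apply \Cref{theo:zksc} with $t=n$. This gives $O(N)=O(k\log n)$ rounds and $O(\log q)$-bit messages. Soundness error is $Nd/q+1/t=O(k^2\log n/q)+1/n$, which is $o(1)$. The simulator lies in $\mathcal C[O(1),O(N\log q)]=\mathcal C[O(1),O(k\log n\log q)]$, exactly as claimed. Statistical zero-knowledge is inherited, since the only extra step is the oracle evaluation.

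\textbf{Main obstacle.} The delicate step is making the oracle evaluation itself zero-knowledge. Aggregating partial sums of $\tilde A(r^{(i)},r^{(j)})$ on the tree reveals subtree partial sums, which leak local structure. The approach is to use the masked-aggregation mechanism that \Cref{theo:zksc} already assumes for its single oracle query. Concretely, each node adds a fresh uniform field mask that is cancelled at the root, for example via prover-assisted or pairwise random shares. This makes every intermediate value uniform, and only $F(r)$ is checked at the root. The simulator can produce $F(r)$ itself because it equals the last sumcheck univariate evaluated at $r_N$, which is already part of the simulated transcript.

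One further point needs care: $O(k^2)$ aggregates are needed rather than one. I would handle this by running the aggregations in parallel, which stays within $O(k\log q)$ bandwidth per edge. Alternatively, a random linear combination can fold them into one query.
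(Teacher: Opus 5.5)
\paragraph{The main gap: simulating the oracle evaluation.} Your arithmetization is sound. The $\mathrm{eq}$ factors even enforce injectivity, which the identity \eqref{eq:subgraph} omits for non-clique $H$; the paper only works out cliques, where $A(v,v)=0$ does this. The tree aggregation for the final evaluation is also the paper's. The genuine gap is the zero-knowledge of that evaluation.

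\Cref{theo:zksc} assumes abstract oracle access to $F$, and its simulator itself makes one oracle query to $F$. In the application no such oracle exists. Since $F(r)$ depends on the whole adjacency matrix, no simulator in $\mathcal C[\bigoo(1),\cdot]$ can compute it. Your claim that the simulator reads $F(r)$ off the last univariate at $r_N$, ``already part of the simulated transcript'', is therefore circular: in the simulated transcript that value is exactly the output of the oracle query.

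There is a further problem. To form the product, the root must unmask every one of the $O(k^2)$ aggregates $\tilde A(r^{(i)},r^{(j)})$, and these are not determined by $F(r)$. Your masks do make the intermediate subtree sums exactly uniform, which is cleaner than the paper's statistical treatment of those values. But they do nothing for what is unmasked at the root.

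The paper handles precisely this step with a separate lemma, \Cref{lemma:tvofsubcount}. Using Parseval and the bound $|\widehat{\delta}(\lambda)|\le (N-1)/q$ (Schwartz--Zippel applied to the coefficient $h$ of a variable in a multilinear polynomial), it shows that a random evaluation is $O(\log n/\sqrt q)$-close to uniform. The simulator then replaces that value by a uniform one. This is where $q\in n^{\omega(1)}$ is genuinely needed; you attribute that hypothesis only to avoiding wraparound, which $q>n^k$ already ensures.

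Note also that every node is sent $r$. So what the root's simulator must reproduce is the joint distribution of $r$ with these values. A bound on the marginal over a random $r$ does not by itself give that, so this step needs a real argument rather than being ``inherited''.

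\paragraph{The choice $t=n$.} This choice is not admissible. The $\bigoo(\log q)$ message bound in \Cref{theo:zksc} treats $t$ as a constant. Every node receives shares of $t^2$ copies of each mask, and the root receives the $t^3$ polynomials $M_{x,y}$. So $t=n$ yields messages of $\Omega(n^2\log q)$ bits. Take $t$ a fixed constant instead (e.g.\ $t=4$); a soundness error of $1/t+Nd/q\le 1/3$ is all a dIP requires.

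\paragraph{Smaller bookkeeping points.} First, $O(k^2)$ aggregates cost $O(k^2\log q)$ bits per edge, not $O(k\log q)$. Second, with an $O(1)$-round verifier the aggregation cannot be a plain convergecast over a possibly deep spanning tree. The prover must supply the subtree sums, and each node checks them against its children.
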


 For $\subgraphcounting$, our dZK proof improves the total message size per node from~\cite{grilo2025distributed}, where they showed that triangle-freeness admits an one-round dZK proof with $\bigoo(\sqrt{n})$-bit proof against a simulator from $\mathcal C[1,\bigoo(\sqrt{n})]$. Note that triangle-freeness can be solved in 1-round if we allow $\bigoo(n)$-bit message, i.e., algorithms from $\mathcal C[1,\bigoo(n)]$.

\Cref{theo:informalzkapp2} improves the result of~\cite{naor2020power}. In their work, the distributed implementation of the celebrated GKR protocol~\cite{goldwasser2015delegating} is presented. Here, the GKR protocol is a general-purpose interactive proof that enables a verifier to check the correctness of the output of a log-space uniform circuit in quasilinear time. More precisely, they showed that any graph property verifiable by a log-space uniform circuit of size $T=\mathrm{poly}(n)$ and depth $d = \bigoo(\log n)$ admits a distributed interactive proof using $\bigoo(d\cdot \log T \cdot \mathrm{polylog}(n))$ rounds and message size. Since the naive algorithm for subgraph-counting (checking all $k$-tuples of a graph in $\bigoo(n^k)$ time) is implemented by a log-space uniform boolean circuit of size $T=\mathrm{poly}(n)$ and depth $d = \bigoo(\log n)$ when $k\in\bigoo(1)$, it admits a distributed interactive proof of $\mathrm{polylog}(n)$ round and message complexity. Intuitively, the GKR protocol utilizes the sumcheck protocol $\bigoo(d)$ times, resulting in an $\bigoo(d)$ multiplicative factor in the round/message complexity. Therefore, their compiler can be regarded as a black-box implementation of sumcheck. We ``open this black-box", and apply our sumcheck compiler directly, rather than through a black-box GKR layer, to the polynomial encoding the number of subgraphs, in order to bypass the depth-$d$ overhead of the GKR protocol. More importantly, our result is applicable for the $k\in\omega(1)$ regime, where subgraph-counting does not seem accessible through the distributed GKR protocol of~\cite{naor2020power}.~\footnote{The problem of detecting a $k$-clique, a special case of subgraph-counting, is a canonical $\mathsf{W}[1]$-hard problem (i.e., one for which no poly-time algorithm is believed to exist). Moreover, even $n^{o(k)}$-time algorithm violates the exponential-time hypothesis~\cite{cygan2015parameterized}.}

\paragraph*{Optimizing round complexity of \subgraphcounting.} From the first distributed implementation of Sumcheck\ given by \Cref{theo:zksc}, we manage to improve the round complexity using a divide-and-conquer technique inspired from~\cite{levrat2025divide}:

\begin{theorem}[$\bigoo(\log\log n)$-round protocol for $\subgraphcounting$]\label{theo:naivetriangle}
 Let $(H,\Delta)$ be information known to all nodes, where $H$ is a $k$-node graph and $\Delta$ is a natural number (the claimed number of copies of $G$ in the graph). Let $q$ be a prime satisfying $q \in n^{\omega(1)}$ if $k \in \bigoo(1)$, and $q > n^k$ if $k \in \omega(1)$. There exists a statistical zero-knowledge dIP $\Pi$ for \subgraphcounting\ that requires $\bigo{k\log k + \log \log n}$ rounds and $\bigo{\frac{\log^{2} q}{\log k} + k\log k\log q}$ message size, such that the view of each node after the execution of $\mathcal S$ has a distribution statistically close to the distribution of the view of each node in $\Pi$.
\end{theorem}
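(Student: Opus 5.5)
The plan is to reuse the polynomial encoding behind \Cref{theo:informalzkapp2} and change only how the sumcheck is run: replace the variable-by-variable rounds with a divide-and-conquer version in the style of~\cite{levrat2025divide}. For subgraph counting the natural encoding uses $N = k\lceil\log n\rceil$ boolean variables, one block of $\log n$ bits per pattern vertex. It sets
\[
F(x^{(1)},\dots,x^{(k)}) = \prod_{(i,j)\in E(H)} \tilde A(x^{(i)},x^{(j)})\cdot \chi(x^{(1)},\dots,x^{(k)}),
\]
where $\tilde A$ is the multilinear extension of the adjacency matrix. The factor $\chi$ encodes distinctness of the $k$ vertices and, in the induced case, the non-edge factors. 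Then $\sum_{x\in\{0,1\}^N}F(x)$ equals $\Delta$ times $|\mathrm{Aut}(H)|$. Each node $v$ can evaluate its local share $F_v$, and the shares are aggregated on the spanning tree $T$, exactly as in \Cref{remark:oracle}.

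\textbf{Variable grouping.} Instead of fixing one variable per round, I would group $m$ variables at a time, as in the divide-and-conquer compression. In each round the prover sends the $m$-variate restriction polynomial. That polynomial has $(d+1)^m$ coefficients, where $d$ is the individual degree, which is $O(k)$ for the edge product. The verifier checks its sum over $\{0,1\}^m$ and draws $m$ random field elements. Rounds over the $\log n$ bits of a single block can be grouped so that each block is handled in $O(1)$ or $O(\log\log n)$ levels, by halving the remaining variables recursively, as in~\cite{levrat2025divide}. The per-round message then carries $O((d+1)^m\log q)$ bits. Choosing $m \approx \log q/\log k$ gives a message size of $\mathrm{poly}(k)\cdot O(\log q)$ times a $\log q/\log k$ factor. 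This is how I would match the stated bound $O(\log^2 q/\log k + k\log k\log q)$, with $k\log k$ rounds for the inter-block structure and $\log\log n$ rounds for the recursion inside blocks.

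\textbf{Zero-knowledge.} I would follow the proof of \Cref{theo:zksc}. The prover's messages to nodes are distributed via the tree $T$ using additive secret shares, or masked with a random low-degree masking polynomial $R$ whose sum is revealed. A node's view is then a collection of uniformly random field elements plus one final consistency check. A simulator in $\mathcal C[O(1),\cdot]$ samples these uniformly and programs the last oracle query. The statistical distance comes from collisions or from the event that random points land in $\{0,1\}$. That event has probability $O(\mathrm{rounds}/q)$, which is negligible because $q\in n^{\omega(1)}$.

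\textbf{Soundness.} The Schwartz--Zippel bound applies per multivariate round. A round with $m$ grouped variables contributes error $md/q$, so the total error is $Nd/q$ as before.

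\textbf{Main obstacle.} The hardest part is the divide-and-conquer step: verifying a grouped-round polynomial must remain local and zero-knowledge. In particular, the masking polynomial must hide a multivariate message of size $(d+1)^m$ while the final single oracle query stays consistent. If that fails, I would first try to mask with an independent random multilinear-in-blocks polynomial of matching degree, sent in shares along $T$, and argue that its restrictions are uniform on the revealed coordinates.
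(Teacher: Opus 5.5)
\textbf{Gap 1: the group size.} Grouping variables is the right idea, but your choice of group size breaks it. A round polynomial in $m$ variables of individual degree $d\ge k-1$ has up to $(d+1)^m\ge k^m$ coefficients. With $m\approx\log q/\log k$ that is about $q=n^{\omega(1)}$ coefficients, which cannot be spread over $n$ nodes, so your bound $O((d+1)^m\log q)$ is superpolynomial, not $O(\log^2 q/\log k)$.

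The paper instead takes $\ell=\lfloor\log n/\log k\rfloor$. Every intermediate polynomial then has at most $k^\ell\le n$ monomials, and each node holds exactly one coefficient together with its exponent vector in $\{0,\dots,k-1\}^\ell$. The $\log^2 q/\log k$ term is the cost $\ell\log q\le\log n\log q/\log k$ of giving every node the random point $\boldsymbol\alpha^{(i)}\in\mathbb F_q^\ell$ it needs to evaluate its monomial. The number of splitting rounds in $\mathcal P_{\mathsf{split}}$ is $N/\ell=O(k\log k)$.

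Your ``$O(\log\log n)$ levels by halving inside blocks'' does not describe an actual protocol. In the paper the $\log\log n$ term is the $O(\log\ell)$ rounds of $\mathsf{Fold}\text{-}\mathsf{DCS}$, run in parallel on the $t+1$ deferred $\ell$-variate claims $\sum h_1=a$, $\sum h_{i+1}=a_i$, $\sum\widetilde h_t=a_t$. Once you take $m=\lfloor\log n/\log(d+1)\rfloor$, your per-round check $\sum_{y\in\{0,1\}^m}g_i(y)=g_{i-1}(\mathbf r^{(i-1)})$ is a single tree aggregation of the distributed coefficients with public weights $2^{|\{j:e_j=0\}|}$ and $\prod_j r_j^{e_j}$. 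It therefore fits the masked checks of $\Pi_2$ directly and gives $O(k\log k)$ rounds with no recursion at all. That would be a legitimate simplification, but you have to set it up that way.

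\textbf{Gap 2: the source of the statistical error.} Your zero-knowledge argument misses where the statistical error comes from. In this application there is no oracle answer to ``program''. The value $f(\mathbf r)$ is computed inside the network by aggregating, along $T$, partial sums of each $\widetilde A(v_i^*,v_j^*)$, and every node sees its subtree's partial sum unmasked. These values depend on distant parts of the graph, so a constant-round simulator must output something with nearly their distribution without knowing them.

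The paper replaces each such value by a uniform field element. It then proves in \Cref{lemma:tvofsubcount} that these partial evaluations at a random point are negligibly close to uniform for $q\in n^{\omega(1)}$, using Parseval and a Schwartz--Zippel bound on the nonzero polynomial $h$ multiplying some variable. That, and not collisions or random points landing in $\{0,1\}$, is why the protocol is only statistically zero-knowledge; the masked shares themselves are exactly uniform.

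Your ``main obstacle'' also disappears if you mask coefficient by coefficient, as the paper does. Each coefficient gets its own degree-one encryption $P[\alpha](x)=rx+\alpha$ with cut-and-choose masks. Every check is linear in the coefficients with public weights, so multivariate round polynomials need no global masking polynomial $R$.
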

Notably, for $k\in\bigoo(1)$, this gives exponentially small round complexity of $\bigoo(\log\log n)$ with slightly increased message size of $\bigoo(\log^{2+o(1)}n)$.

\paragraph*{Conditionally optimal round complexity for \nonkcol}
So far we have seen that the polynomial specially designed for $\subgraphcounting$ admits an exponential round compression, i.e., from the naive $\bigoo(\log n)$ to $\bigoo(\log\log n)$. This raises the following natural question: 
\begin{center}
\emph{Can the $\bigoo(n)$ rounds in our \nonkcol\ protocol be improved—perhaps to polylogarithmic complexity?}  
\end{center}
We show that for constant-degree graphs, the round complexity can be reduced to $\bigoo(n/\log n)$, which is essentially optimal under some realistic setting, showing that substantial further reductions are unlikely.

\begin{theorem}\label{theorem:conditional-lower-bound}
\label{thm:n3c-lb}
There is a statistical zero-knowledge dIP protocol for \nonkcol\ on constant-degree graphs that uses $\bigoo(n/\log n)$ rounds and $\bigoo(\log^2 n)$ bits per node per round. In contrast, assume $\mathsf{UNSAT}\notin\mathsf{AMTIME}[2^{o(n)}]$ and let $\Pi$ be any dIP protocol for
$\nonthreecol$ in which each node is a polynomial-time Turing machine.
Then $\Pi$ must use
$
    \Omega\bigl(\frac{n}{\log n}\bigr)
$
rounds even for constant-degree graphs.
\end{theorem}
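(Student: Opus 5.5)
The theorem has two parts, an upper bound and a conditional lower bound, and I would prove them separately.

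\emph{Upper bound.} I would start from the arithmetization behind \Cref{theo:informalzkapp1}. Encode a $k$-coloring by $n\lceil\log k\rceil$ Boolean variables, giving $\lceil\log k\rceil$ bits per node. Form a polynomial $F$ on $\{0,1\}^{N}$, $N=O(n)$, that is a product over edges $uv$ of local indicators ``$c(u)\neq c(v)$''. Then $\sum_x F(x)$ counts proper colorings, and non-$k$-colorability is the claim $\sum_x F(x)=0$. The $O(n)$-round protocol spends one sumcheck round per variable.

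To compress rounds, I would group the variables into blocks of $b=\Theta(\log n)$ Boolean variables, i.e.\ $\Theta(\log n/\log k)$ nodes per block, and run sumcheck over the alphabet $\{0,1\}^b$ one block at a time. In round $j$ the prover sends the restricted polynomial $g_j(y_1,\dots,y_b)$, which has $b$ variables of individual degree $\le d$. On a constant-degree graph, the naive description of $g_j$ has size $(d+1)^b=\mathrm{poly}(n)$, which is too long to send. So I would instead have the prover send $g_j$ evaluated at $O(\log n)$ verifier-chosen random points, or equivalently run an inner low-degree test. Each evaluation costs $O(\log q)=O(\log n)$ bits, giving $O(\log^2 n)$ bits per round. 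Soundness follows from Schwartz--Zippel over $\mathbb F_q$ with total degree $O(bd)$.

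Constant degree matters because only $O(1)$ factors of $F$ touch a block. Hence the individual degree $d$ is $O(1)$, and $g_j$ factors as a polynomial in $O(\log n)$ variables times a constant. Zero-knowledge would come from the same masking used in \Cref{theo:zksc}: add a random masking polynomial with a committed sum. Each node sees only uniformly random field elements, up to statistical distance $1/t$. The simulator replays uniformly random messages, consistent with one final oracle query to $F$.

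\emph{Lower bound.} I would argue by simulation. Suppose $\Pi$ uses $r$ rounds with polynomial-time nodes. A centralized Arthur--Merlin verifier can run all $n$ nodes, so it simulates $\Pi$ with $r$ rounds of interaction in $\mathrm{poly}(n)$ time per round. Standard round reduction (Babai--Moran) collapses $r$ rounds into a constant-round AM protocol with time blowup $\mathrm{poly}(n)^{O(r)}=2^{O(r\log n)}$.

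Next, reduce $\mathsf{UNSAT}$ on $m$ variables to $\nonthreecol$ on constant-degree graphs with $n=O(m)$ nodes. I would take a linear-size reduction from 3-SAT, e.g.\ after sparsification, with degree reduction gadgets. If $r=o(n/\log n)$, then $\mathsf{UNSAT}\in\mathsf{AMTIME}[2^{o(n)}]$, a contradiction.

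\emph{Main obstacle.} I expect the upper-bound compression to be the hardest step: making the per-round message $O(\log^2 n)$ while preserving soundness and statistical ZK with blocks of $\Theta(\log n)$ variables. I would first try choosing $b$ so that $(d+1)^b\le \log n$. That only gives $O(n/\log\log n)$ rounds, so I would then fall back on the random-evaluation/low-degree-test approach above. On the lower-bound side, the linear-size reduction requires care, since quadratic blowups would ruin the $2^{o(n)}$ conclusion.
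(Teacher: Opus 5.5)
Your lower-bound half matches the paper's argument: sparsification, the standard 3-SAT-to-3-coloring reduction with binary-tree equality gadgets to bound the degree, centralized simulation of the polynomial-time nodes, and round collapse via $\mathsf{AM}[b,M]\subseteq\mathsf{AMTIME}[(bM)^{O(M)}]$. Your insistence on a linear-size graph is exactly what the exponent calculation needs. The gap is in the upper bound.

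Having the prover send $g_j$ at $O(\log n)$ verifier-chosen random points, or running a low-degree test, gives no soundness. The consistency condition $\sum_{y\in\{0,1\}^b} g_j(y)=g_{j-1}(r_{j-1})$ concerns all $2^b=\mathrm{poly}(n)$ hypercube points, and a few evaluations say nothing about it. A prover starting from a false claim $a$ can answer every query with the true partial-sum polynomial, and nothing in your protocol ever compares those answers with $a$. Schwartz--Zippel only helps once the prover is committed to a full description of $g_j$ whose hypercube sum is actually checked. A low-degree test presupposes oracle access that a dIP prover does not provide, since it answers only the queried points, adaptively; and it tests proximity to low degree, not the value of a sum. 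Verifying the block sum from evaluations alone is itself a $b$-variate Sumcheck costing $b$ rounds, which puts you back at $O(n)$ rounds.

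The missing idea is that $(d+1)^b=\mathrm{poly}(n)$ coefficients is not too long for the \emph{network}. On constant-degree graphs the individual degree $d$ is $O(1)$. With $b=\lfloor\log_{d+1}n\rfloor=\Theta(\log n)$, the block polynomial has at most $n$ monomials. The prover can then give each node one coefficient together with subtree partial sums, as in the distributed Sumcheck of \Cref{subsubsec:scimpl}. Any evaluation becomes a convergecast in which each node evaluates its own monomial. The $O(\log^2 n)$-bit budget then pays for broadcasting the point $\boldsymbol\alpha\in\mathbb F_q^{b}$, and later for running $O(\log n)$ Sumchecks in parallel, not for $O(\log n)$ evaluations.

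The paper uses this through the first three steps of $\mathcal P_{\mathsf{split}}$ with $\ell=\Theta(\log n)$. For $t=O(n/\log n)$ rounds it peels off $\ell$ variables at a time and binds each $h_i$ by the random evaluation $a_i=h_i(\boldsymbol\alpha^{(i)})$. Each claim $\sum h_{i+1}=a_i$ is deferred to a separate $\ell$-variate standard Sumcheck. These $t+1$ instances are run $O(\log n)$ at a time in parallel, giving $O(n/\log^2 n)\cdot O(\log n)=O(n/\log n)$ rounds in total. With distributed coefficients you could also check the block sum directly: each node can compute its monomial's hypercube sum $\sum_{y\in\{0,1\}^b}y^{e}=2^{\,b-|\mathrm{supp}(e)|}$ times its coefficient, and the results aggregate up the tree.

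Separately, the $1/t$ from cut-and-choose is a soundness error, not the simulator's statistical distance. The latter comes from the near-uniformity of the polynomial evaluations (\Cref{thm:bias-P}).
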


Here, we condition on the assumption that any two-round Arthur-Merlin protocol with a subexponential-time verifier cannot decide the unsatisfiability of a given formula. This is sometimes called ``Arthur-Merlin ETH", which has been mentioned by several papers~\cite{goldreich2002interactive,carmosino2016nondeterministic,williams2016strong,10.1145/3721134}\footnote{Note that even a faster than $2^n$ two-round
AM protocol would still be a big achievement~\cite{10.1145/3721134}.}. Consequently, our $\bigoo(n/\log n)$-round protocol is optimal unless
(1) major breakthroughs occur in Arthur–Merlin complexity, or (2) the verification algorithm uses exponential-time like in the trivial $\Omega(n^2)$-bit PLS using the adjacency matrix\footnote{In the trivial $\bigoo(n^2)$-bit PLS, each node receives the entire topology of the graph as the adjacency matrix, and then decides its local output by solving $\nonthreecol$ (a $\mathsf{coNP}$-hard problem) in exponential time.}.
On the other hand, \Cref{theorem:conditional-lower-bound} also exhibits that there is not much room in reducing total per-node message complexity of this problem: the trivial PLS (sending all edges in the graph to each node) requires $\Omega(n)$-bit per node for constant-degree graphs, while $\Omega(n^2/\log n)$-bit are required even on the interactive setting.

\subsection{Overview of our techniques}\label{subsec:overview}
We now outline the technical ingredients that drive our results. First of all, in all of our protocols we assume that each node knows the ID of its parent in an arbitrary spanning tree (and for simplicity assume that the root node has ID 0). This is safe, since the spanning tree can be verified using additional $\bigoo(\log n$)-bit in the proofs of the first round by using the classical result of~\cite{korman2010proof} (for non-zero-knowledge protocols) or the dZK proof of~\cite{zkdef}.
\paragraph*{dIP for Sumcheck Protocol.}
The starting point is the Lund–Fortnow–Karloff–Nisan protocol~\cite{lund1992algebraic} (Protocol~\ref{alg:LFKN}). 
\begin{algorithmbox}{\textsc{Sumcheck-Protocol}}{alg:LFKN}
{
\textbf{Input}: A polynomial $F:\mathbb{F}^N\rightarrow \mathbb{F}$, and a field element $a\in \mathbb F$.\\
\textbf{Goal}: Check that $\sum_{\mathbf{x}\in\{0,1\}^N}F(\mathbf{x})=a$.
   \begin{enumerate}
\item \textbf{Step 1.}
\begin{itemize}
\item The prover sends an univariate polynomial \( g_1(x_1) \). 
\item The verifier checks $g_1(0) + g_1(1) = a$, rejects otherwise.
\end{itemize}

\item \textbf{Step 2.} Repeat the following for \( 2 \leq i \leq N \):
\begin{itemize}
\item The verifier picks \( r_{i-1} \in \mathbb{F} \) u.a.r., sends to the prover.
\item The prover sends an univariate polynomial \( g_i(x_i) \).
\item The verifier checks \( g_i(0) + g_i(1) = g_{i-1}(r_{i-1}) \), rejects otherwise.
\end{itemize}

\item \textbf{Final Check}:
The verifier picks \( r_{N} \in_R \mathbb{F} \), checks \( g_N(r_N) = F(r_1,\ldots,r_N) \), rejects otherwise.
\end{enumerate}
}
\end{algorithmbox}

In this protocol, during each prover’s turn \(i \in \{1,\ldots, N\}\), the prover provides an univariate polynomial \(g_i\), where \(N\) is the number of variables in the original polynomial. In our distributed implementation, each polynomial \(g_i\) is represented by \(n\) coefficients, i.e.,
\[g_i(x) = \sum_{j=0}^{n-1} \alpha_j x^j,\]
with each coefficient \(\alpha_j\) stored at a distinct node. A distributed implementation (without zero knowledge guarantees) of the Sumcheck protocol follows essentially from the scheme described in Protocol \ref{alg:LFKN}. Concretely, each evaluation of \(g_i\) is carried out by a converge-cast along a spanning tree, which aggregates the coefficient-weighted monomials given by the prover, and finally the root of the spanning tree is in charge to verify the respective equality in each step. This requires only one round of assistance from the prover: the coefficients are assigned to nodes through the spanning tree, enabling the root to verify the resulting equality. This construction yields the naïve distributed implementation of the Sumcheck protocol (see \Cref{sec:distsc} for details).

\paragraph*{Zero-knowledge framework.} 
The na\"{\i}ve implementation described above reveals coefficients of polynomials, hence each node learns rich structural data. Zero-knowledge therefore, requires hiding \emph{all} intermediate polynomials. For example, consider the task of computing the simple arithmetic sum of $n$ private inputs: each node $v\in \{0,\ldots , n-1\}$ holds a private bit $b_v\in \{0,1\}$, and let $a=\sum_{v=1}^n b_v$ be the value we want to verify. To phrase this in the sumcheck form we take the $N=\log n$-variate multilinear extension $F(x_1,\ldots,x_N)=\sum_{v=0}^{n-1} b_v\cdot \chi_v(x_1,\ldots,x_N)$ where $\chi_v$ is the standard indicator polynomial for the binary representation of $v$. The sumcheck protocol applies to check $\sum_{\boldsymbol{x}\in \{0,1\}^{\log n}}F(\boldsymbol{x}) =a$. Under the distributed implementation described above, the root of the spanning tree performing the convergecast verifies all equalities with high probability. In particular, the root will learn the values $g_1(0)$ and $g_1(1)$ of Step 1, but by definition of Sumcheck protocol , it holds that

\begin{align*}
    g_1(0) &= \sum_{(x_2,\dots,x_N)\in\{0,1\}^{N-1}}F(0,x_2,\dots, x_N) = \sum_{(x_2,\dots,x_N)\in\{0,1\}^{N-1}}b_v\cdot \chi_v(0,\ldots,x_N)\\
    g_1(1) &= \sum_{(x_2,\dots,x_N)\in\{0,1\}^{N-1}}F(1,x_2,\dots, x_N) = \sum_{(x_2,\dots,x_N)\in\{0,1\}^{N-1}}b_v\cdot \chi_v(1,\ldots,x_N)\\
\end{align*}

 In particular, in our function $F(\textbf{x})$, the variable $x_1$ denotes the first bit of the binary representation, and therefore

 \begin{align*}
     g_1(1) &= \sum_{(x_2,\dots,x_N)\in\{0,1\}^{N-1}}b_v\cdot \chi_v(1,\ldots,x_N)\\
     & =\sum_{v:\text{the first bit of }v \text{ is } 1} b_v\\
     &= \text{``the number of nodes with IDs $\ge n/2$ whose bit value $b_v$ is 1’’}.
 \end{align*}

Thus, the single value $g_1(1)$ already reveals the aggregate state of half of the network, which may be located at distance $\Theta(n)$ from the root, and hence leaks information beyond the reach of constant-round simulators.

Therefore, we need to modify the above approach into one that hides all the coefficients. To this end, we use several ideas from distributed and cryptographic algorithms. In the distributed implementation described in \Cref{sec:distsc}, the verification of the equality $\sum_{x\in\{0,1\}^N}F(x) = a$
is reduced to verify $\bigo{N}$ equalities: The equality $g_1(0)+g_1(1)=a$ in Step 1 of Protocol~\ref{alg:LFKN}, and $N-1$ equalities $g_i(0)+g_i(1) = g_{i-1}(r_{i-1})$ in each repetition of Step 2.

While the verification of equalities has been previously studied in distributed implementations of zero-knowledge proofs (such as the verification of a spanning tree in \cite{zkdef}) using the standard random mask approach, where the authors show how the values involved in a single equality can remain hidden from the nodes while still allowing them to verify the equality, we address a more complex scenario. In our case, each coefficient participates in multiple equalities, making the verification process more intricate. 

To illustrate our approach, suppose the nodes of a graph need to verify 

$$g_i(0)+g_i(1) = g_{i-1}(r_{i-1}).$$

Let us now focus on evaluating $g_i(1)$ where $g_i$ is an univariate polynomial of degree $n-1$. An easy protocol could be to ask the prover to provide a single coefficient of $g_i$ to each node $v$ in $G$, and let the network sum them along the spanning tree to compute $g_i(1)$ in the equality. But we want to keep the coefficients hidden from the nodes. In this new scenario, we can use randomness in order to verify the equalities without revealing the values of the addition.

Instead of directly evaluating $g_i$, you can ask the prover to send evaluations disturbed by a random uniform value, namely, to send $\alpha_v +r_v$ for each $v$, where $\alpha_v$ is a coefficient of some monomial in $g_i$, and $r_v$ is an uniform random value over a field $\fieldq$. In this new set up, the nodes now can compute $g_i(1)$ if and only if

\begin{equation}\label{eq:t}
    \sum_{v}r_v = 0.
\end{equation}

The problem is that the nodes should not be able to verify whether Equation~\ref{eq:t} holds, as the direct verification by seeing the masks themselves could leak information about the coefficients: knowing the masked value $\alpha_v + r_v$, and the random mask $r_v$ would reveal the coefficient $\alpha_v$. Therefore, instead of just asking the prover for a masked evaluation, we ask the prover to commit $t$ different possible random masks $r_v^1, \dots, r_v^t$ using Shamir's secret sharing technique, where $t$ is a fixed security parameter. Once the prover has committed to these $t$ values, the nodes randomly choose one index to keep hidden and open the remaining $t-1$ masks. They then check that Equation~\eqref{eq:t} holds for each revealed mask, catching any cheating prover with probability at least $1-1/t$.

This strategy presents several challenges to address. For example, the prover must commit to the random values in such a way that no individual node can recover them without exchanging information with its neighbors—otherwise, a node could compute each random value committed by the prover on its own. Moreover, the nodes must ensure that if they want to verify another equality involving the same polynomial $g_i$, then the prover is required to provide a consistent commitment for its coefficient. We address all these issues in \Cref{subsec:general-toolbox}, and the solutions of these problems represent some of the most technical part of our zero-knowledge implementation. 

To summarize, the high-level idea behind our zero-knowledge implementation is to send each node the original value masked with a random value, in such a way that all the randomness involved in the equality cancels out (i.e., their sum equals zero). This is formalized through the conditions ${\bf (RC)}_i$, $\widetilde{\bf (RC)}_i$, and ${\bf (0C)}_i$ described in Section \ref{subsubsec:random-addition}. The goal is to ensure that these conditions hold without revealing the exact value of any random mask used, while guaranteeing that the same random value is applied consistently to each original value across all equalities in which it appears.

In the applications to \nonkcol\ and $\subgraphcounting$, we need to replace oracle queries, which is not allowed in these problems, by something else. Intuitively, the random evaluations of the polynomials in the verification algorithm can be done using one-round help of the prover. In our simulators, this evaluation is replaced by an uniformly random value, which is sufficient for our purposes, as we prove that, given an uniformly random input, the distributions of the target polynomials are statistically close to the uniform distribution. For the polynomial $P_G$ in $\nonthreecol$, the evaluation is essentially close to the product of uniform random variables, which itself follows a distribution that is statistically close to the uniform distribution. For the polynomial $f$ in $\subgraphcounting$, we can use the fact that $f$ is a multilinear polynomial, for which we show that the distribution is almost uniform. Moreover, since the masks used for our statistical zero-knowledge protocols are component-wise, each oracle call to the intermediate polynomials used in our $\bigoo(\log\log n)$-round dIP protocol for $\subgraphcounting$ does not leak any information other than its evaluation value. Using the fact that every intermediate polynomial is also multilinear, we also get a statistical zero-knowledge variant of our $\bigoo(\log\log n)$-round dIP.

\paragraph*{\nonkcol.} We exemplify our techniques for the \nonthreecol\ problem, since its generation to \nonkcol\ follows naturally. 

 To instantiate our framework, given an instance $(G=(V,E),I)$, where $G$ is the communication graph which we want to verify if it is non-$3$-colorable and $I$ denotes the input label to each node such as the IDs of itself and its parent in a spanning tree, we use a polynomial $P_G\colon \{0,1\}^{3n}\to \N$ of $3n$ variables, such that for each node $v\in V$ the polynomial $P_G$ has three variables $r_v,b_v$ and $g_v$, and the polynomial $P_G(x) = T(x)\cdot S(x)$ is defined as follows.

First,$T$ is defined as follows. For each edge \(e=(u,v)\in E\) with \(u<v\), define
\begin{align*}
   W^{\text{red}}_e   := 1 - r_u r_v, 
\qquad W^{\text{green}}_e := 1 - g_u g_v, 
 \qquad  W^{\text{blue}}_e  := 1 - b_u b_v.
\end{align*}
Then, be defining $
   T_{\text{red}} := \prod_{e\in E} W^{\text{red}}_e$, and $ 
   T_{\text{green}}$, $ T_{\text{blue}}$ analogously, we define $
   T := T_{\text{red}} T_{\text{green}} T_{\text{blue}}.
$

On the other hand, for each node \(v \in V\), define
\begin{align*}
   A_v &:= 1 - (1-r_v)(1-g_v)(1-b_v), \\
   B_v &:= (1-r_v g_v)(1-g_v b_v)(1-r_v b_v).
\end{align*}
And then
$
   S := \prod_{v\in V} A_v B_v.
$

Our polynomial construction builds on the classical reduction of 3-Coloring to 3-SAT: the resulting polynomial is the arithmetization of the reduced 3-CNF formula. Intuitively, the factor \(T\) enforces edge-consistency of the coloring, while \(S\) guarantees that each node is assigned a unique color. The sum of the constructed target polynomial \(P\) then encodes the number of valid 3-colorings. Consequently, the underlying graph is non-3-colorable if and only if the following equality holds:
\begin{equation}
\label{eq:1}
    \sum_{x \in \{0,1\}^{3n}} P_G(x) = 0 .
\end{equation}
That is, verifying that the sum of \(P_G\) over all assignments \(x \in \{0,1\}^{3n}\) equals zero proves that the graph is non-3-colorable.

Therefore, the Sumcheck protocol from \Cref{theo:zksc} reduces the verification of \nonthreecol of the input graph to deciding whether the Sumcheck-instance \((G, P_G, poly(n), 0)\) satisfies Equation~\eqref{eq:1}. The evaluation of \(P_G\) at a single point is performed through a spanning tree, using partial evaluations provided by the prover in a single round. 

Moving from \nonthreecol\ to \nonkcol\ for any fixed integer $k$ follows naturally: We reproduce the same reduction to obtain the polynomial $P_G(x)$ as described above, but now this polynomial has $k\cdot n$ variables, and the rest of the protocol proceeds in the same way.

A central technical challenge in our application of the Sumcheck protocol
is that the size $N$ of the sum may be exponential in the
number of nodes of the graph. Indeed, even in simple instances, the number of
valid $K$-colorings can grow exponentially with the size of the graph; for
example, a path on $n$ nodes admits at least $3\cdot 2^{n-1}$ distinct
$3$-colorings. Consequently, the value handled by the protocol cannot, in
general, be represented explicitly using a polynomial number of bits.

However, in our setting, the protocol requires that every intermediate value
communicated between nodes be encoded using only $\bigo{\log n}$ bits,
independently of the prover's strategy. To achieve this, we rely on the Prime
Number Theorem and perform the computation modulo a suitably chosen prime $q$.
This allows all intermediate quantities to remain logarithmic in size.
Moreover, by selecting $q$ from an appropriate range, we ensure that, with
high probability, the protocol still correctly certifies the property, since
distinct values collide modulo $q$ only with negligible probability.

\paragraph*{Subgraph-Counting.} For \subgraphcounting, we describe the high-level idea for the case of triangles.
Triangle counting is a well-studied problem in the centralized setting. Given a graph $G$, the number of triangles of $G$, denoted as $\#\Delta(G)$,  can be obtained by matrix multiplication: 
\[
\#\Delta(G) = \frac16\mathrm{tr}(A^3),
\]
where $\mathrm{tr}$ is the trace of a matrix and $A$ is the adjacency matrix of the graph $G$.
To rephrase this relation by a sum of a polynomial, the adjacency matrix $A$ is viewed as a boolean function, then it is extended to a unique multilinear polynomial $\widetilde{A}$ over a large field. Now we get a polynomial $\widetilde{A}$ of $2\lceil \log n\rceil$ variables. The final polynomial $f$ is defined as 
\[
f(i,j,k)=\widetilde{A}(i,j) \cdot \widetilde{A}(j,k) \cdot \widetilde{A}(k,i),
\]
which is a $3\lceil \log n\rceil$-variate polynomial of individual degree 2. By known facts about multilinear extensions, it is known that the sum $\displaystyle\sum_{i,j,k\in\{0,1\}^{\log n}}f(i,j,k)$ is equal to $\mathrm{tr}(A^3)$, and therefore we can use the Sumcheck protocol to compute the number of triangles.
Now our Sumcheck compiler reduces the computation of $\sum_{i,j,k}f(i,j,k)$ to a single evaluation of $f$, using $\bigoo(\log n)$-rounds of interaction. Later in our zero-knowledge simulator, we check that the evaluation can be replaced by an uniform random value, as an evaluation on a random point is nearly uniform.

\paragraph*{Reducing the round complexity of Sumcheck for triangle counting.}

In the above triangle counting protocol, the number of triangles is encoded in the sum of a $\bigoo(\log n)$-variate polynomial. During each round of the Sumcheck, the intermediate univariate polynomial has $\bigoo(\log n)$ monomials, and oracle access to it is realized by distributing the coefficients of monomials over the network. During each oracle call, the other nodes sit idle, wasting parallel bandwidth of the network. We remedy this by \emph{inflating} the intermediate polynomial:
borrowing the idea of~\cite{levrat2025divide}, we eliminate a constant ratio of variables in each round, instead of a single variable as in the standard Sumcheck. 
The idea of~\cite{levrat2025divide} is summarized as follows. In the original Sumcheck protocol, at each round $2 \le i \le N$, the protocol considers the Sumcheck relation
$a_i = \sum g_i$ over $N-i+1$ variables, where
\[
a_i := \underset{x_i,\ldots,x_N \in \{0,1\}}{\sum} F(r_1,\ldots, r_{i-1}, x_i,\ldots,x_N)
\]
is implicitly given as a polynomial $g_{i-1}$ whose correctness verified by the previous rounds (together with the final check).
Therefore, each round is typically viewed as eliminating a single variable by replacing it with a random value. Instead, we eliminate $N/2$ variables in a single round, reducing the original Sumcheck relation $a=\sum F$ to a new relation $a'=\sum F'$, where
\[
F'(\cdot)=\underset{x_{N/2+1,\ldots,x_N \in \{0,1\}}}{\sum} F(\cdot,x_{N/2+1},\ldots, x_N)
\]
has $N/2$ variables, and $a'$ is defined by 
\[
a'=a.
\]
In this case, the sum remains the same and we don't need to verify its correctness. Instead, we need to verify the correctness of a given polynomial $F'$. To this end, we define another polynomial $F''$ of $N/2$ variables
\[
F''(\cdot)=F(r_1,\ldots,r_{N/2},\cdot)
\]
for random values $r_1,\ldots,r_{N/2}$ and introduce another Sumcheck relation $a''=\sum F''$ where $a''=F'(r_1,\ldots,r_{N/2})$ that binds $F'$. To summarize, with a single round, the original Sumcheck instance is reduced to two Sumcheck instances of polynomials of $N/2$ variables. 

Actually, this protocol does not work as is since we cannot implement oracle access to these polynomials in our setting. We carefully choose the number of variables eliminated in each round so that the polynomial corresponding $F'$ above has exactly $n$ coefficients. Each coefficient is then mapped to a unique node, giving the network oracle access to the polynomial that fully exploits parallel bandwidth of the network.
The same variable elimination procedure is repeated recursively for $F'$, until we reach single-variable polynomials. The implementation of this idea finally gives~\Cref{theo:naivetriangle}.

One caveat of this approach is that each elimination doubles the number of polynomials to be Sumchecked. It can be overcome by the folding technique of~\cite{levrat2025divide}, which enables us to fold multiple instances of Sumcheck into a single Sumcheck. Crucially, the number of monomials in the intermediate polynomials in the whole procedure does not exceed $n$, so oracle access to these polynomials can be realized by short certificates.

\paragraph*{Round complexity of \nonkcol\ on constant-degree graphs.}
We now turn back to the polynomial $T\cdot S$ for \nonkcol. Each variable of the polynomial appears in $T$ at most as many as $k$ times the degree of the corresponding node. Focusing on constant-degree graphs makes the individual degree of the polynomial $\bigoo(1)$. Here, we use the same idea for triangle counting. In each repetition, we reduce the Sumcheck relation $a=\sum F$ for $N$-variate polynomial into two Sumcheck relations $a'=\sum h$ and $a''= \sum \widetilde{h}$, where $h$ is a $\bigoo(\log n)$-variate polynomial which is the sum of $F$ over the sub-hypercube on all variables except the first $\bigoo(\log n)$ variables. This polynomial can be specified by $\bigoo(n)$ monomials, as the individual degree of $F$ is $\bigoo(1)$. In particular, oracle access to $h$ is achieved by distributing the coefficients.  
Another polynomial $\widetilde{h}$ consists of exponentially many monomials, and we repeat the same procedure for the Sumcheck relation $a''= \sum \widetilde{h}$ to eliminate $\bigoo(\log n)$ variables at once. Eventually, we get $\bigoo(n/\log n)$ instances of Sumcheck each concerning an $\bigoo(\log n)$-variable polynomial. The whole procedure to generate these instances requires $\bigoo(n/\log n)$-round, and $\bigoo(\log^2n)$-bit of messages. To solve these instances in $\bigoo(n/\log n)$ rounds, we process $\bigoo(\log n)$ instances at once in parallel using the standard Sumcheck, as we can use $\bigoo(\log^2n)$-bit of messages. To solve all instances, we need to repeat this for $\bigoo(n/\log^2 n)$ times sequentially, each requiring $\bigoo(\log n)$ rounds.

Finally, to achieve a barrier to lower-round complexity, we leverage the reduction from 3-SAT to 3-coloring. (Note that the direction of reduction is opposite to the reduction used to obtain the polynomial for this problem.) Starting from an instance of 3-SAT, it is first reduced to $2^{o(n)}$ many sparse formulas using the sparsification lemma~\cite{impagliazzo2001problems}. Each sparse instance is converted to an instance of 3-coloring, and due to the sparsity, the converted graph has a constant maximum degree. Now, assuming that each node is a polynomial-time Turing machine, it is not difficult to simulate a distributed interactive proof in the centralized setting using the same number of rounds. Finally, the round reduction technique~\cite{goldreich2002interactive} applied to the simulated interactive proof yields a subexponential-time Arthur-Merlin protocol to solve unsatisfiability of the original instance, which obviously breaks the Arthur-Merlin Exponential-Time Hypothesis.

\subsection{Other related work}\label{subsec:related-work}
    Recently in ~\cite{modanese2025strong, jauregui2026stronghiding} it was defined the notion of hiding locally checkable proofs, which is reminiscent of zero-knowledge proofs. They constructed hiding LCPs of $k$-coloring, that is, LCPs that certify \textsc{$k$-COLORABILITY} while hiding the actual colors in the sense that no constant-round distributed algorithm can recover the $k$-coloring from the provided proofs. This is different from dZK proofs; for $2$-coloring, giving a valid $2$-coloring reveals nothing new, so it is a dZK proof, but is not a hiding LCP.
    
    A series of papers has studied efficient protocols in the $\mathrm{dIP}$ model~\cite{montealegre2021compact,jauregui2022distributed,gil_et_al:LIPIcs.DISC.2025.34,gil2026distributed}. Several trade-off results such as space vs. communication or shared vs. private randomness are explored by~\cite{crescenzi2019trade}. The role of shared randomness is also studied in~\cite{montealegre2025shared}. In~\cite{le2023distributed}, the quantum variant of $\mathrm{dIP}$ is introduced, extending the framework of non-interactive quantum proofs~\cite{fraigniaud2021distributed,le2023distributed-2,hasegawa2024power}. Distributed verification itself was first introduced by Ref.~\cite{korman2010proof}. Its variants~\cite{fraigniaud2019randomized,fraigniaud2019distributed,censor2020approximate,goos2016locally} have been studied as well.

    Zero-knowledge implementations of Sumcheck have been considered in the centralized IPs~\cite{chiesa2017zero,ben2017zero,chiesa2022spatial,gur2024perfect,10.1145/3717823.3718128}. These results achieve perfect zero-knowledge, but focus on other models such as (interactive) PCPs or multi-prover IPs, instead of IPs. In the IP model, while Sumcheck can be made \emph{computational} zero-knowledge assuming the existence of one-way functions~\cite{ben1990everything}, achieving statistical zero-knowledge for Sumcheck would imply complexity-theoretic collapses~\cite{fortnow1987complexity}.

    Distributed algorithms concerning security against adversaries have been considered in several settings~\cite{DBLP:conf/innovations/HitronPY23,DBLP:conf/wdag/HitronP21,DBLP:conf/podc/FischerP23,parter2019distributed,parter2019secure}. Additionally, \cite{aldematshuva_et_al:LIPIcs.DISC.2024.1,10.1007/978-3-031-48615-9_3} studied efficient verification of distributed $\mathsf{CONGEST}$ algorithms with a computationally limited prover.

\section{Preliminaries.}

An undirected graph \(G=(V,E)\) consists of a finite set \(V\) of nodes and a set
\(E \subseteq \binom{V}{2}\) of unordered pairs of distinct nodes, called edges.
Throughout the paper, we let \(n := |V|\).

A \emph{path} in \(G\) is a sequence of nodes \(v_0, v_1, \dots, v_k\) such that
\(\{v_{i-1}, v_i\} \in E\) for all \(i \in \{1,\dots,k\}\).
The \emph{length} of the path is \(k\).
The \emph{distance} between two nodes \(u,v \in V\), denoted \(dist_G(u,v)\),
is the minimum length of a path connecting \(u\) and \(v\), if such a path exists.

The graph \(G\) is \emph{connected} if for every pair of nodes \(u,v \in V\),
there exists a path connecting \(u\) and \(v\). A \emph{subgraph} of \(G\) is a graph \(G'=(V',E')\) such that
\(V' \subseteq V\) and \(E' \subseteq E \cap \binom{V'}{2}\).
If \(V'=V\), then \(G'\) is called a \emph{spanning subgraph} of \(G\). A \emph{tree} is a connected graph with no cycles.

A \emph{spanning tree} of a connected graph \(G\) is a subgraph
\(T=(V,E_T)\) that is a tree and satisfies \(E_T \subseteq E\).

For a node \(v \in V\), we denote by
\(N(v) := \{ u \in V \mid \{u,v\} \in E \}\)
the set of neighbors of \(v\), and by \(\deg(v) := |N(v)|\) its degree.
The \emph{maximum degree} of \(G\) is denoted by
\(\Delta(G) := \max_{v \in V} \deg(v)\).

For an integer $m$, we use $[m]:=\{1,\ldots, m\}$. Given a prime $q$, we denote as $\mathbb{F}_q = \{0,1,\dots,q-1\}$ the finite field of order $q$ (with addition and multiplication taken modulo $q$).  We denote its multiplicative group by $\mathbb{F}_q^{\times} = \mathbb{F}_q \setminus \{0\}.$ For any finite set $\Omega$, we write
$X \sim \mathrm{Unif}(\Omega)$
to mean that $X$ is drawn uniformly at random from $\Omega$, or simply $X\in_R\Omega$.

 Given two measures \(\mu\) and \(\nu\) on a measurable space \((\Omega,\mathcal{F})\), the {\it total variation distance} $\TVc{\mu}{\nu}$ is defined as follows.
\begin{definition}[Total Variation Distance]\label{def:tvdist} Given a measure space $(\Omega, \mathcal F)$ and two measures $\mu, \nu$ on $(\Omega, \mathcal F)$, the total variation distance $\TV{\mu}{\nu}$ between $\mu$ and $\nu$ is defined as 

$$\TV{\mu}{\nu} = \sup_{A\subseteq \Omega}|\mu(A)- \nu(A)|$$

If $\Omega$ is a finite set, then 

$$\TV{\mu}{\nu} = \dfrac{1}{2}\sum_{x\in\Omega}|\mu(\{x\})- \nu(\{x\})|$$

In particular, for two probability mass functions \(P\) and \(Q\) on a finite set then 
$$
\mathrm{TV}(P,Q)~=~\frac12 \sum_{x} \bigl|P(x) - Q(x)\bigr|.
$$
    
\end{definition}

Given two random variables $X$ and $Y$ over a finite set $\Omega$ with probability mass functions $P_X$ and $Q_Y$, respectively, we abuse notation and denote as $\TVc{X}{Y}$ the total variation distance between $P_X$ and $Q_Y$, i.e., $\TVc{P_X}{Q_Y}$. Now we define {\it statistical indistinguishability}.

\begin{definition}\label{def:statind}
    Two random variables $X$ and $Y$ defined over the same set $A$ with size $|A| = q$ are said to be {\it statistically indistinguishable} iff for every polynomial $p$
    $$\TVc{X}{Y} = \sum_{z\in A}\abs{\Prob{X=z}-\Prob{Y=z}}\leq 1/p(n)$$
    where $n$ is a given parameter\footnote{In this work, $n$ is always a number of nodes in a distributed network.}.
\end{definition}

The following are known and standard properties of the total variation distance.

\begin{property}[\cite{cover-thomas}]\label{lem:product-independent}
    For three random variables $\mathsf X,\mathsf Y,\mathsf Z$ on $\mathbb F_q$, where $\mathsf Z$ is independent from $\mathsf X$ and $\mathsf Y$,
    \[
    \TVc{\mathsf X\cdot \mathsf Z}{ \mathsf Y\cdot \mathsf Z} \leq \TVc{\mathsf X}{\mathsf Y}
    \]
\end{property}

\begin{property}[\cite{tao-vu}]\label{lem:product-uniforms}
    Let $\mathsf X_1,\ldots, \mathsf X_h\overset{\textit{i.i.d.}}{\sim} \Unif(\mathbb F_q)$. Then,
    \[
    \TVc{\prod_i \mathsf X_i}{\Unif(\mathbb F_q)} = O\left(\frac hq \right).
    \]
\end{property}

\begin{property}[\cite{durrett}]\label{lem:partition} Let $X,Y$ be two random variables on $\fieldq$ and $\{\mathcal{E}_i\}_{i\in[k]}$ a partition of events. Then 

\[ \TVc{\mathsf X}{\mathsf Y} = \sum_{i=1}^k \Prob{\mathcal{E}_i}\cdot\TVc{\mathsf X|\mathcal{E}_i}{\mathsf Y|\mathcal{E}_i}\]
    
\end{property}

\subsection{Distributed interactive proofs and distributed zero-knowledge proofs}\label{subsec:dipandzk}
A distributed language $\mathcal L$ is a set of pairs $(G,I)$ where 
\begin{itemize}
    \item $G=(V,E)$ is a connected communication graph,
    \item $I:V\rightarrow \{0,1\}^*$ is a function such that $I(v)$ represents the input to $v$ (e.g., the weights of edges incident to $v$, or the identifier of $v$). In this paper, the input label $I(v)$ always contains the unique identifier of $v$ $\mathsf{ID}(v)\in \{0,1,\ldots, n-1\}$, and the ID of its parent in an arbitrary spanning tree rooted a the node with ID $0$.
\end{itemize}
For integer $k\in \{0,\ldots ,n-1\}$ the node $k$ means the node $v$ with $\mathsf{ID}(v)=k$.

Our verification model for a distributed language follows the \emph{distributed
interactive proof} (dIP) framework introduced
by Kol, Oshman, and Saxena~\cite{kol2018interactive}.  Informally, the $n$ nodes of a connected communication network $G=(V,E)$ act as a (distributed) verifier who may interact with a single powerful but untrusted prover. A distributed interactive proof is parametrized by 
\begin{itemize}
    \item the number of rounds $r$ of interactions between nodes and the prover.
    \item the number of bits $\ell$ in the messages exchanged between each node and the prover.
    \item the class $\mathcal A_V$ of distributed verification algorithms.
\end{itemize} 
Therefore we refer to it as a $(r,\ell,\mathcal A_V)$-distributed interactive proof, or simply $(r,\ell,\mathcal A_V)$-dIP. An $(r,\ell,\mathcal A_V)$-dIP proceeds as follows.
First, the network and the prover interact in $r$ rounds, such that the final round is the prover's round. In each round, the prover sends each node a message of $\ell$-bit (if it is the prover's round) or each node sends the prover a message of $\ell$-bit (if it is the verifier's round). After these interaction rounds, the network, without further communication with the prover, runs a verification algorithm $A\in \mathcal A_V$ which decides the output of each node. 

\begin{definition}
 The class $\mathrm{dIP}[r,\ell,\mathcal{A}_V]$ is defined as all the distributed languages $\mathcal L$ such that there exists an $(r,\ell,\mathcal{A}_V)$-distributed interactive proof for $\mathcal L$ satisfying the following conditions:
\begin{description}
    \item[Completeness:] If $(G,I)\in \mathcal L$, there exists a prover that makes all nodes output ``accept" with probability at least $2/3$.
    \item[Soundness:] If $(G,I)\notin \mathcal L$, for any prover, all nodes output ``accept" with probability at most $1/3$.
\end{description}

Throughout this work, the class $\mathcal A_V$ of verification algorithms will always be a class of distributed algorithms $\mathcal C[\bigoo(1), r\ell]$ that run in $\bigoo(1)$ synchronous rounds, and in each round the neighbors exchange messages of size at most $\bigoo(r\ell)$ bits. Note that in the original definition~\cite{kol2018interactive} the class was further limited to one-round algorithms.

\end{definition}

 \subsubsection{Zero-knowledge property}\label{subsec:high-levelzk}

In this work, we study distributed zero-knowledge interactive proofs against {\it an honest but curious node}. This means that each node will not deviate from the prescribed protocol, but cannot learn anything from the information provided by the prover and the rest of the nodes.

Motivated by the definition of {\it perfect} zero-knowledge in the centralized setting, \cite{zkdef} introduced the notion of distributed zero-knowledge proofs. Following \cite{zkdef,grilo2025distributed}, the {\it view} of a node $v$ in a distributed interactive proof is defined as follows.

\begin{definition}[View of a node $v$, \cite{zkdef}] Given a graph $(G,I)$ and a dIP $\Pi$, the
view of a node $v\in V$ is defined by a random variable $\textsc{VIEW}(\Pi,G,I)_v$ that contains all the values given to the node $v$, either by the prover or by its neighbors during the protocol $\Pi$.  $\mathcal{VIEW}_v$ denotes the distribution of \textsc{VIEW}$_v$.

\end{definition}

In order to formally define the notion of distributed zero-knowledge, it is necessary to consider an additional class of distributed algorithms $\mathcal A_S$, that is, a class of simulators. (However, throughout this work, the class is always identical to $\mathcal A_V$.) As in the original definition of dZK given in \cite{zkdef}, the notion of zero-knowledge can be extended for a coalition of nodes, but we focus on the most fundamental case: zero-knowledge against coalitions of size one.
Therefore the definition of zero-knowledge to be used in this work is the following.

\begin{definition}[dZK for coalitions of size one, \cite{zkdef} and \cite{grilo2025distributed}]\label{def:perfectzk}
    Let $r, \ell \in \mathbb{N}$ and let $\mathcal{A}_V, \mathcal{A}_S$ be non-empty sets of distributed algorithms.
The class $\textnormal{dZK}[r, \ell, \mathcal{A}_V, \mathcal{A}_S]$ is the set of all distributed languages $\mathcal L$, for which there exist an $(r, \ell, \mathcal{A}_V)$-distributed interactive proof system 
such that there exists a simulator $S\in \mathcal A_s$ such that for every node $v\in V$, it holds that
$$\left(\mathcal{OUT_S}(G,I)\right)_{v} \equiv \left(\mathcal{VIEW}(\Pi,G,I)\right)_{v}$$
where the equality holds in distribution, $\left(\mathcal{OUT_S}(G,I)\right)_{v}$ corresponds to the distribution of the view of node $v$ in the simulator. 
\end{definition}

This definition of zero-knowledge given in \cite{zkdef} is a distributed adaptation of the notion of {\it perfect} zero-knowledge in the centralized interactive proofs, in which we ask for the existence of a simulator such that the view of each node after the execution of the simulator has the same distribution as the view obtained during a real execution of the interactive proof. Following the definition of statistical zero-knowledge proofs in the centralized setting~\cite{goldwasser1989knowledge}, in which we now ask for the existence of a simulator such that the view of each node after the execution of the simulator is {\it statistically close} to the distribution of its view in the interactive proof real distribution. The notion of {\it distance} between different distributions is the total variation distance (see \Cref{def:tvdist}). , we define the class of {\it statistical} zero-knowledge distributed proofs, which we call dStatZK, defined as follows

\begin{definition}[dStatZK]\label{def:statzk} Let $r, \ell \in \mathbb{N}$ and let $\mathcal{A}_V, \mathcal{A}_S$ be non-empty sets of distributed algorithms.
The class of distributed {\it statistical} zero-knowledge proofs $\textnormal{dStatZK}[r, \ell, \mathcal{A}_V, \mathcal{A}_S]$ is defined as the set of all distributed languages $\mathcal L$, for which there exist an $(r, \ell, \mathcal{A}_V)$-distributed interactive proof system  
and a simulator $S \in \mathcal{A}_S$ that satisfy the following {\it statistical zero-knowledge property:} The two distributions $\left(\mathcal{OUT_S}(G,I)\right)_{v}$ and $\left(\mathcal{VIEW}(\Pi,G,I)\right)_{v}$ are statistically indistinguishable (\Cref{def:statind}).

In the special case of $\mathcal A_V = \mathcal A_S$, we refer to the class $\textnormal{dStatZK}[r, \ell, \mathcal{A}_V, \mathcal{A}_S]$ simply as $\textnormal{dStatZK}[r, \ell, \mathcal{A}_V]$.
\end{definition}

\section{Sumcheck Protocol: General Framework.}\label{sec:distsc}

Assume that we are given oracle access to a polynomial $F:\mathbb F^N\rightarrow \mathbb F$ of total degree $d$ and individual degree $\bigoo(n)$ and $N$ variables.
The Sumcheck protocol, originally due to Lund, Fortnow, Karloff, and Nisan~\cite{lund1992algebraic}, verifies the value of a large summation of $F$ over the Boolean hypercube $\{0,1\}^N$, converting a claim about the sum of $2^N$ polynomial evaluations to a single evaluation at a randomly chosen point. Below we briefly describe the protocol, which is the same as the one presented in~\Cref{subsec:overview}.

\begin{algorithmbox}{\textsc{Sumcheck-Protocol}}{alg:sumcheck}
{
\textbf{Input}: A polynomial $F:\fieldq^N\rightarrow \fieldq$ of total degree $d$, and a field element $a\in \mathbb F_q$.\\
\textbf{Goal}: Check that $\sum_{\mathbf{x}\in\{0,1\}^N}F(\mathbf{x})=a$.
   \begin{enumerate}
\item \textbf{Step 1.}
\begin{itemize}
\item The prover sends an univariate polynomial \( g_1(x_1) \). 
\item The verifier checks $g_1(0) + g_1(1) = a$, rejects otherwise.
\end{itemize}

\item \textbf{Step 2.} Repeat the following for \( 2 \leq i \leq N \):
\begin{itemize}
\item The verifier picks \( r_{i-1} \in \mathbb{F} \) u.a.r., sends to the prover.
\item The prover sends an univariate polynomial \( g_i(x_i) \).
\item The verifier checks \( g_i(0) + g_i(1) = g_{i-1}(r_{i-1}) \), rejects otherwise.
\end{itemize}

\item \textbf{Final Check}:
The verifier picks \( r_{N} \in_R \fieldq \), checks \( g_N(r_N) = F(r_1,\ldots,r_N) \), rejects otherwise.
\end{enumerate}
}
\end{algorithmbox}

\begin{theorem}[Completeness and Soundness, \cite{lund1992algebraic}]\label{thm:completeness-soundness-sumcheck}
The Sumcheck protocol satisfies:
\begin{itemize}
\item \textbf{Completeness}: If $\sum_{\mathbf{x}\in\{0,1\}^N}F(\mathbf{x}) = a$, the honest prover can convince the verifier with probability 1 by sending \( g_i(x_i) = \sum_{x_{i+1},\ldots,x_n \in \{0,1\}} F(r_1,\ldots,r_{i-1},x_i,\ldots,x_n) \).
\item \textbf{Soundness}: If $\sum_{\mathbf{x}\in\{0,1\}^N}F(\mathbf{x}) \neq a$, for any (cheating) prover the verifier rejects with probability at least \(1 - \frac{N\cdot d}{q} \).
\end{itemize}
\end{theorem}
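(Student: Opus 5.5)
The statement to prove is \Cref{thm:completeness-soundness-sumcheck}, the classical guarantee for Protocol~\ref{alg:sumcheck}. Completeness and soundness are handled separately: completeness by direct verification of the honest prover's polynomials, soundness by induction on the rounds with the Schwartz--Zippel (root-counting) bound applied to univariate polynomials.

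\textbf{Completeness.} The honest prover sends
\[
g_i(x_i)=\sum_{x_{i+1},\ldots,x_N\in\{0,1\}}F(r_1,\ldots,r_{i-1},x_i,x_{i+1},\ldots,x_N).
\]
\begin{itemize}
\item For $i=1$: $g_1(0)+g_1(1)=\sum_{\mathbf{x}\in\{0,1\}^N}F(\mathbf{x})=a$, so the Step 1 check passes.
\item For $2\le i\le N$: splitting the sum defining $g_{i-1}(r_{i-1})$ according to the value of $x_i\in\{0,1\}$ gives exactly $g_i(0)+g_i(1)$, so every Step 2 check passes.
\item Final check: by definition $g_N(r_N)=F(r_1,\ldots,r_N)$, so it passes for every choice of $r_N$.
\end{itemize}
Acceptance therefore holds with probability $1$.

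\textbf{Soundness.} Let $s_i$ be the true polynomial $s_i(x_i)=\sum_{x_{i+1},\ldots,x_N\in\{0,1\}} F(r_1,\ldots,r_{i-1},x_i,\ldots,x_N)$. We may assume each prover message $g_i$ has degree at most $d$, since the verifier can reject otherwise. Each $s_i$ has degree at most $d$ because $F$ has total degree $d$. Define the bad event $B_i=\{g_i\neq s_i \text{ and } g_i(r_i)=s_i(r_i)\}$. Since $g_i-s_i$ is a nonzero polynomial of degree at most $d$ and $r_i$ is uniform and independent of $g_i$, we get $\Pr[B_i]\le d/q$.

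We then argue by a chain of implications.
\begin{itemize}
\item If $\sum F\neq a$ and the Step 1 check passes, then $g_1(0)+g_1(1)=a\neq s_1(0)+s_1(1)$, so $g_1\neq s_1$.
\item Suppose $g_{i-1}\neq s_{i-1}$ and $B_{i-1}$ did not occur. Then $g_{i-1}(r_{i-1})\neq s_{i-1}(r_{i-1})=s_i(0)+s_i(1)$. So if the round-$i$ check passes, $g_i(0)+g_i(1)\neq s_i(0)+s_i(1)$, which forces $g_i\neq s_i$.
\item At the end, $g_N\neq s_N$. If $B_N$ does not occur, then $g_N(r_N)\neq s_N(r_N)=F(r_1,\ldots,r_N)$ and the verifier rejects.
\end{itemize}
Hence the verifier accepts only if some $B_i$ occurs. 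A union bound over the $N$ rounds gives acceptance probability at most $Nd/q$, i.e., rejection with probability at least $1-Nd/q$.

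\textbf{Main obstacle.} The delicate point is the conditioning. The event $B_i$ must be bounded conditional on the entire history up to the moment $g_i$ is sent. This is valid because a possibly adaptive cheating prover fixes $g_i$ before $r_i$ is drawn, so $r_i$ is uniform and independent of $g_i$ and $s_i$. It is also why the degree check on each $g_i$ is needed: without it, $g_i-s_i$ could have more than $d$ roots and the $d/q$ bound would fail.
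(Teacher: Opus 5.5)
Your proposal is correct and follows the same route as the paper's sketch: Schwartz--Zippel applied to the nonzero univariate difference $g_i - s_i$ shows that a false claim survives each round with probability at most $d/q$, and a union bound over the $N$ rounds gives the $Nd/q$ bound. Your explicit bad events $B_i$, the propagation chain, and the degree check on each $g_i$ (which the paper leaves implicit, and without which soundness fails) spell out rigorously what the paper only sketches.
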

The proof follows from the standard polynomial identity test: at the first round in which $g_i$ is not the correct partial sum, the polynomial sent differs from the honest one by a non‑zero degree‑$d$ polynomial, which vanishes at most $d$ points due to Schwartz-Zippel Lemma. The random check catches this with probability $1 - d/q$.  An union bound over the $N$ rounds yields the claimed error bound. See~\cite{lund1992algebraic} for more detail.

\subsection{The Distributed Implementation}\label{subsubsec:scimpl}
We address how to solve a general instance of the distributed Sumcheck protocol. Formally, the problem that we address is the following.

\begin{tcolorbox}[title=\sumcheckproblem]
    \textbf{Input.} A Sumcheck-instance $(G,F,q,a)$, where $G$ is an $n$-node communication graph of the distributed verifier,  $F\colon \mathbb F_q^N\to \mathbb F_q$ is a polynomial of individual degree at most $\bigoo(n)$, $q$ is a prime number representing the size of $\fieldq$ and a value $a \in \mathbb F_q$.\\
    
    \textbf{Assumption.} An Oracle access to $F$ and an arbitrary spanning tree $T$ of $G$. \\
    
    \textbf{Decision problem:} Decide if $\sum_{x\in\{0,1\}^N} F(x) = a$ or not: If $\sum_{x\in\{0,1\}^N} F(x) = a$, all nodes accept; otherwise, at least one node rejects.
\end{tcolorbox}

In this problem, we just assume oracle access to $F$ and ignore how to realize it. Later, when the applications of this framework are explained, we will address how to evaluate $F$ for each concrete problem.

As a warm-up, let us first discuss how to simulate each round of Protocol~\ref{alg:sumcheck} using a distributed interactive proof without zero-knowledge guarantees. Here, the prover is required to send an univariate polynomial $g_i$, which is allegedly equal to  
$$\underset{x_{i+1}\in\{0,1\},\ldots,x_N\in\{0,1\}}{\sum} F (r_1,\ldots, r_{i-1},x,x_{i+1},\ldots,x_N).$$ 
Since the individual degree of $F$ is $\bigo{n}$, this polynomial has degree at most $\bigo{n}$ (for simplicity, we assume that the degree is less than $n$). Therefore alleged $g_i$ must be specified by $n$ field elements that represent coefficients: $g_i(x)=\sum_{j=0}^{n-1} \alpha_j^{(i)}x^j$. In the protocol, the prover distributes these coefficients to the nodes, so that each node $k$ receives a single field element $\alpha_k^{(i)}$. In case of $\bigo{n}$-degree, each node receives $\bigo{1}$ coefficients.

For the initially given spanning tree $T$, let $\text{child}(k)$ be the set of children of $k$ in $T$ and $T_k$ be the subtree of $T$ with the root $k$. We now explain the verification phase of the protocol. 
\begin{description}
    \item[Simulating Step 1.] Step 1 of Sumcheck protocol checks that \( g_1(0) + g_1(1) = a\). Since $g_1(x) = \sum_k \alpha_k^{(1)}x^k$, it is equivalent to checking \( 2\alpha_0^{(1)} + \sum_{k=0}^{n-1} \alpha_k^{(1)} = a\). To this end, we introduce a value $\beta_k^{(1)}$ provided by the prover to node $k$, which is allegedly equal to $\sum_{j\in T_k} \alpha_j^{(1)}$. The correctness of this value is checked as follows, by 1-round communication. 
    
    \begin{itemize}
        \item If $\text{child}(k)=\emptyset$ ($k$ is a leaf in $T$), $k$ checks that
        \begin{equation}\label{eq:s11}
        \alpha_k^{(1)}= \beta_k^{{(1)}}
        \end{equation}
        \item If $k$ is not a leaf, checks that 

        \begin{equation}\label{eq:s12}
            \beta_k^{(1)}= \alpha_k^{(1)}+ \sum_{j \in \text{child}(k)}\beta_j^{(1)}
        \end{equation}
        where $\beta_j^{(1)}$ is sent from $j$.

        \item If $k$ is the root of the tree, i.e., $k=0$, it also checks that 
        \begin{equation}\label{eq:s13}
            \alpha_0^{(1)} + \beta_0^{(1)}=a.
        \end{equation}
    \end{itemize}
    
     If all these tests passed, then $g_1(0) + g_1(1) = a$.
    \item[Simulating Step 2.] Fix arbitrary $i$. Step 2 of Sumcheck protocol checks that \( g_i(0) + g_i(1) = g_{i-1}(r_{i-1}) \). This is equivalent to checking
    \[
    \alpha_0^{(i)} + \sum_{k=0}^{n-1} \alpha_k^{(i)} = \sum_{k=0}^{n-1} \alpha_k^{(i-1)}r_{i-1}^k.
    \]
    To this end, we introduce $\beta_k^{(i)}$ and $\widetilde{\beta_k^{(i)}}$ provided by the prover to node $k$, allegedly satisfying:
    \begin{align*}
        \beta_k^{(i)} &= \sum_{k'\in T_k} \alpha_{k'}^{(i)}\\
        \widetilde{\beta_k^{(i)}} &= \sum_{k'\in T_k} \alpha_{k'}^{(i-1)} r_{i-1}^{k'}.
    \end{align*}
    Similarly to the simulation of Step 1, the correctness of these values can be checked using the tree $T$: If $k$ is a leaf, checks that $\alpha_k^{(i)}= \beta_k^{(i)}$ and $\widetilde{\beta_k^{(i)}}=\alpha_k^{(i)}\cdot r_{i-1}^k$.
    If $k$ is not a leaf, checks that 
        \begin{equation*}
            \beta_k^{(i)}= \alpha_k^{(i)}+ \sum_{j \in \text{child}(k)}\beta_j^{(i)}
        \end{equation*}
        and
        \begin{equation*}
            \widetilde{\beta_k^{(i)}}= \alpha_k^{(i-1)}\cdot r_{i-1}^k+ \sum_{j \in \text{child}(k)}\widetilde{\beta_j^{(i)}}
        \end{equation*}
        where $\beta_j^{(i)}$ and $\widetilde{\beta_j^{(i)}}$ are sent from $j$.
    Then node $k=0$ additionally checks that $\alpha_0^{(i)}+\beta_0^{(i)} = \widetilde{\beta_0^{(i)}}$.
    \item[Simulating Final Check.] Final Check of Sumcheck protocol verifies that indeed the equality \( g_N(r_N) = F(r_1,\ldots,r_N) \) holds. As in the simulation of Step 1 and Step 2, $g_N(r_N)$ can be computed using the tree $T$, and oracle access to $F$ (from our assumption) gives the value of $F(r_1,\ldots,r_N)$.
\end{description}

The distributed implementation of the Sumcheck protocol described above follows the same sequence of steps as the centralized algorithm in Protocol~\ref{alg:sumcheck}, but propagates intermediate data along a rooted spanning tree of the communication graph.
Therefore, it is not zero-knowledge because, through intermediate computations, the verifier learns the coefficients of internal polynomials and the evaluations of each internal polynomial $g_i$ for random values chosen by the verifier. Specifically, the root node receives the evaluation of $g_i$. 
Since the original polynomial encodes global information, learning these values leaks information of the graph, which is impossible to learn for the class of simulators $\mathcal C[\bigo{1},\bigo{N\log q}]$, which we regard as {\it efficient}. 

To resolve this issue, we show how to simulate each step of the distributed implementation described above, keeping these values secret from the nodes.
   
\subsection{General Toolbox for the zero-knowledge simulation.}\label{subsec:general-toolbox}

Before describing the formal modification of Sumcheck, we describe the main technical modifications that will be done to the original algorithm in order to hide the coefficients and partial sums.

Regarding the implementation of Steps 1 and 2 in Sumcheck Protocol, in both cases the original implementation must ensure that each node receives the appropriate coefficients of the univariate polynomials defined by:

$$
\sum_{x_{i+1}\in\{0,1\},\dots,x_N\in\{0,1\}} f(r_1,\dots,r_{i-1},x,x_{i+1},\dots,x_N).
$$
As previously discussed, revealing the coefficients of these polynomials may inadvertently leak information that cannot be simulated by any algorithm in the class $\mathcal{C}[\bigo{1}, \bigo{\log n}]$. Nonetheless, it is still necessary to verify that certain conditions hold for these polynomials such as Equations \ref{eq:s11}, \ref{eq:s12}, and \ref{eq:s13}.
The technique used to encrypt the sum of coefficients is fairly standard in the context of secure multiparty computation (MPC) \cite{shamir, rabin1989verifiable}, and has been employed in previous distributed zero-knowledge implementations as well (e.g., for 3-coloring in \cite{zkdef}). However, in our setting, a key difference that must be addressed is that each value sent to a node may participate in more than one equation. This leads to a more intricate construction, which we develop incrementally, culminating in the complete protocol described in \Cref{subsec:final-protocol}. A central tool is the use of a {\it polynomial encryption} of a value $\alpha \in \fieldq$.

\begin{definition}\label{def:polyenc}
    Given a field $\fieldq$ with $q$ prime and a value $\alpha\in \fieldq$, a polynomial encryption of $\alpha$ is a degree-one polynomial, denoted as $P[\alpha]\in\fieldq[x]$, such that $P[\alpha]$ is built by sampling an uniform value $r\overset{}{\sim}\Unif(\fieldq)$ and 

    $$P[\alpha](x) = r\cdot x + \alpha$$
\end{definition}

The following standard fact about interpolation of a degree-one polynomial holds.

\begin{fact}\label{fact:interpol}
    Given a field $\fieldq$ and two different points $y_1,y_2\in\fieldq$, there exists a unique polynomial $P(x)\in\fieldq^1[x]$, where $\fieldq^1[x]$ is the set of all degree-one polynomials over $\fieldq$, such that both $y_1$ and $y_2$ are in the image of $P(x)$ in $\fieldq$.
\end{fact}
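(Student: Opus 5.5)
The plan is to read Fact~\ref{fact:interpol} in the form in which it is used for polynomial encryptions. The ``two different points'' are two evaluation pairs $(x_1,y_1),(x_2,y_2)\in\fieldq^2$ with distinct abscissas $x_1\neq x_2$. The claim is that exactly one polynomial $P(x)=c_1x+c_0$ of degree at most one (the space $\fieldq^1[x]$) satisfies $P(x_1)=y_1$ and $P(x_2)=y_2$. I will prove this by linear algebra over the field $\fieldq$, which is a field because $q$ is prime.

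For existence, I will write down the Lagrange form directly,
\[
P(x)= y_1\,\frac{x-x_2}{x_1-x_2} + y_2\,\frac{x-x_1}{x_2-x_1}.
\]
This is well defined because $x_1-x_2\neq 0$ is invertible in $\fieldq$. Substituting $x_1$ and $x_2$ gives $y_1$ and $y_2$. Equivalently, the coefficient system
\[
\begin{pmatrix}1 & x_1\\ 1 & x_2\end{pmatrix}\begin{pmatrix}c_0\\ c_1\end{pmatrix}=\begin{pmatrix}y_1\\ y_2\end{pmatrix}
\]
has Vandermonde determinant $x_2-x_1\neq 0$, so it has a solution, and that solution is unique. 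For uniqueness I will also give the root-counting argument. Suppose $P$ and $P'$ both interpolate the two pairs. Then $D=P-P'$ has degree at most one and vanishes at the two distinct points $x_1,x_2$. A nonzero polynomial of degree at most one over a field has at most one root: either it is a nonzero constant, or its unique root is $-c_0/c_1$. Hence $D\equiv 0$.

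The only real obstacle is interpretive. As literally stated, ``$y_1,y_2$ in the image of $P$'' does not determine $P$, since many lines pass through any two values. So the proof must commit to the reading with distinct evaluation points, which is the sense needed later. Namely, a single evaluation $P[\alpha](x)=rx+\alpha$ at a nonzero point is uniformly distributed and hides $\alpha$. Two evaluations at distinct points, by the Fact, determine $P[\alpha]$, and hence determine $\alpha=P[\alpha](0)$.

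As a corollary I will also record this hiding statement. For fixed $x\neq 0$, the map $r\mapsto rx+\alpha$ is a bijection of $\fieldq$, so $P[\alpha](x)\sim\Unif(\fieldq)$ independently of $\alpha$. This is presumably the form in which the Fact feeds into the secret-sharing commitments of \Cref{subsec:general-toolbox}.
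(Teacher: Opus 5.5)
Your proof is correct, and it is the argument the paper has in mind: the paper gives no proof of this Fact, calls it standard, and then glosses it as ``knowing two evaluations of the polynomial enables us to reconstruct such a polynomial \ldots (for example, through interpolation)'', which is exactly your reading (distinct evaluation points, polynomials of degree at most one) and exactly your Lagrange/Vandermonde argument. You are also right that the literal wording is false, since every nonconstant affine map on $\fieldq$ is a bijection and so its image contains any $y_1,y_2$; your reinterpretation is the one the protocols actually rely on, where shares are evaluations at the nonzero points $c_k,3-c_k\in\{1,2\}$ and the secret is the value at $0$.
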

Fact \ref{fact:interpol} states that knowing two evaluations of the polynomial enables us to reconstruct such a polynomial due to its uniqueness (for example, through interpolation). In fact, the following property is crucial to obtain a constant round simulator for both Steps.

\begin{proposition}\label{lemma:unirandvalues}
    Given $\alpha\in \fieldq$, $P[\alpha]$ its polynomial encryption and an polynomial $N\in_R\fieldq^1[x]$ chosen uniform at random, then for any $\beta\in\fieldq$, the distribution of the random variable $\delta\colon=P[\alpha](\beta)+N(\beta)$ is uniform in $\fieldq$.
\end{proposition}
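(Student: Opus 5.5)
The plan is to write everything out in coefficients and use the fact that adding an independent uniform element of $\fieldq$ to any random variable gives a uniform result. By \Cref{def:polyenc}, $P[\alpha](x)=r x+\alpha$ with $r\sim\Unif(\fieldq)$. The random polynomial $N\in_R\fieldq^1[x]$ I will write as $N(x)=s x+t$, with $N$ sampled independently of $r$. Then for a fixed $\beta\in\fieldq$,
\[
\delta = P[\alpha](\beta)+N(\beta) = (r+s)\beta + \alpha + t .
\]

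The first step is to pin down the distribution of $(s,t)$ under ``uniform over $\fieldq^1[x]$''. There are two readings. If $\fieldq^1[x]$ means all polynomials of degree at most one, then $(s,t)$ is uniform on $\fieldq^2$. If it means exactly degree one, then $(s,t)$ is uniform on $\fieldq^{\times}\times\fieldq$. In both cases the constant term $t$ is uniform on $\fieldq$ and independent of $s$, and it is independent of $r$ because $N$ is sampled independently of $P[\alpha]$. This independence of the constant term is the only property of $N$ I will use. Checking it under whichever reading the paper intends is the only point needing care; it is a short product-structure verification rather than a real obstacle.

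The second step is to condition on $(r,s)$. Put $c:=(r+s)\beta+\alpha$, which is a function of $(r,s)$ only, with $\alpha,\beta$ fixed. For every fixed value of $c$, the map $t\mapsto c+t$ is a bijection of $\fieldq$. Since $t$ is uniform and independent of $c$, this gives $\Prob{\delta=z \mid r,s}=\Prob{t=z-c}=1/q$ for every $z\in\fieldq$. Averaging over $(r,s)$, by the law of total probability (in the spirit of \Cref{lem:partition}), gives $\Prob{\delta=z}=1/q$ for all $z$, so $\delta\sim\Unif(\fieldq)$.

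Finally, I will note that the argument does not depend on $\beta$. In particular it covers $\beta=0$, where $\delta=\alpha+t$, and it holds for every value of $\alpha$. This uniformity, independent of the hidden value $\alpha$, is exactly what the simulators will need later.
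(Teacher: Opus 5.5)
Your proposal is correct and follows essentially the paper's argument. The paper also writes $N(x)=ax+b$, notes that for each admissible $a$ exactly one $b$ gives $N(\beta)=z$ (your observation that $t$ is uniform and independent of $s$), and concludes by translation invariance. Your explicit conditioning on $(r,s)$ is slightly more rigorous: the paper treats $P[\alpha](\beta)$ as a fixed element even though the mask $r$ is random, which tacitly relies on the independence of $N$ and $P[\alpha]$ that you state outright.
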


    \begin{proof}
    Fix $\beta\in\fieldq$. Since $P[\alpha]$ is fixed once $\alpha$ is fixed, the value
    $P[\alpha](\beta)$ is a fixed element of $\fieldq$. Hence it is enough to prove that
    $N(\beta)$ is uniformly distributed over $\fieldq$.

    Write $N(x)=ax+b$, where $a,b\in\fieldq$ are chosen uniformly at random subject to
    $N\in_R\fieldq^1[x]$. For any fixed value $z\in\fieldq$, the condition
    $N(\beta)=z$ is equivalent to
    $a\beta+b=z.$
    For every admissible choice of $a$, there is a unique value of $b$, namely
    $b=z-a\beta$, satisfying this equation. Therefore, the number of polynomials
    $N\in\fieldq^1[x]$ such that $N(\beta)=z$ is the same for every $z\in\fieldq$.
    Consequently, $N(\beta)$ is uniformly distributed over $\fieldq$.

    Finally, $\delta=P[\alpha](\beta)+N(\beta)$ is obtained by adding the fixed field
    element $P[\alpha](\beta)$ to a uniformly distributed random variable. Since
    translation by a fixed element is a bijection of $\fieldq$, $\delta$ is also
    uniformly distributed over $\fieldq$.
\end{proof}
\subsubsection{Zero-knowledge implementation of Step 1.} 
Without loss of generality, we assume that the sum to be verified is $0$. Recall in Step 1 each node $k\in V$ receives two values $\alpha_k^{(1)},\beta_{k}^{(1)}\in \fieldq$ where $g_1(x)=\sum_{k=0}^{n-1}\alpha_k^{(1)}x^k$ and $\beta_k^{(1)}= \sum_{j\in T_k} \alpha_j^{(1)}$ for the subtree $T_k$ with root $k$, and in the verification phase each node $k$ needs to verify 

\begin{equation}\label{eq:111}
    \beta_k^{(1)} =\alpha_k^{(1)} + \sum_{j \in \text{child}(k)}\beta_j^{(1)}
\end{equation}
and node $k=0$ also needs to verify that 

\begin{equation}\label{eq:112}
    \alpha_0^{(1)} +\beta_0^{(1)} = 0.
\end{equation}

 Given the complexity of these technical aspects, we first detail how Step 1 can be implemented using a protocol satisfying the perfect zero-knowledge property, and subsequently describe how to extend this approach to handle Step 2 while preserving the same guarantees.

\paragraph{Random Additions.} \label{subsubsec:random-addition}
In order to conceal the values $\alpha_k^{(1)}$ and $\beta_k^{(1)}$ from the nodes, the prover computes a random polynomial encryption, denoted by $P\left[\alpha_k^{(1)}\right]$ and $P\left[\beta_k^{(1)}\right]$. Additionally, for each of these values, the prover independently samples random polynomials of degree one, $N\left[\alpha_k^{(1)}\right](x), N\left[\beta_k^{(1)}\right](x) \in_R \fieldq^1[x]$ for node $k$. The coefficients of these random polynomials do not depend on $\alpha_k^{(1)}$ or $\beta_k^{(1)}$. The only constraint imposed on the resulting family of polynomials is the following condition:

\begin{align}\label{eq:eqcondition}
   \textbf{(RC)}_1\quad &N\left[\alpha_k^{(1)}\right](0)  + \sum_{j \in \text{child}(k)} N\left[\beta_j^{(1)}\right](0) = N\left[\beta_k^{(1)}\right](0)\quad \forall k \in V 
\end{align}

After the prover computes these polynomials, the simulation of Step 1 is done by checking $W_k(0) = 0$ where
\begin{equation}\label{eq:30}
    W_k(x) = \left(P\left[\alpha_k^{(1)}\right]  + N\left[\alpha_k^{(1)}\right]\right)(x)+ \sum_{j \in \text{child}(k)} \left(P\left[\beta_j^{(1)}\right] + N\left[\beta_j^{(1)}\right]\right)(x) - \left(P\left[\beta_k^{(1)}\right]  + N\left[\beta_k^{(1)}\right]\right)(x)
\end{equation}
since the condition \textbf{(RC)}$_1$ gives
\begin{equation}\label{eq:9}
    W_k(0) = 0\iff \alpha_k^{(1)} + \sum_{j\in \text{child}(k)} \beta_j^{(1)} = \beta_k^{(1)}.
\end{equation}

Each node $k$ needs to reconstruct $W_k(x)$ without knowing polynomial encryptions. This is achieved as follows. The prover sends to each node $k$ the following values, where $c_k\in \{1,2\}$ is the color of $k$ in a random $2$-coloring of $T$:

\begin{itemize}
    \item $P\left[\alpha_k^{(1)}\right](c_k)$, $N\left[\alpha_k^{(1)}\right](c_k)$, $P\left[\beta_k^{(1)}\right](c_k)$, $N\left[\beta_k^{(1)}\right](c_k)$ 
    \item $P\left[\alpha_p^{(1)}\right](c_k)$, $N\left[\alpha_p^{(1)}\right](c_k)$, $P\left[\beta_p^{(1)}\right](c_k)$, $N\left[\beta_p^{(1)}\right](c_k)$ of its parent $p$ in the spanning tree $T$
    \item $P\left[\beta_j^{(1)}\right](c_k)$, $N\left[\beta_j^{(1)}\right](c_k)$ of each child $j$ in the spanning tree $T$.
\end{itemize}
Once the prover sends them, the node $k$ sends its parent $p$
\begin{itemize}
    \item $P\left[\alpha_k^{(1)}\right](c_k) + N\left[\alpha_k^{(1)}\right](c_k)$, $P\left[\beta_k^{(1)}\right](c_k)+N\left[\beta_k^{(1)}\right](c_k)$ 
    \item $P\left[\alpha_p^{(1)}\right](c_k)+N\left[\alpha_p^{(1)}\right](c_k)$, $P\left[\beta_p^{(1)}\right](c_k)+N\left[\beta_p^{(1)}\right](c_k)$
\end{itemize}
to reconstruct the polynomials $P\left[\alpha_k^{(1)}\right] + N\left[\alpha_k^{(1)}\right], P\left[\beta_k^{(1)}\right] + N\left[\beta_k^{(1)}\right]$,  $P[\alpha_j^{(1)}] + N\left[\alpha_j^{(1)}\right]$ for each children $j$. Now, the node $k$ can locally compute $W_k(x)$. This strategy allows nodes to verify whether Equation~\eqref{eq:111} holds without revealing the labels $\alpha_k^{(1)}$ and $\beta_k^{(1)}$. 

Nevertheless, this approach introduces three immediate drawbacks that must be addressed.

\begin{description}
    \item[{\bf Problem 1.}] The nodes cannot reconstruct the random polynomials sent by the prover, and therefore cannot trust that the prover in fact sent them random polynomials satisfying condition \textbf{(RC)}$_1$.
    
    \item[{\bf Problem 2.}] As each node can have up to $\Omega(n)$ children, the message size is $\bigo{n\log q}$ bits.
    
    \item[\bf Problem 3.] The nodes were able to verify that Equation \ref{eq:111} is satisfied without revealing any of the coefficients, but we still need to verify that Equation \ref{eq:112} is satisfied, without having access to the values $\alpha_0^{(1)}$ and $\beta_0^{(1)}$.
\end{description}
Problems 1 and 2 are standard and their solutions are included for the sake of completeness. We specifically address Problem 3. These techniques will appear again in the analysis of Step 2.

\paragraph{Cut and Choose Technique.}\label{subsubsec:cutandchoose}

    To resolve {\bf Problem 1.}, we use the cut-and-choose technique, a classical tool used in zero-knowledge algorithms, which is also used in \cite{zkdef}. This technique works as follows: Given a constant parameter $t\in\mathbb N$, the prover computes, for each node $k\in V$, $t$ independent random copies of polynomials $\pal{N_h}{k}{{(1)}}$ and $\pbet{N_h}{k}{{(1)}}$ satisfying condition {\bf (RC)}$_1$ for each $h\in [t]$. Then the prover sends, in the first round of communication, to each node $k$ a share $\pal{N_h}{k}{(1)}(c_k)$ and $\pbet{N_h}{k}{(1)}(c_k)$ for each $h\in [t]$ and sends the shares $\pal{N_h}{k}{(1)}(3- c_k)$ and $\pbet{N_h}{k}{(1)}(3- c_k)$ to each children $j$ of $k$. After the prover has committed the shares of these $t$ copies, in round two the nodes select randomly one of these copies $h^{(1)}\in[t]$ and send this value to the prover. Finally, in round three the prover sends the original messages described before, using the random polynomials $\pal{N_{h^{(1)}}}{k}{(1)}$ and $\pbet{N_{h^{(1)}}}{k}{(1)}$ for each node $k$. For all the copies not selected, the nodes reconstruct the random polynomials by exchanging the shares with neighbors and verify that condition {\bf (RC)}$_1$ holds for each of these copies. The nodes rejects if some of them breaks condition {\bf (RC)}$_1$. The copy $h^{(1)}$ selected is not revealed, and the nodes proceed with this copy to verify the respective equalities. As the prover didn't know which of the $t$ copies were to be used, the probability that the prover sent random polynomials not satisfying \Cref{eq:eqcondition} for the selected copy is at most $1/t$.

    In the zero-knowledge protocol for {\bf Step 2}, we also use extra random polynomials that must satisfy certain equations such as condition {\bf (RC)}$_1$.  Every time we send these polynomials, the protocol takes $\bigo{1}$ rounds and the message grows by a factor of \(t\).  Since \(t\) is a fixed constant, it does not affect the message size \(\bigo{\log n}\).

\paragraph{Messages of size $\bigo{\log n}$}\label{subsec:sizeofmesssages}

To resolve {\bf Problem 2.}, we proceed analogously to \cite{zkdef}. Instead of sending to node $k$ all the shares $\left\{\left[\pbet{P}{j}{(1)}+\pbet{N}{j}{(1)}\right](c_k)\right\}_{j\in \text{child}(k)}$, these shares are distributed among the children of $k$, and later the children are going to inform node $k$ of these shares as follows.

Assume that the children of $k$ are ordered from $j_0,\dots, j_{|\text{child}(k)|-1}$ according to its \id\ (for example an increasing ordering of the \id's). Each child $j_i$ will receive from the prover the share $\pbet{P}{j_{i+1}}{(1)}(c_k)$, $\pbet{N}{j_{i+1}}{(1)}(c_k)$ of the next child $j_{i+1}$. In the verification phase, the node $j_i$ informs to node $k$ the share $\left[\pbet{P}{j_{i+1}}{(1)}+\pbet{N}{j_{i+1}}{(1)}\right](c_k)$ of its consecutive child. 

\paragraph{Same coefficient in different equalities.}
Finally, we deal with {\bf Problem 3.} Using the above techniques the node $0$ can verify $W_0(0)=0$, which means 
$$\alpha_0^{(1)} + \sum_{j\in child(k)} \beta_j^{(1)} = \beta_0^{(1)}$$
by Equation \ref{eq:9}.
Notice that at this point, the node $0$ knows the polynomials $\pal{P}{0}{(1)}+\pal{N_{h^{(1)}}}{0}{(1)}$ and $\pbet{P}{0}{(1)}+\pbet{N_{h^{(1)}}}{0}{(1)}$ and does not know the coefficient $\alpha_0^{(1)}$ nor the value $\beta_0^{(1)}$, but still needs to verify ~\Cref{eq:s13} which is equivalent to $g_1(0) +g_1(1) = 0$. 

The idea is to use extra random polynomials. The algorithm proceeds as follows. In the same round that the prover sent the shares of the $t$ random polynomials, the prover also sends the shares of $t$ extra random polynomials $M_1,\dots,M_t$ to the node $0$ and its children, such that the following condition holds.

\begin{equation*}
    {\bf (0C)}_1\quad M_h(0) = \pal{N_h}{0}{(1)}(0) + \pbet{N_h}{0}{(1)}(0),\quad \forall h\in [t]
\end{equation*}
Consider that the node $0$ reveals $M_h(x)$ for all $h$ and checks if the condition {\bf (0C)}$_1$ holds for all $h\neq h^{(1)}$. Since the node $0$ can recover
$\pal{P}{0}{(1)}+ \pal{N_{h^{(1)}}}{0}{(1)}$ and $\pbet{P}{0}{(1)}+ \pbet{N_{h^{(1)}}}{0}{(1)}$, 
$$\left[\pal{P}{0}{(1)}+ \pal{N_{h^{(1)}}}{0}{(1)} + \pbet{P}{0}{(1)}+ \pbet{N_{h^{(1)}}}{0}{(1)}- {M_{h^{(1)}}}\right](0) = 0$$
can be checked locally. This is identical to $\pal{P}{0}{(1)}(0)+\pbet{P}{0}{(1)}(0)=0$ under the condition that {\bf (0C)}$_1$ holds for $h =h^{(1)}$. Note that this modification additionally reveals 
$\left[\pal{N_{h^{(1)}}}{0}{(1)} + \pbet{N_{h^{(1)}}}{0}{(1)}\right](0)$, resulting that the node $0$ learns the value of $\alpha_0^{(1)}+\beta_0^{(1)}$ (but does not learn $\alpha_0^{(1)}$ and $\beta_0^{(1)}$). Nevertheless, for any yes-instance this value is always equal to $0$ (more precisely, $a$), and the zero-knowledge property survives.

\subsection{Zero-knowledge implementation of Step 2.}

We follow the similar approach for Step 1. Remember that each node $k\in V$ needs to verify the following equalities
\begin{equation}\label{eq:a}
     \beta_k^{(i)} = \alpha_k^{(i)} + \sum_{j\in ch(k)} \beta_j^{(i)}
\end{equation}
\begin{equation}\label{eq:b}
    \widetilde{\beta_k^{(i)}} = \alpha_k^{(i-1)}r_{i-1}^{k}+\sum_{j\in ch(k)} \widetilde{\beta_{j}^{(i-1)}},
\end{equation}
and node $k=0$ also needs to verify that

\begin{equation}\label{eq:c}
        \alpha_0^{(i)}+ \beta_0^{(i)} = \widetilde{\beta_0^{(i)}}.
\end{equation}
The main difference is that we have to use the coefficients from the {\it previous} round to check
\Cref{eq:b}.
To this end, we further modify the cut-and-choose technique. We assume that in the phase $i-1$ the prover actually sent $t^2$ copies of each random polynomial $\pal{N_h}{k}{(i-1)}$, $\pbet{N_h}{k}{(i-1)}$, instead of $t$ copies. The verifier then selected $t^2-t-1$ indices to be opened to check the consistency condition. Among the remaining $t+1$ copies, the verifier selects one index $h^{(i-1)}$ to check the condition involving $\alpha_k^{(i-1)}$, namely, \Cref{eq:a}. Consequently, we have $t$ possible options of the form $\left[ P\left[\alpha_k^{(i-1)}\right] + N_h\left[\alpha_k^{(i-1)}\right] \right] (x)$ for $t$ different indices from $[t^2]$, relabel them by $[t]$. Now in the phase $i$, given the random value $r_{i-1}\in \fieldq$, random polynomials $N_h\left[\widetilde{\beta_k^{(i)}}\right]$ for hiding $\widetilde{\beta_k^{(i)}}$ satisfy
\begin{equation}
    N_h\left[\alpha_k^{(i-1)}\right](0) \cdot r_{i-1}^{k}+\sum_{j\in \text{child}(k)} N_h\left[\widetilde{\beta_{j}^{(i)}}\right](0) - N_h\left[\widetilde{\beta_k^{(i)}}\right](0) = 0.
\end{equation}
Once the above condition is ensured, checking \Cref{eq:b} is done by checking
\begin{equation}\label{eq:step2-2}
    \left[P[\widetilde{\beta_k^{(i)}}] + N_{h}[\widetilde{\beta_k^{(i)}}]\right](0) = \left[P[\alpha_k^{(i-1)}] + N_{h}[\alpha_k^{(i-1)}] \right](0) \cdot r_{i-1}^{k}+\sum_{j\in \text{child}(k)} \left[P[\widetilde{\beta_{j}^{(i)}}] + N_{h}[\widetilde{\beta_{j}^{(i)}}]\right](0)
\end{equation}
where each polynomial in \Cref{eq:step2-2} is reconstructed by the node $k$.

Finally, to check \Cref{eq:c}, we introduce random polynomials $M_{x,y}$ for $x\in[t^2],y\in[t]$ such that
$$ M_{x,y}(0) = N_x\left[ \alpha_0^{(i)} \right](0) + N_x\left[\beta_0^{(i)} \right](0) - N_y\left[\widetilde{\beta_0^{(i)}}\right](0).$$
Once we have this relation \Cref{eq:c} is checked by
\begin{align*}
        \left[\pal{P}{0}{(i)} +\pal{N_{x}}{0}{(i)}\right](0) +  \left[\pbet{P}{0}{(i)}+\pbet{N_{x}}{0}{(i)}\right](0) -  \left[P\left[\widetilde{\beta_0^{(i)}}\right] + N_{y}\left[\widetilde{\beta_0^{(i)}}\right]\right](0) = M_{x,y}(0).
\end{align*}

\subsection{Zero-knowledge implementation of Step 3 and final application.}

Assuming that the nodes have oracle access to query the value $f(r_1,\dots,r_N)$, the node $k=0$ has the polynomial $\left(\pal{P}{k}{(N)}+ \pal{N_{h^{(N)}}}{k}{(N)}\right)(c_k)$ selected in the final phase of Step 2.

\subsection{Final protocol}\label{subsec:final-protocol}

Now we describe the interaction between the prover and the nodes for each step of our zero-knowledge implementation of Sumcheck and the corresponding verification procedures.

\begin{tcolorbox}[title=Prover-Verifier Interaction $\Pi_1$ for Step 1, enhanced, breakable=true]
\small
\begin{enumerate}[label={\it \arabic*.}]

    \item \textbf{Precomputation} of the honest \textrm{prover} for each node  $k$:
\begin{enumerate}[label={\bf P\arabic*.}]
    \item A random 2-coloring $\{c_k \}_{k\in V}$ of the spanning tree. Then $c_k$ is an uniform random variable in $\{1,2\}$ for all $k\in V$.
    \item Polynomial encryption \( \pal{P}{k}{(1)}, \pbet{P}{k}{(1)}\) and $\left\{\pal{N_h}{k}{(1)}, \pbet{N_h}{k}{(1)}\right\}_{h\in [t^2]}\subseteq_R\field_1[x]$ such that $\pal{N_h}{k}{(1)}(0)~=~-\sum_{j\in \text{child}(k)} \pbet{N_h}{j}{(1)}(0)$ for all $h\in [t^2]$. The prover also computes $\left\{M_h\right\}_{h\in [t^2]}\in_R\field_1[x]$ such that $M_h(0) = \pbet{N_h}{0}{(1)}(0) + \pal{N_h}{0}{(1)}(0)$ for all $h\in [t^2]$.

\end{enumerate}

\item \textbf{Message} from the honest \textrm{prover} to each node $k$:

\begin{enumerate}[label={\bf M\arabic*.}]
    \item Sends the shares $ \pal{P}{k}{(1)}(c_k)$, $\pal{N_h}{k}{(1)}(c_k)$, $\pbet{P}{k}{(1)}(c_k)$, $\pbet{N_h}{k}{(1)}(c_k)$ for each $h\in [t^2]$.
    \item If $p$ is its parent in the ST, the prover also sends the shares $ \pal{P}{p}{(1)}(c_k)$, $ \pal{N_h}{p}{(1)}(c_k)$ and $ \pbet{P}{p}{(1)}(c_k)$, $\pbet{N_h}{p}{(1)}(c_k)$ for each $h\in [t^2]$.
    \item If $k$ has a unique child $j$,  the prover also sends the shares $\pal{P}{j}{(1)}(c_k)$, $\pal{N_h}{j}{(1)}(c_k)$ and $\pbet{P}{j}{(1)}(c_k)$, $\pbet{N_h}{j}{(1)}(c_k)$ for each $h\in [t^2]$.
    \item If $k$ has a next sibling $k'$ in the ST, the prover also sends the shares $\pal{P}{k'}{(1)}(3-c_k)$, $\pal{P_h}{k'}{(1)}(3-c_k)$ and $\pbet{P}{k'}{(1)}(3-c_k)$, $\pbet{N_h}{k'}{(1)}(3-c_k)$ for each $h\in [t^2]$.
    \item If $k=0$, for all $h\in [t^2]$, the prover sends the description of $M_h$ to the node $0$.
\end{enumerate}

\item \textbf{Message} from the root of ST to the \textrm{prover}: Selects $t+1$ indices $h^{(1)}_0,\ldots,h^{(1)}_t\in [t^2]$ uniform at random and sends to the prover. \\

\item \textbf{Message} from the honest \textrm{prover} to each node $k$:  The prover tells $h^{(1)}_0,\ldots,h^{(1)}_t$ to each node.\\

\end{enumerate}

\end{tcolorbox}

\begin{tcolorbox}[title=Verification Protocol of $\Pi_1$,enhanced, breakable=true]
    \small

    \textbf{(Communication)} Each node $k$ in one round:

\begin{enumerate}[label={\bf C\arabic*.}]
    \item Sends to its parent $p$ the shares $\left[ \pbet{P}{k}{(1)} +  \pbet{N_{h_0^{(1)}}}{k}{(1)}\right](c_k)$, $\left[ \pal{P}{p}{(1)} +  \pal{N_{h_0^{(1)}}}{p}{(1)}\right](c_k)$, $\left[ \pbet{P}{p}{(1)} +  \pbet{N_{h_0^{(1)}}}{p}{(1)}\right](c_k)$. 
    For each index $h \in [t^2]\backslash \{h^{(1)}_0,\ldots,h^{(1)}_t \}$, sends to $p$ the shares $\pal{N_h}{k}{(1)}(c_k)$, $\pbet{N_h}{k}{(1)}(c_k)$, $\pal{N_h}{p}{(1)}(c_k)$, and $\pbet{N_h}{p}{(1)}(c_k)$ . If $k$ has the next sibling $k'$, then sends to its parent $p$ the share $\left[ \pbet{P}{k'}{(1)} +  \pbet{N_{h_0^{(1)}}}{k'}{(1)}\right](3-c_k)$, and $\pal{N_h}{k'}{(1)}(3-c_k)$, $\pbet{N_h}{k'}{(1)}(3-c_k)$ for $h \in [t^2]\backslash \{h^{(1)}_0,\ldots,h^{(1)}_t \}$. 
    
\end{enumerate}
\textbf{(Verification)} Each node $k\in V$ verifies the following. 
    \begin{enumerate}[label={\bf V\arabic*.}]
        \item For all $h\in [t^2]\backslash\{h^{(1)}_0,\ldots,h^{(1)}_t\}$, interpolates $\pal{N_h}{k}{(1)}$, $\pbet{N_h}{k}{(1)}$, and $\pbet{N_h}{j}{(1)}$ for $j\in\text{child}(k)$ to check 
        $$\pal{N_h}{k}{(1)}(0) + \sum_{j\in \text{child}(k)}\pbet{N_h}{j}{(1)}(0) = \pbet{N_h}{k}{(1)}(0).$$
        If $k=0$ it also checks for each $h\in [t^2]\backslash\{h^{(1)}_0,\ldots,h^{(1)}_t\}$
        $$M_h(0) = \pal{N_h}{0}{(1)}(0) + \pbet{N_h}{0}{(1)}(0).$$
        \item Interpolates the polynomials $\left[\pbet{P}{k}{(1)} + \pbet{N_{h_0^{(1)}}}{k}{(1)}\right](x)$, $\left[\pal{P}{j}{(1)} + \pal{N_{h_0^{(1)}}}{j}{(1)}\right](x)$ and $\left[\pbet{P}{j}{(1)} + \pbet{N_{h_0^{(1)}}}{j}{(1)}\right](x)$ for each $j\in \text{child}(k)$ and checks that 
        \begin{align*}
                \left[\pal{P}{k}{(1)} + \pal{N_{h_0^{(1)}}}{k}{(1)}\right](0)+ \sum_{j\in \text{child}(k)}\left[\pbet{P}{j}{(1)} + \pbet{N_{h_0^{(1)}}}{j}{(1)}\right](0) \\= \left[\pbet{P}{k}{(1)} + \pbet{N_{h_0^{(1)}}}{k}{(1)}\right](0)
        \end{align*} 

        \item If $k=0$, also verifies the following equality:
 
        \[
        \left[\pal{P}{0}{(1)} +\pal{N_{h_0^{(1)}}}{0}{(1)}\right](0) +  \left[\pbet{P}{0}{(1)}+\pbet{N_{h_0^{(1)}}}{0}{(1)}\right](0)  = M_{h_0^{(1)}}(0)
        \]
    \end{enumerate}
    
\end{tcolorbox}

\begin{tcolorbox}[title=Prover-Verifier Interaction $\Pi_2$ for Step 2, enhanced, breakable=true]
\small
For each $i\in \{2,\ldots, N\}$, the prover and the verifier act as follows.
Rewrite the indices $h^{(i-1)}_1,\ldots, h^{(i-1)}_t \in [t^2]$ from the previous round by $\{1,\ldots,t\}$.

\begin{enumerate}[label={\it \arabic*.}]

\item \textbf{Message} from the root of ST to the \textrm{prover}: Selects $r_{i-1} \in \fieldq$ uniform at random and sends to the prover. \\

    \item \textbf{Local computation} of the honest \textrm{prover} for each node $k$:
\begin{enumerate}[label={\bf P\arabic*.}]
    \item Polynomial encryption \( \pal{P}{k}{(i)}, \pbet{P}{k}{(i)}, P\left[\widetilde{\beta_k^{(i)}}\right]\).
    \item Random polynomials $\left\{\pal{N_h}{k}{(i)}, \pbet{N_h}{k}{(i)}\right\}_{h\in [t^2]}\subseteq_R\field_1[x]$ such that $$\pal{N_h}{k}{(i)}(0)~=~-\Sigma_{j\in \text{child}(k)} \pbet{N_h}{j}{(i)}(0).$$
    \item Random polynomials $\left\{N_h\left[\widetilde{\beta_k^{(i)}}\right]\right\}_{h\in [t]}\subseteq_R\field_1[x]$ such that $$N_{h}\left[\alpha_k^{(i-1)}\right](0) \cdot r_{i-1}^{k}+\sum_{j\in \text{child}(k)} N_h\left[\widetilde{\beta_{j}^{(i)}}\right](0) - N_h\left[\widetilde{\beta_k^{(i)}}\right](0) = 0$$ where we relabeled the indices $\{h_1^{(i-1)},\ldots, h_t^{(i-1)}\}$ by $[t]$ for $N_{h}\left[\alpha_k^{(i-1)}\right](0)$.
    \item Random polynomials $\left\{M_{x,y}\right\}_{x\in [t^2],y\in[t]}\subseteq_R\field_1[x]$ such that $M_{x,y}(0) = \pal{N_x}{0}{(i)}(0) + \pbet{N_x}{0}{(i)}(0) - N_y\left[ \widetilde{\beta_0^{(i)}}\right](0)$.
\end{enumerate}

\item \textbf{Message} from the honest \textrm{prover} to each node $k$:

\begin{enumerate}[label={\bf M\arabic*.}]
    \item Sends the value $r_{i-1}$ and the shares $ \pal{P}{k}{(i)}(c_k)$,  $\pbet{P}{k}{(i)}(c_k)$, $P\left[\widetilde{\beta_{k}^{(i)}}\right](c_k)$, $\pal{N_h}{k}{(i)}(c_k)$, $\pbet{N_h}{k}{(i)}(c_k)$ for each $h \in [t^2]$ and $N_h\left[\widetilde{\beta_{k}^{(i)}}\right](c_k)$ for each $h\in [t]$.
    \item If $p$ is its parent in the ST, the prover also sends the shares $ \pal{P}{p}{(i)}(c_k)$,  $\pbet{P}{p}{(i)}(c_k)$, $P\left[\widetilde{\beta_{p}^{(i)}}\right](c_k)$, $\pal{N_h}{p}{(i)}(c_k)$, $\pbet{N_h}{p}{(i)}(c_k)$ for each $h \in [t^2]$ and $N_h\left[\widetilde{\beta_{p}^{(i)}}\right](c_k)$ for each $h\in [t]$.
    \item If $k$ has a unique child $j$,  the prover also sends the shares $ \pal{P}{j}{(i)}(c_k)$,  $\pbet{P}{j}{(i)}(c_k)$, $P\left[\widetilde{\beta_{j}^{(i)}}\right](c_k)$, $\pal{N_h}{j}{(i)}(c_k)$, $\pbet{N_h}{j}{(i)}(c_k)$ for each $h \in [t^2]$ and $N_h\left[\widetilde{\beta_{j}^{(i)}}\right](c_k)$ for each $h\in [t]$.
    \item If $k$ has a next sibling $k'$ in the ST, the prover also sends the shares $ \pal{P}{k'}{(i)}(c_k)$,  $\pbet{P}{k'}{(i)}(c_k)$, $P\left[\widetilde{\beta_{k'}^{(i)}}\right](c_k)$, $\pal{N_h}{k'}{(i)}(c_k)$, $\pbet{N_h}{k'}{(i)}(c_k)$ for each $h \in [t^2]$ and $N_h\left[\widetilde{\beta_{k'}^{(i)}}\right](c_k)$ for each $h\in [t]$.
    \item If $k=0$, for all $h\in [t]$, the prover sends the description of $M_{x,y}$ for $x\in [t^2]$ and $y\in[t]$.
\end{enumerate}

\item \textbf{Message} from the root of ST to the \textrm{prover}: Selects $t+1$ indices $h^{(i)}_0,\ldots, h^{(i)}_t \in [t^2]$, and an index $h^{(i)}\in [t]$ uniform at random and sends to the prover. \\

\item \textbf{Message} from the honest \textrm{prover} to each node $k$:  The prover tells $h^{(i)}_0,\ldots, h^{(i)}_t \in [t^2]$ and $h^{(i)}$ to each node.\\

\end{enumerate}

\end{tcolorbox}

\begin{tcolorbox}[title=Verification Protocol of $\Pi_2$,enhanced, breakable=true]
    \small
    For each $i\in \{2,\ldots, N\}$, the verification protocol is as follows.

    \textbf{(Communication)} Each node $k$ in one round:

\begin{enumerate}[label={\bf C\arabic*.}]
    \item For $h = h^{(i)}_0$, sends to its parent $p$ the shares $\left[ \pal{P}{k}{(i)} +  \pal{N_{h}}{k}{(i)}\right](c_k)$, $\left[ \pbet{P}{k}{(i)} +  \pbet{N_{h}}{k}{(i)}\right](c_k)$, $\left[ \pal{P}{p}{(i)} +  \pal{N_{h}}{p}{(i)}\right](c_k)$, $\left[ \pbet{P}{p}{(i)} +  \pbet{N_{h}}{p}{(i)}\right](c_k)$ and $\left[ \pbet{P}{k'}{(i)} +  \pbet{N_{h}}{k'}{(i)}\right](3-c_k)$ if $k$ has a next sibling $k'$. For each index $h \in [t^2]\backslash \{h^{(i)}_0,\ldots, h^{(i)}_t\}$, sends to $p$ the shares $\pal{N_h}{k}{(i)}(c_k)$, $\pbet{N_h}{k}{(i)}(c_k)$.
\end{enumerate}
\textbf{(Verification)} Each node $k\in V$ verifies the following. 

    \begin{enumerate}[label={\bf V\arabic*.}]
        \item Checks that for all $h\in [t^2]\backslash \{h^{(i)}_0,\ldots,h^{(i)}_t\}$, $$\pal{N_h}{k}{(i)}(0) + \sum_{j\in \text{child}(k)}\pbet{N_h}{j}{(i)}(0) = \pbet{N_h}{k}{(i)}(0).$$
        \item Checks that for all $h\in [t]\backslash \{h^{(i)}\}$, 
        $$N_h\left[\widetilde{\beta_k^{(i)}}\right](0) = \pal{N_h}{k}{(i-1)}(0)\cdot r_{i-1}^k +  \sum_{j\in \text{child}(k)} N_h\left[\widetilde{\beta^{(i)}_j} \right](0).$$
        \item If $k=0$, checks that for all $x \in [t^2]\backslash \{h^{(i)}_0,\ldots,h^{(i)}_t\}$ and $y\in [t]\backslash \{h^{(i)}\}$,
        $$ M_{x,y}(0) = N_x\left[ \alpha_0^{(i)} \right](0) + N_x\left[\beta_0^{(i)} \right](0) - N_y\left[\widetilde{\beta_0^{(i)}}\right](0).$$
        \item Interpolates the polynomials $\left[\pbet{P}{k}{(i)} + \pbet{N_{h_0^{(i)}}}{k}{(i)}\right](x)$,$\left[\pal{P}{k}{(i)} + \pal{N_{h_0^{(i)}}}{k}{(i)}\right](x)$ and $\left[\pbet{P}{k}{(i)} + \pbet{N_{h_0^{(i)}}}{k}{(i)}\right](x)$ for each $j\in \text{child}(k)$ and checks that 
            \begin{align*}
             \left[\pal{P}{k}{(i)} + \pal{N_{h_0^{(i)}}}{k}{(i)}\right](0)+ \sum_{j\in \text{child}(k)}\left[\pbet{P}{j}{(i)} + \pbet{N_{h_0^{(i)}}}{j}{(i)}\right](0) 
             \\= \left[\pbet{P}{k}{(i)} + \pbet{N_{h_0^{(i)}}}{k}{(i)}\right](0)
            \end{align*}
        \item Interpolates the polynomials $\left[P\left[\alpha_k^{(i-1)}\right] + N_{h^{(i)}}\left[\alpha_k^{(i-1)}\right]\right](x)$, $\left[P\left[\widetilde{\beta_k^{(i)}}\right] + N_{h^{(i)}}\left[\widetilde{\beta_k^{(i)}}\right]\right](x)$ and $\left[P\left[\widetilde{\beta_j^{(i)}}\right] + N_{h^{(i)}}\left[\widetilde{\beta_j^{(i)}}\right]\right](x)$ for each $j\in \text{child}(k)$ and checks that
        \begin{align*}
            \left[P\left[\alpha_k^{(i-1)}\right] + N_{h^{(i)}}\left[\alpha_k^{(i-1)}\right]\right](0) \cdot r_{i-1}^k + \sum_{j\in\text{child}(k)}
            \left[P\left[\widetilde{\beta_j^{(i)}}\right] + N_{h^{(i)}}\left[\widetilde{\beta_j^{(i)}}\right]\right](0) 
            \\ = \left[P\left[\widetilde{\beta_k^{(i)}}\right] + N_{h^{(i)}}\left[\widetilde{\beta_k^{(i)}}\right]\right](0) 
        \end{align*}
        \item If $k=0$, also verifies the following equality:
 
        \begin{align*}
        \left[\pal{P}{0}{(i)} +\pal{N_{h^{(i)}_0}}{0}{(i)}\right](0) +  \left[\pbet{P}{0}{(i)}+\pbet{N_{h^{(i)}_0}}{0}{(i)}\right](0) -  \left[P\left[\widetilde{\beta_0^{(i)}}\right] + N_{h^{(i)}}\left[\widetilde{\beta_0^{(i)}}\right]\right](0) \\= M_{h^{(i)}_0,h^{(i)}}(0)
        \end{align*}
    \end{enumerate}
    
\end{tcolorbox}

\begin{tcolorbox}[title=Prover-Verifier Interaction $\Pi_3$ for Final Check, enhanced, breakable=true]
\small

Rewrite the indices $h^{(N)}_1,\ldots, h^{(N)}_t \in [t^2]$ from the previous round by $\{1,\ldots,t\}$.

\begin{enumerate}[label={\it \arabic*.}]

\item \textbf{Message} from the root of ST to the \textrm{prover}: Selects $r_N\in \fieldq$ uniformly at random, and sends to the prover.\\

 \item \textbf{Local computation} of the honest \textrm{prover} for each node $k$:
\begin{enumerate}[label={\bf P\arabic*.}]
    \item Polynomial encryption \( P\left[\widetilde{\beta_k^{(N+1)}}\right]\).
    \item random polynomials $\left\{N_h\left[\widetilde{\beta_k^{(N+1)}}\right]\right\}_{h\in [t]}\subseteq_R\field_1[x]$ such that 
    $$N_{h}\left[\alpha_k^{(N)}\right](0) \cdot r_{N}^{k}+\sum_{j\in \text{child}(k)} N_h\left[\widetilde{\beta_{j}^{(N+1)}}\right](0) - N_h\left[\widetilde{\beta_k^{(N+1)}}\right](0) = 0$$ 
    where $\left\{N_{h}\left[\alpha_k^{(N)}\right](0)\right\}_{h\in [t]}$ were determined in the previous round. Especially for $k=0$, set $N_h\left[\widetilde{\beta_0^{(N)}}\right](0) =0$.
\end{enumerate}

\item \textbf{Message} from the honest \textrm{prover} to each node $k$:

\begin{enumerate}[label={\bf M\arabic*.}]
    \item Sends the shares $ P\left[\widetilde{\beta_k^{(N+1)}}\right](c_k)$, $N_h\left[\widetilde{\beta_k^{(N+1)}}\right](c_k)$ for each $h\in [t]$.
    \item If $p$ is its parent in the ST, the prover also sends the shares $ P\left[\widetilde{\beta_p^{(N+1)}}\right](c_k)$, $N_h\left[\widetilde{\beta_p^{(N+1)}}\right](c_k)$ for each $h\in [t]$.
    \item If $k$ has a unique child $j$,  the prover also sends the shares $ P\left[\widetilde{\beta_j^{(N+1)}}\right](c_k)$, $N_h\left[\widetilde{\beta_j^{(N+1)}}\right](c_k)$ for each $h\in [t]$.
    \item If $k$ has a next sibling $k'$ in the ST, the prover also sends the shares $ P\left[\widetilde{\beta_{k'}^{(N+1)}}\right](3-c_k)$, $N_h\left[\widetilde{\beta_{k'}^{(N+1)}}\right](3-c_k)$ for each $h\in [t]$.
\end{enumerate}

\item \textbf{Message} from the root of ST to the \textrm{prover}: Selects an index $h^{(N+1)} \in [t]$ uniform at random and sends to the prover. \\

\item \textbf{Message} from the honest \textrm{prover} to each node $k$:  The prover tells $h^{(N+1)}$ to each node.\\

\end{enumerate}

\end{tcolorbox}

\begin{tcolorbox}[title=Verification Protocol of $\Pi_3$,enhanced, breakable=true]
    \small
    For each $i\in \{2,\ldots, N\}$, the verification protocol is as follows.

    \textbf{(Communication)} Each node $k$ in one round:

\begin{enumerate}[label={\bf C\arabic*.}]
    \item Sends to its parent $p$ the shares $\left[ \pal{P}{k}{(N)} +  \pal{N_{h^{(N+1)}}}{k}{(N)}\right](c_k)$, $\left[ \pal{P}{p}{(N)} +  \pal{N_{h}}{p}{(N)}\right](c_k)$ and $\left[ \pal{P}{k'}{(N)} +  \pbet{N_{h}}{k'}{(N)}\right](3-c_k)$ if $k$ has a next sibling $k'$. For each index $h \in [t]\backslash \{h^{(N+1)}\}$, sends to $p$ the shares $\pal{N_h}{k}{(i)}(c_k)$, $\pbet{N_h}{k}{(i)}(c_k)$.
    
\end{enumerate}
\textbf{(Verification)} Each node $k\in V$ verifies the following. 

    \begin{enumerate}[label={\bf V\arabic*.}]
        \item Checks that for all $h\in [t]\backslash \{h^{(N+1)}\}$, 
        $$N_h\left[\widetilde{\beta_k^{(N+1)}}\right](0) = \pal{N_h}{k}{(N)}(0)\cdot r_{N}^k +  \sum_{j\in \text{child}(k)} N_h\left[\widetilde{\beta^{(N+1)}_j} \right](0),$$
        and if $k=0$, also checks that $N_h\left[\widetilde{\beta_k^{(N+1)}}\right](0) = 0.$
        \item Interpolates the polynomials $\left[P\left[\widetilde{\beta_k^{(N+1)}}\right] + N_{h^{(N+1)}}\left[\widetilde{\beta_k^{(N+1)}}\right]\right](x)$, $\left[P\left[\alpha_k^{(N)}\right] + N_{h^{(N+1)}}\left[\alpha_k^{(N)}\right]\right](x)$, $\left[P\left[\widetilde{\beta_j^{(N+1)}}\right] + N_{h^{(N+1)}}\left[\widetilde{\beta_j^{(N+1)}}\right]\right](x)$ for each $j\in \text{child}(k)$ and checks that 
            \begin{align*}
            \left[P\left[\alpha_k^{(N)}\right] + N_{h^{(N+1)}}\left[\alpha_k^{(N)}\right]\right](0)\cdot r_N^k +
            \sum_{j\in \text{child}(k)} \left[P\left[\widetilde{\beta_j^{(N+1)}}\right] +  N_{h^{(N+1)}}\left[\widetilde{\beta_j^{(N+1)}}\right]\right](0) \\
            = \left[P\left[\widetilde{\beta_k^{(N+1)}}\right] + N_{h^{(N+1)}}\left[\widetilde{\beta_k^{(N+1)}}\right]\right](0)
            \end{align*}
   
        \item If $k=0$, checks the following equality:
 
        \begin{align*}
        \left[P\left[\widetilde{\beta_0^{(N+1)}}\right] + N_{h^{(N+1)}}\left[\widetilde{\beta_k^{(N+1)}}\right]\right](0) = f(r_1,\ldots,r_N)
        \end{align*}
        by using oracle access to $f$. 
    \end{enumerate}
    
\end{tcolorbox}

\subsection{Completeness and Soundness}
Let us now analyze the protocol presented in~\Cref{subsec:final-protocol}. Assume that the prover follows the honest strategy. 
\begin{description}
    \item[Protocol $\Pi_1$:] In {\bf V2.} of $\Pi_1$, each node $k$ checks $\alpha_k^{(1)}+\sum_{j\in\text{child}(k)}\beta_j^{(1)} = \beta_k^{(1)}$, and in {\bf V3.} of $\Pi_1$, the node $k=0$ also checks $\alpha_0^{(1)}+\beta_0^{(1)}=0$. 
    \item[Protocol $\Pi_2$:] In {\bf V4.} and {\bf V5.} of $\Pi_2$, each node $k$ checks $\alpha_k^{(i)}+\sum_{j\in\text{child}(k)}\beta_j^{(i)} = \beta_k^{(i)}$ and $\alpha_k^{(i-1)}\cdot r_{i-1}^k + \sum_{j\in \text{child}(k)}\widetilde{\beta_j^{(i)}} = \widetilde{\beta_k^{(i)}}$, respectively. In {\bf V6.} of $\Pi_2$, the node $k=0$ also checks $\alpha_0^{(i)}+\beta_0^{(i)}=\widetilde{\beta_0^{(i)}}$. 
    \item[Protocol $\Pi_3$:] In {\bf V2.} of $\Pi_3$, each node $k$ checks $\widetilde{\beta_k^{(N+1)}} = \alpha_k^{(N)}\cdot r_n^k + \sum_{j\in\text{child}(k)}\widetilde{\beta_j^{(N+1)}}$, and in {\bf V3.} of $\Pi_3$, the node $k=0$ also checks $\widetilde{\beta_0^{(N+1)}}=f(r_1,\ldots,r_N)$. 
\end{description}
To summarize, $\Pi_1$, $\Pi_2$, $\Pi_3$ exactly simulate the distributed Sumcheck protocol.
The completeness thus follows directly from the completeness of the Sumcheck protocol. 

For soundness, we first observe that if the malicious prover sends the random polynomials that do not satisfy the conditions checked in the verification protocol (more precisely, {\bf V1.} in the verification protocol of $\Pi_1$, {\bf V1., V2., V3} in the verification protocol of $\Pi_2$, and {\bf V1.} in the verification protocol of $\Pi_3$), the acceptance probability is at most $1/t$. We thus focus on the case that the prover sends the correct random polynomials. In this case, by construction of our protocol the nodes check the same conditions checked in the distributed Sumcheck described in Section~\ref{sec:distsc}, and thus the acceptance probability is at most $\frac{N\cdot\mathrm{deg}(F)}{q}$ where $\mathrm{deg}(F)$ is individual degree of $F$, by the soundness of Sumcheck~(Theorem~\ref{thm:completeness-soundness-sumcheck}). 

\paragraph*{The existence of the simulator.} With these detailed analysis, it is straightforward to observe that all the messages received from the prover and the neighbors are masked by random polynomials, hence uniformly at random. Therefore, the simulator merely generates uniform random values for its output (and a single oracle query to $F$), which exactly simulates the distribution of the real transcript. It is obvious that the verification algorithm is in the class $\mathcal{C}[\bigo{1}, \bigo{N \log q}]$ since all the verification parts in the protocol $\Pi_1$, $\Pi_2$, and $\Pi_3$ can be parallelized. Moreover, it is easy to see the simulator is also in the same class: it require only $\bigo{1}$ rounds of communication between neighboring nodes to allow each node to learn the IDs of its parent and children, as well as its next sibling (if any) in the spanning tree. The rest of the simulator simply samples uniformly at random values (satisfying required conditions), and a single oracle query to $F$ used for Final Check.

This completes the proof of~\Cref{theo:zksc}.

\section{Applications of Distributed Sumcheck}\label{sec:appsc}

In this section, we prove \Cref{theo:informalzkapp1} and \Cref{theo:informalzkapp2}, and provide formal statements of these results in terms of the statistical zero-knowledge dIP (dStatZK) class defined in \Cref{def:statzk}. 

To address these applications, we do not treat oracle access to the polynomials defined in \nonthreecol\ and \subgraphcounting\ as a black box; instead, we describe how these evaluations can be performed. Recall that we assume that each node knows the identifier of its parent in an arbitrary rooted spanning tree written in the input label, as mentioned in the definition of distributed languages in~\Cref{subsec:dipandzk}. Therefore, we can explicitly evaluate each polynomial by the nodes along the spanning tree with the assistance of the prover (see \Cref{sec:appsc} for details). Consequently, to establish \Cref{theo:informalzkapp1} and \Cref{theo:informalzkapp2}, it remains only to show how these evaluations can be simulated by the simulator. The key result of this section, and the reason why our protocol achieves statistical zero-knowledge rather than perfect zero-knowledge, is that we can guarantee only that each message used to compute one evaluation of the polynomials is statistically indistinguishable from the uniform distribution.

\subsection{\nonkcol}\label{subsec:dip-for-non-3-col}
\label{subsec:non3col}
In this section, we study the problem \nonkcol. The formal definition of the problem is below. 
\begin{algorithmbox}{\nonkcol}{}{
    \textbf{Input.} An $n$-node graph $G=(V,E)$, with unique identifiers $\{0,\ldots, n-1\}$. Each node knows fixed integer $k$ and the ID of its parent in an arbitrary spanning tree rooted at node with ID $0$.\\
    \textbf{Decision problem:} Decide if the graph $G$ does not admit a $k$-coloring.
}
\end{algorithmbox}

Here given a graph $G=(V,E)$, a $k$-coloring of $G$ is a function $c\colon V\to \{1,\dots,k\}$ such that for all $\{u,v\}\in E$, $c(u)\neq c(v)$. 

We prove the following theorem.

\begin{theorem}\label{theo:finalnon3zk}
   Let $q\in n^{\omega(1)}$ be any prime. The language $\nonthreecol$ is in $$\mathrm{dStatZK}\left[\bigo{n},\bigo{\log q}, C[\bigo{1},\bigo{n\log q} ]\right].$$
\end{theorem}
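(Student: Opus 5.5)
We instantiate \Cref{theo:zksc} with the polynomial $P_G=T\cdot S$ from \Cref{subsec:overview}. $P_G$ has $N=3n$ variables $(r_v,g_v,b_v)_{v\in V}$, and we work over $\fieldq$ with $q\in n^{\omega(1)}$ prime.

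\textbf{Correctness of the encoding.} First we show that for $x\in\{0,1\}^{3n}$ we have $P_G(x)=1$ if $x$ encodes a proper 3-coloring, and $P_G(x)=0$ otherwise.
\begin{itemize}
\item $A_vB_v=1$ exactly when node $v$ receives exactly one color.
\item Each factor $W_e$ vanishes exactly when both endpoints of $e$ share that color.
\end{itemize}
Hence $\sum_{x}P_G(x)$ equals the number $C(G)$ of proper 3-colorings. Since $C(G)\le 3^n$, this integer sum could be a nonzero multiple of $q$. So the soundness argument must ensure that $C(G)\not\equiv 0 \pmod q$ whenever $C(G)>0$. We handle this by having the root pick $q$ uniformly among primes in a range $[Q,2Q]$ with $Q=n^{\omega(1)}$, or by letting $q$ be chosen adversarially but fixed and large. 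The cleanest option is random $q$: $C(G)$ has at most $n\log 3/\log Q$ prime divisors $\ge Q$, while the Prime Number Theorem gives $\Omega(Q/\log Q)$ primes in the range, so a collision has negligible probability. If the statement intends $q$ fixed, I would instead note that the verifier can sample $q$ itself in the first round, which costs $O(\log q)$ bits.

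\textbf{Degrees and soundness.} The individual degree of each variable in $P_G$ is $O(\deg(v))\le O(n)$, which matches the hypothesis of \Cref{sec:distsc}. The total degree is $d=O(|E|+n)=O(n^2)$, so \Cref{theo:zksc} gives soundness error $Nd/q+1/t=O(n^3)/q+1/t$. With $q\in n^{\omega(1)}$ the first term is negligible, and taking $t=3$ gives soundness $\le 1/3$. The protocol uses $O(N)=O(n)$ rounds with $O(\log q)$-bit messages.

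\textbf{Replacing the oracle.} The final check needs $P_G(r)$ at a random point $r\in\fieldq^{3n}$. Each node $v$ can locally compute $L_v:=A_vB_v\prod_{u\in N(v),u>v}W^{red}_{uv}W^{green}_{uv}W^{blue}_{uv}$, since after the $O(1)$-round verification phase it knows its neighbours' coordinates of $r$ (these are broadcast). The product $\prod_v L_v$ is then aggregated up the tree with prover-given partial products, and each partial product is checked multiplicatively as in the additive aggregation. These partial products would leak information, so we hide them with the same masking machinery. Concretely, we convert the final equality $\widetilde{\beta_0^{(N+1)}}=P_G(r)$ into an additive tree check on masked values, using logarithms or multiplicative masks $\mu_v\in\fieldq^\times$ with $\prod\mu_v=1$, verified by cut-and-choose.

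\textbf{Simulation (main obstacle).} In the simulator, every masked share is uniform exactly as before; the one genuinely new quantity is the unmasked value $P_G(r)$ exposed at the root. We must show that its distribution over random $r$ is statistically close to $\Unif(\fieldq)$, so the simulator can output a uniform value in its place. I would write $P_G(r)$ as a product of the factors $A_v$, $B_v$, $W_e$. Several of these factors are independent and nearly uniform. For instance, I would take $A_v$ for a set of nodes that is an independent set in $G$, or more simply condition on all variables except one node's $r_v$: the dependence on $r_v$ of a single factor $1-r_vr_u$ is affine, hence uniform given $r_u\neq0$. Then \Cref{lem:product-independent} lets us strip the remaining factors, and \Cref{lem:partition} handles the low-probability events that some other factor vanishes or that $r_u=0$. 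Controlling these events is the delicate step: the probability of a zero factor is at most $O(n^2)/q$ by a union bound. This gives total variation $O(n^2/q)$, which is negligible for $q\in n^{\omega(1)}$. \Cref{lem:product-uniforms} serves as a fallback if a cleaner argument treats $P_G(r)$ as a product of $O(n^2)$ nearly independent uniforms. Summing this error with the perfect simulation of all other messages yields statistical indistinguishability per node.
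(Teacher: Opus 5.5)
Your outline follows the paper's route: the arithmetization $P_G=T\cdot S$, a random prime so that the number of colorings survives reduction mod $q$, \Cref{theo:zksc}, tree aggregation in place of the oracle, and replacing the evaluation by a uniform value. However, the step you flag as the main obstacle fails as written. \Cref{lem:product-independent} can only strip a factor $\mathsf Z$ that is independent of the quantities being compared, and in $P_G$ no such factor exists. Each variable $c_v^i$ occurs in $A_v$, in $B_v$, and in $W^i_e$ for every edge $e$ at $v$. After fixing all variables but $r_v$, $P_G$ is a polynomial of degree up to $\deg(v)+3$ in $r_v$, not an affine function of $r_v$ times something independent of $r_v$. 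A one-variable polynomial of degree at least $2$ is typically at total variation distance $\Omega(1)$ from uniform (e.g.\ $x\mapsto x^2$ misses every non-residue). Since the distance of the mixture is only bounded by the average of these conditional distances, no argument that uses the randomness of a single variable can give $o(1)$; you must use the randomness of many variables jointly.

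The paper does this in \Cref{thm:bias-P}, and only for $T$. Every node computes $S(\mathbf x)$ itself from the public point, so near-uniformity of $S$ is never needed; exposing the full $P_G(r)$ forces you to handle $A_v,B_v$ as well. The paper swaps the $L=km$ edge factors $1-c_u^ic_v^i$ one at a time for fresh uniforms. It charges each swap $O(1/q)$ by conditioning on the endpoint pair $(c_u^i,c_v^i)=(s,t)$ via \Cref{lem:partition}. Under this conditioning the other factors through $u$ or $v$ become uniform affine images of their remaining variable, and only $st\in\{0,1\}$ or a vanishing factor at a common neighbour must be paid for. It then closes with \Cref{lem:product-uniforms} on the product of $L$ i.i.d.\ uniforms, for $O(n^2/q)$ overall. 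Your ``fallback'' sentence names the end point of this hybrid but not the hybrid itself, which is where all the work lies.

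Two further points. First, your masking of the partial products (multiplicative masks with $\prod_v\mu_v=1$, or logarithms) is not in the paper and is not actually constructed. The commitments behind \Cref{theo:zksc} are additive degree-one sharings opened at $0$; multiplicative relations are not linear in them, and discrete logarithms are undefined at a zero factor. The paper simply sends the $T_u(\mathbf x)$ in the clear and claims each is near-uniform by the same lemma.

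Second, a caveat that applies to the paper's argument as much as to yours. Every node receives the challenges $r$, so the root's view contains the \emph{pair} $(r,P_G(r))$, and marginal near-uniformity of $P_G(r)$ says nothing about its joint law with $r$. Indeed, $P_G(r)$ is a deterministic function of $r$ and of the graph far from the root. Take two non-$3$-colorable graphs that coincide (with IDs and tree pointers) on a large ball around the root but have different edge sets, so $P_G\neq P_{G'}$. By Schwartz--Zippel, the pairs $(r,P_G(r))$ and $(r,P_{G'}(r))$ are at total variation distance $1-O(n^2/q)$, while an $O(1)$-round simulator produces the same root output distribution on both. So replacing $P_G(r)$ by an independent uniform value does not, by itself, give statistical closeness of the view.
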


Before proving \Cref{theo:finalnon3zk}, we prove that \nonkcol\ problem can be reduced to a Sumcheck instance.

\begin{lemma}\label{lem:rednonkcol}
    Given an instance $(G,I)$, where $n=|V(G)|$ and $m=|E(G)|$, there exists a polynomial $P_G\colon \{0,1\}^{N}\to\{0,1\}$ such that
    \[G \text{ is not }k \text{ colorable}\iff \sum_{x\in\{0,1\}^N}P_G(x) = 0,\]
    where $N$ is $\bigo{n+m}$.
\end{lemma}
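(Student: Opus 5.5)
We will take the arithmetization of the standard reduction from $k$-coloring to SAT that the overview sketches for $k=3$. Introduce, for every node $v\in V$ and every color $c\in[k]$, a Boolean variable $x_{v,c}$, intended to mean "node $v$ receives color $c$". This gives $N=kn$ variables, which is $\bigo{n}\subseteq\bigo{n+m}$ for fixed $k$. The polynomial will be $P_G=T\cdot S$, with two parts:
\[
T(x)=\prod_{\{u,v\}\in E}\prod_{c\in[k]}\bigl(1-x_{u,c}x_{v,c}\bigr),
\]
\[
S(x)=\prod_{v\in V}\Bigl(1-\prod_{c\in[k]}(1-x_{v,c})\Bigr)\prod_{c<c'}\bigl(1-x_{v,c}x_{v,c'}\bigr).
\]
For $k=3$ these are exactly the factors $W_e$, $A_v$, $B_v$ from the overview. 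If we prefer the variable count to literally track $n+m$ (e.g.\ for the later degree bookkeeping), auxiliary edge variables could be added; this is not needed for the equivalence.

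The core step is to check that $P_G$ takes only the values $0$ and $1$ on $\{0,1\}^N$ and is the indicator of proper $k$-colorings. Fix $x\in\{0,1\}^N$ and consider each factor separately.
\begin{itemize}
\item The factor $1-\prod_c(1-x_{v,c})$ equals $1$ iff some $x_{v,c}=1$, that is, iff $v$ has at least one color.
\item The factor $\prod_{c<c'}(1-x_{v,c}x_{v,c'})$ equals $1$ iff no two colors are on at $v$, that is, iff $v$ has at most one color.
\item The factor $\prod_c(1-x_{u,c}x_{v,c})$ equals $1$ iff $u$ and $v$ share no color.
\end{itemize}
Each factor is itself $0/1$-valued on Boolean inputs. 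Hence $P_G(x)\in\{0,1\}$, and $P_G(x)=1$ exactly when $x$ encodes a function $c\colon V\to[k]$ with $c(u)\neq c(v)$ on every edge. Conversely, every proper coloring gives exactly one such $x$, so the encoding is a bijection between proper colorings and points where $P_G=1$. Therefore
\[
\sum_{x\in\{0,1\}^N}P_G(x)=\#\{\text{proper }k\text{-colorings of }G\}.
\]

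The equivalence now follows. Over the integers, or over $\N$ as in the statement, this sum is a nonnegative integer, and it equals $0$ iff $G$ has no $k$-coloring.

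The main subtlety I anticipate is not the combinatorics but how the field enters. The lemma as stated is over the integers, where it is immediate. When it is later used over $\fieldq$, the count (which can be as large as $k^n$) might vanish modulo $q$, so the "$\Leftarrow$" direction could fail. I will keep the lemma an integer statement and remark that the modular issue must be handled separately in the proof of \Cref{theo:finalnon3zk}, by choosing the prime $q$ at random from a suitable range so that a nonzero count $\le k^n$ is divisible by $q$ only with negligible probability. I would also note that the individual degree of each $x_{v,c}$ is at most $\deg(v)+k$, which is $\bigo{n}$, as required by the \sumcheckproblem.
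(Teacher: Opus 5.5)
Your proposal is correct and follows the paper's proof. It uses the same $N=kn$ variables $c_v^i$ and the same arithmetization $P_G=T\cdot S$ of the standard $k$-coloring-to-SAT reduction, with edge factors $1-c_u^ic_v^i$ and vertex factors $A_v$, $B_v$. Your factor-by-factor check that $P_G$ is the $0/1$ indicator of proper $k$-colorings, so that the sum counts them, is a cleaner version of the paper's one-line justification. The modular issue you flag is exactly what the paper handles afterwards in the proof of \Cref{theo:finalnon3zk}, by choosing the prime $q$ at random.
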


\begin{proof}
Consider the classical reduction of \nonkcol\ to $k$-SAT: Given a graph  $G~=~(V,E)$ with $n$ vertices and $m$ edges, consider the following $k$-CNF formula $\phi(G)$ defined as follows.
\begin{itemize}
    \item Variables of $\phi(G)$: For each node $v\in V$ the formula $\phi$ has $k$ variables $c_v^1,\dots, c_v^k$ (one for each color).
    \item Clauses of $\phi(G)$: The formula $\phi(G)$ consists of two types of clauses:

\begin{enumerate}
\item Color Selection Constraints for each vertex $v$:

\begin{equation*}
\underbrace{\bigvee_{i\in [k]}c_v^{i}}_{\text{at least one color}} \land
\underbrace{\bigwedge_{i\neq j, i,j\in [k]}(\neg c_v^{i}\vee \neg c_v^{j})}_{\text{(at most one color)}}
\end{equation*}

\item Adjacency Constraints for each edge $(u,v) \in E$:
\begin{equation*}
\bigwedge_{i,j\in[k], i\neq j}(\neg c_v^{i} \lor \neg c_v^j)
\end{equation*}
\end{enumerate}
\end{itemize}
By construction of $\phi(G)$ from the graph $G$, the following known fact holds.

\begin{fact}
$G$ is $k$-colorable iff $\phi(G)$ is satisfiable.
\end{fact}

The formula $\phi$ has $N=k\cdot n$ variables and $M=(k+1)\cdot n + k\cdot m$ clauses. Given such a $k$-CNF formula $\phi$ with $M$ clauses over $N$ variables, we construct a multivariate polynomial $P_{G}: \{0,1\}^N \to \{0,1\}$ such that:
\[
P_G(x_{1},\dots, x_N) = 1 \iff \phi \text{ is satisfied by assignment } x_{1},\dots, x_{N}
\]
where $x\in\{0,1\}^N$ is a possible evaluation of the $k\cdot n$ variables $\{c_v^{i}\}_{v\in V, i\in [k]}$. The polynomial $P_G$ is constructed as the usual arithmetization of the formula $\phi$.

\begin{itemize}
\item
For each edge \(e=(u,v)\in E\) with $\id(u)<\id(v)$ and color $i\in [k]$, define the polynomial
\[
   W^{i}_e(c_u^{i}, c_v^{i}):=1-c_u^{i}c_v^{i}.
\]

and then the polynomial $T(x)\colon \{0,1\}^N\to \{0,1\}$ given by

$$T(x):=\prod_{i\in [k]}\prod_{e=uv\in E}W^{i}_e(c_u^{i}, c_v^{i})$$

Where $x\in \{0,1\}^N$ contains the $N$ variables of the form $c_v^{i}$ for all $v\in V$ and $i\in [k]$.
\item
For each vertex $v\in V$, define the polynomials $A_v, B_v\colon \{0,1\}^k\to \{0,1\}$ as
\[
   A_v(c_v^1,\dots, c_v^k):=1- \prod_{i\in [k]}(1-c_v^{i}),\quad\text{and \quad}
   B_v:(c_v^{1},\dots, c_v^k)=\prod_{i,j\in [k], i<j}(1-c_v^{i} c_v^j).
\]
and the polynomial $S\colon \{0,1\}^N\to \{0,1\}$ as
\[
   S(x):=\prod_{v\in V}A_v(c_v^1,\dots, c_v^k)\cdot B_v(c_v^1,\dots, c_v^k).
\]
\end{itemize}
Finally, defining the polynomial $P_G(x)\colon\{0,1\}^N\to\{0,1\}$ as
$
   P_{G}(x):=T(x)\cdot S(x)
$
 yields the following claim, which concludes the proof. 

\begin{claim}\label{lemm:ssfor3col}
    The graph $G$ is not $k$-colorable iff 

    \[\sum_{x\in \{0,1\}^{N}}P_{G}(x) = 0\]
\end{claim}

\begin{proof}[Proof of \Cref{lemm:ssfor3col}]
    Direct from \Cref{fact:interpol}, since each evaluation $P_G(x)$ for a vector $x\in\{0,1\}^N$ correspond to the evaluation $\phi(x)$ and this formula is satisfiable iff $x$ is a $k$-coloring of the graph.
\end{proof}

\end{proof}

Given this derivation, we proceed to prove \Cref{theo:finalnon3zk}.

\begin{proof}[Proof of \Cref{theo:finalnon3zk}.] Let $G=(V,E)$ by an $n$-node graph, and consider the polynomial \( P_{G} \) given by \Cref{lem:rednonkcol}. We can employ the Sumcheck protocol to verify the absence of a satisfying assignment, i.e., whether:
\[
\sum_{\mathbf{x} \in \{0,1\}^N} P_G(\mathbf{x}) = 0.
\]

In order to use \Cref{theo:zksc}, we need that the addition $\sum_{\mathbf{x} \in \{0,1\}^N} P_G(\mathbf{x})$ is at most a value $k=\bigo{poly(n)}$\footnote{In no-instances, the sum can be very large} since otherwise cannot be encoded in $\bigo{\log n}$, but at this point the sum of $P_G$ can be as large as $2^{\Omega(n)}$. 
 Therefore, in the Sumcheck protocol for the polynomial $P_G$, we have to assume a field $\mathbb{F}_q$ of size $q= \Omega(2^n)$ to ensure
 \begin{align}\label{eq}
     \sum_{\mathbf{x}\in\{0,1\}^N}P_G(\mathbf{x})=0 \Leftrightarrow \sum_{\mathbf{x}\in\{0,1\}^N}P_G(\mathbf{x})=0 \text{ mod $q$}.
 \end{align}
This would violate our $\mathcal{O}(\log n)$ message size constraint. 
To address this issue, we proceed as follows: 
instead of selecting a prime $q$ of exponential size in $n$, 
we choose a prime $q$ uniformly at random from the interval $[R,2R]$, 
where $R = \mathrm{poly}(n)$, and apply the Prime Number Theorem.

\begin{theorem}[Prime Number Theorem]
Let $\pi(x)$ denote the prime-counting function, i.e., the number of primes less than or equal to $x$. Then
\[
 \pi(x) \sim \frac{x}{\log x}\quad\text{ as }x\to \infty
\]

\end{theorem}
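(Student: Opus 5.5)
The Prime Number Theorem is a classical result that the paper only invokes, so the natural route is to reproduce the standard short analytic proof due to Newman (in Zagier's presentation) rather than anything specific to the distributed setting. We work with the Chebyshev function $\vartheta(x)=\sum_{p\le x}\log p$, the sum running over primes $p$. The first reduction is that $\pi(x)\sim x/\log x$ is equivalent to $\vartheta(x)\sim x$. One direction is immediate from $\vartheta(x)\le \pi(x)\log x$. For the other, fix $\varepsilon>0$ and use
\[
\vartheta(x)\ \ge \sum_{x^{1-\varepsilon}<p\le x}\log p\ \ge\ (1-\varepsilon)\log x\,\bigl(\pi(x)-x^{1-\varepsilon}\bigr).
\]
Alongside this we record the elementary Chebyshev bound $\vartheta(x)=\bigo{x}$. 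It follows from the fact that $\binom{2m}{m}\le 4^m$ is divisible by every prime in $(m,2m]$, which gives $\vartheta(2m)-\vartheta(m)\le m\log 4$; telescoping over dyadic scales gives the bound.

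Next we introduce the Riemann zeta function $\zeta(s)=\sum_{n\ge 1}n^{-s}$ and $\Phi(s)=\sum_p \log p\, p^{-s}$, both for $\operatorname{Re}s>1$. We establish the following facts in order.
\begin{enumerate}[label=(\roman*)]
\item The Euler product $\zeta(s)=\prod_p(1-p^{-s})^{-1}$ holds for $\operatorname{Re}s>1$.
\item The function $\zeta(s)-\frac{1}{s-1}$ extends holomorphically to $\operatorname{Re}s>0$. This follows from writing $\sum_n\int_n^{n+1}(n^{-s}-u^{-s})\,du$.
\item $\zeta(s)\neq 0$ on $\operatorname{Re}s\ge 1$, and $\Phi(s)-\frac{1}{s-1}$ is holomorphic on $\operatorname{Re}s\ge 1$. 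This uses the logarithmic derivative identity $-\zeta'/\zeta=\Phi+\sum_p \log p/(p^s(p^s-1))$, where the last sum is holomorphic for $\operatorname{Re}s>1/2$.
\end{enumerate}

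The main obstacle is the nonvanishing of $\zeta$ on the line $\operatorname{Re}s=1$. The plan is the classical order-of-zeros argument. Suppose $\zeta$ had a zero of order $\mu$ at $1+i\alpha$ and a zero of order $\nu$ at $1+2i\alpha$, with $\alpha\neq0$ real. Then
\[
\lim_{\varepsilon\downarrow 0}\varepsilon\,\Phi(1+\varepsilon+ik\alpha)=-\mu,\ -\nu,\ 1 \qquad\text{for } k=\pm1,\ \pm2,\ 0 \text{ respectively}.
\]
On the other hand, the sum $\sum_{r=-2}^{2}\binom{4}{2+r}\Phi(1+\varepsilon+ir\alpha)=\sum_p \log p\,p^{-1-\varepsilon}(p^{i\alpha/2}+p^{-i\alpha/2})^4$ is nonnegative. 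Multiplying by $\varepsilon$ and letting $\varepsilon\to0$ gives $6-8\mu-2\nu\ge 0$. This forces $\mu=0$.

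Finally, we apply Newman's analytic theorem. It states that if $f$ is bounded and locally integrable on $[0,\infty)$ and $g(z)=\int_0^\infty f(t)e^{-zt}\,dt$ extends holomorphically to $\operatorname{Re}z\ge 0$, then $\int_0^\infty f(t)\,dt$ converges. The proof is a contour integral over a truncated semicircle with the weight $(1+z^2/R^2)$, followed by letting $T\to\infty$ and then $R\to\infty$. We take $f(t)=\vartheta(e^t)e^{-t}-1$, which is bounded by Chebyshev's bound. Its Laplace transform is $\Phi(z+1)/(z+1)-1/z$, which is holomorphic on $\operatorname{Re}z\ge0$ by step (iii). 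Newman's theorem therefore gives convergence of $\int_1^\infty (\vartheta(x)-x)x^{-2}\,dx$.

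A monotonicity argument then yields $\vartheta(x)\sim x$. If $\vartheta(x)\ge\lambda x$ for some $\lambda>1$ and arbitrarily large $x$, then $\vartheta$ being nondecreasing makes the integral over $[x,\lambda x]$ bounded below by a fixed positive constant. This contradicts the Cauchy criterion. The symmetric case $\vartheta(x)\le\lambda x$ with $\lambda<1$ is handled the same way on $[\lambda x,x]$. Combined with the first reduction, this proves $\pi(x)\sim x/\log x$.
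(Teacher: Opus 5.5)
The paper does not prove this statement; it cites the Prime Number Theorem as a classical black box. Its only use is the claim that $[R,2R]$ contains $\Theta(R/\log R)$ primes, so that a random prime $q$ from that interval rarely divides $\sum_{\mathbf{x}} P_G(\mathbf{x})$. There is therefore no argument in the paper to match. Your proposal supplies the Newman--Zagier proof, and it is correct as outlined:
\begin{itemize}
\item the Chebyshev bound via $\binom{2m}{m}\le 4^m$;
\item the continuation of $\zeta(s)-\frac{1}{s-1}$ through $\sum_n\int_n^{n+1}(n^{-s}-u^{-s})\,du$;
\item the identity $-\zeta'/\zeta=\Phi+\sum_p \log p/(p^s(p^s-1))$;
\item the inequality $6-8\mu-2\nu\ge 0$;
\item the transform $\Phi(z+1)/(z+1)-1/z=\bigl(h(z+1)-1\bigr)/(z+1)$ with $h$ holomorphic, hence holomorphic on $\mathrm{Re}\,z\ge 0$;
\item the final monotonicity step on $[x,\lambda x]$ and $[\lambda x,x]$.
\end{itemize}

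Two small points should be tightened.
\begin{itemize}
\item The limits for $k=-1,-2$ require $\Phi(\bar s)=\overline{\Phi(s)}$, which holds because the Dirichlet coefficients are real; say so. You should also allow $\nu=0$, since $1+2i\alpha$ need not be a zero.
\item In the reduction you only need the implication $\vartheta(x)\sim x\Rightarrow\pi(x)\sim x/\log x$, and that single implication uses both of your inequalities. The bound $\vartheta(x)\le\pi(x)\log x$ gives $\liminf \pi(x)\log x/x\ge 1$, and the $x^{1-\varepsilon}$ bound gives $\limsup\le 1/(1-\varepsilon)$. They are not one inequality per direction, as you phrased it.
\end{itemize}

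On what each approach buys: citing is the natural choice for the paper, while your proof is self-contained at the cost of complex analysis. Note that the application needs far less than asymptotic equality. It uses only the lower bound $\pi(2R)-\pi(R)=\Omega(R/\log R)$, since the upper bound plays no role in the divisibility estimate. That lower bound follows from an Erd\H{o}s-style refinement of the same central-binomial argument you already use:
\begin{itemize}
\item primes in $(2m/3,m]$ do not divide $\binom{2m}{m}\ge 4^m/(2m+1)$;
\item primes $p\le\sqrt{2m}$ contribute at most $2m$ each;
\item the remaining primes up to $2m/3$ contribute at most $4^{2m/3}$.
\end{itemize}
Together these give $\vartheta(2m)-\vartheta(m)\ge \frac{m}{3}\log 4-O(\sqrt{m}\log m)$, hence $\Omega(m/\log m)$ primes in $(m,2m]$. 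So the elementary half of your argument already suffices for the paper's purpose, and the analytic machinery is needed only for the full asymptotic.
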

By the Prime Number Theorem, the interval $[R,2R]$ contains 
$\Theta\!\left(\tfrac{R}{\log R}\right)$ primes. 
Furthermore, since 
\[
\sum_{\mathbf{x}\in\{0,1\}^N} P_G(\mathbf{x}) \leq 2^N,
\]
it can have at most $\bigo{N}$ distinct prime factors. 
Combining these two facts, we obtain

\[
\mathbb{P}_{q}\!\left[\, q \;\text{divides } \sum_{\mathbf{x}\in\{0,1\}^N} P_G(\mathbf{x}) \,\right] 
   \leq \Theta\!\left(\frac{N}{R/\log R}\right) 
   \leq \frac{1}{\mathrm{poly}(n)}.
\]
Therefore, condition~\eqref{eq} holds with high probability over the random choice of $q$, 
and representing each field element in $\mathbb{F}_q$ requires only 
$\bigo{\log n}$ bits (rather than $\mathrm{poly}(n)$ bits). Then we conclude by \Cref{theo:zksc}, since verifying whether a graph is in \nonkcol is equivalent to checking whether Equation~\eqref{eq} holds in the Sumcheck-instance $(G,P_G,q,0)$,  and replacing the oracle access by the following distributed oracle $P_G$.

\paragraph*{Distributed oracle access to $P_G$.}\label{para:non3col}
The oracle access is defined as follows. 
Given inputs $\textbf{x} = x_1,\ldots,x_N$, each node can locally compute the polynomial $S(\textbf{x})$ without interaction with its neighbors. Then, it only remains to specify how the value $T(\textbf{x})$ will be computed.

Given $\textbf{x}$, each node $u$ can compute locally the following part of $T(\textbf{x})$ :
\[
T_u(\textbf{x}) := \prod_{\substack{e=(u,v)\in E \\ v \in T_u,\; \id(u)<\id(v)}}\prod_{i\in [k]} 
   W^{i}_e(\textbf{x})
\]
and it can forward it to its parent in the spanning tree. 
Then, for each child $v$ sending the value $T_v(\textbf{x})$, node $u$ verifies the following consistency condition:
\[
T_u(\textbf{x}) = \prod_{v \in \mathrm{Child}(u)} T_v(\textbf{x}) \cdot
       \prod_{\substack{e=(u,v)\in E \\ v \in N(u),\; \id(u)<\id(v)}}\prod_{i\in[k]}
       W^{i}_e 
\]

 Therefore, at the end of this computation, if $u$ is the root node of the spanning tree, $T_u$ is equivalent to $T(x_1,\dots, x_N)$. Since $S(x_1,\dots, x_N)$ can be computed locally by $u$, the root node can compute $P_G(x_1,\dots, x_N)$.

{\it Statistical simulator.} Each node $v\in V$ throughout the protocol receives two kind of messages:

\begin{itemize}
    \item The messages given by the prover in Steps 1. and 2. of Sumcheck protocol, which by definition are all of them uniform random variables of some known (for node $v$) set.
    \item The messages $T_u(\textbf{x})$ given by each child $u$ of the node $v$, as specified in the distributed oracle access to polynomial $P_G$.
\end{itemize}

And all the information learned by node $v$ is either these messages themselves or some function of these messages. 

To conclude, we prove that the evaluation of the polynomial $P_G(x):=T(x)\cdot S(x)$ at a randomly chosen point (as described by the above distributed oracle) is statistically indistinguishable from the uniform distribution and therefore the oracle query can be replaced with an uniform random value. To this end, observe that the vertex factor $S(x)$ can be determined without any information, and therefore each node can locally simulate uniform random variables and evaluate them in its $S(x)$. The following lemma shows that the evaluation of each $T_v(x)$ can be replaced with an uniform random value in the simulator, as in fact we prove that the complete polynomial $T(\textbf{x})$ is statistically indistinguishable from the uniform distribution.

\begin{lemma}\label{thm:bias-P}
Let $G=(V,E)$ be a graph with $n=|V|,\;m=|E|$ and an integer $k$.
Draw independently
$c^1_v,\dots,c_v^k\sim\Unif(\field_q)$ for each node $v\in V$.
Consider the product of edge-factors $T$ appear in the polynomial $P_G$ above.
Then
\[
   \TVc{T}{\Unif(\field_q)}
     \le O\left(\frac{n^2}{q}\right).
\]
\end{lemma}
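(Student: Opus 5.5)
The approach is to reduce $T$ to a product in which at least one factor is an independent uniform element of $\field_q$, and then to use the elementary fact that $\mathsf Z\cdot\mathsf R$ is exactly uniform whenever $\mathsf Z\sim\Unif(\field_q)$ is independent of $\mathsf R$ and $\mathsf R\neq 0$. This gives $\TVc{\mathsf Z\mathsf R}{\Unif(\field_q)}\le \Prob{\mathsf R=0}$. Combined with Property~\ref{lem:partition}, which lets us condition on a partition of events, the problem becomes bounding the probability of a few degenerate events by $O(n^2/q)$.

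\textbf{Step 1: independence across colors.} The $k$ colors use disjoint sets of variables, so $T=\prod_{i\in[k]}T^{(i)}$ with $T^{(i)}:=\prod_{e=uv\in E}(1-c_u^ic_v^i)$, and the $T^{(i)}$ are independent. By Property~\ref{lem:product-independent}, with $\mathsf Z=\prod_{i\ge 2}T^{(i)}$ independent of $T^{(1)}$ and of a fresh uniform $\mathsf U$, we get
\[
\TVc{T}{\mathsf U\cdot \mathsf Z}\le \TVc{T^{(1)}}{\Unif(\field_q)},
\]
where $\mathsf U\sim\Unif(\field_q)$. Moreover $\mathsf U\mathsf Z$ is uniform unless $\mathsf Z=0$. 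The event $\mathsf Z=0$ requires some factor $1-c_u^ic_v^i$ to vanish, and for each edge this has probability at most $1/q$. A union bound gives $\Prob{\mathsf Z=0}\le km/q=O(n^2/q)$. So it suffices to show that a single color class $T^{(1)}$ is $O(n^2/q)$-close to uniform.

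\textbf{Step 2: a single color class, by a peeling/hybrid argument.} Order the edges $e_1,\dots,e_m$ and write $T^{(1)}=\prod_j W_{e_j}$. Each single factor $W_e=1-XY$ is $O(1/q)$-close to uniform, since conditioned on $X\neq0$ the value $1-XY$ is exactly uniform. The difficulty is that factors sharing an endpoint are dependent.

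My first attempt is to pick one edge $e_0=uv$ and condition on every variable except $c_u^1$. The remaining dependence on $c_u^1$ is then $A\cdot\prod_{w\in N(u)}(1-c_u^1c_w^1)$, where $A$ is fixed.
\begin{itemize}
\item If $u$ has degree one (for example a leaf of a spanning tree, when $G$ itself has a leaf), this is $A\cdot(1-c_u^1c_v^1)$. Conditioned on $c_v^1\neq0$ and $A\neq0$, it is exactly uniform, with error at most $\Prob{c_v^1=0}+\Prob{A=0}\le (m+1)/q$.
\item For general $u$, I instead bound $\TVc{Q(\mathsf U)}{\Unif}$ for the univariate polynomial $Q(x)=\prod_{w}(1-xc_w^1)$, averaged over the random $c_w^1$. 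The idea is to exploit that, averaged over the random neighbour values, $Q(\mathsf U)$ should be near uniform. I would compute the collision probability $\sum_z\Prob{Q(\mathsf U)=z}^2$ averaged over the $c_w^1$, and show it is $1/q+O(\deg(u)^2/q^2)$ by counting pairs $x\neq x'$ with $Q(x)=Q(x')$, which is a polynomial identity in the $c_w^1$ of degree $\le\deg(u)$ (Schwartz--Zippel). Then Cauchy--Schwarz converts this into a total variation bound.
\end{itemize}

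\textbf{Main obstacle.} The hard part is the non-leaf case in Step 2: making the dependence among factors sharing a vertex harmless. The collision-probability estimate must be carried out carefully, because a naive Cauchy--Schwarz step would only give $O(\deg(u)/\sqrt q)$ rather than $O(n^2/q)$. Since $q\in n^{\omega(1)}$, even that weaker bound would still yield statistical indistinguishability. I would therefore first aim for the clean $O(n^2/q)$ bound and fall back on the square-root bound if the counting does not close.
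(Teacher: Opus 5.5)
\textbf{There is a genuine gap in Step~2, in the case $\deg(u)\ge 2$.} Your Step~1 is correct, and so is the degree-one case of Step~2. If $\deg(u)=1$, then conditioned on all variables other than $c_u^1$, the value $T^{(1)}=A(1-c_u^1c_v^1)$ is exactly uniform on $\{A\neq 0,\ c_v^1\neq 0\}$. This gives $\TVc{T}{\Unif(\fieldq)}=O(km/q)$ for every graph with a leaf.

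The case $\deg(u)\ge 2$ is the typical one (cycles, cliques, any graph of minimum degree $2$), and there your route cannot work at all. Once you condition on every variable except $c_u^1$, the neighbour values are frozen. So $Q(x)=\prod_{w\in N(u)}(1-xc_w^1)$ is a \emph{fixed} polynomial of degree $\deg(u)$. The mixture bound then only controls $\TVc{T^{(1)}}{\Unif(\fieldq)}$ by the \emph{average} of $\TVc{Q(\mathsf U)}{\Unif(\fieldq)}$ over the frozen values, and that average is a constant. For $\deg(u)=2$, $Q$ is a quadratic with leading coefficient $c_v^1c_w^1\neq 0$ (w.h.p.). It is therefore $2$-to-$1$ away from its vertex, and $Q(\mathsf U)$ is at distance about $1/2$ from uniform. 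Higher degrees behave similarly: $Q$ has two distinct roots, so it is not a permutation of $\fieldq$, and a non-permutation polynomial of degree $d$ misses at least $(q-1)/d$ values.

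Your collision estimate reflects exactly this. For fixed $x\neq x'$ (take $N(u)=\{v,w\}$), the identity $Q(x)=Q(x')$ reads $(x+x')c_v^1c_w^1=c_v^1+c_w^1$. This holds with probability $\Theta(1/q)$, not $O(\deg(u)^2/q^2)$. Schwartz--Zippel only gives $\le\deg(u)/q$ per pair, and that is essentially tight. So the averaged collision probability is about $2/q$, and Cauchy--Schwarz returns a constant. Neither the $O(n^2/q)$ target nor your $O(\deg(u)/\sqrt q)$ fallback is reachable this way.

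If you instead first average the law of $Q(\mathsf U)$ over the $c_w^1$ (that marginal is indeed close to uniform), you can no longer treat $A$ as a constant. $A$ depends on the same $c_w^1$ through edges inside $N(u)$ and edges from $N(u)$ to the rest of the graph. The law of $A\cdot Q(c_u^1)$ is not determined by the two marginals.

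What is missing is a way to use the randomness of several vertex variables \emph{jointly} rather than one at a time, for instance by keeping an independent set of variables random and controlling multiplicative character sums. Nothing in the proposal supplies this, so as written you have proved the lemma only for graphs with a vertex of degree one.

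For comparison, the paper does not condition on all but one variable. It runs a hybrid argument: it replaces the $km$ edge factors one at a time by fresh uniforms, bounds each step by conditioning on both endpoints $c_u^1=s$, $c_v^1=t$ of the replaced edge, and finishes with Property~\ref{lem:product-uniforms}. That step meets the same obstruction when $u$ and $v$ have no private neighbours. For a triangle $uvw$, given $s,t$, the leftover factor $(1-sc_w^1)(1-tc_w^1)$ is again a fixed quadratic in one variable. So the paper's route is not a ready-made patch for your Step~2.
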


\begin{proof}[Proof of \Cref{thm:bias-P}]

Lets define $P^{(0)}:=T$ and for each $i=1,...,L=k\cdot m$, we replace the $i$-th factor of $P^{(i-1)}$ by a new variable $U_i\sim\Unif(\mathbb F_q)$ and call it $P^{(i)}$. Notice that the factors of $T$ are of the form $1-xy$ for two random variables $x,y\overset{\textit{i.i.d.}}{\sim}\Unif(\fieldq)$ and $P^{(L)} = \displaystyle\prod_{i=1}^L U_i$ with $U_i\overset{\textit{i.i.d.}}{\sim}\Unif(\fieldq)$ for all $i\in [L]$.

For a given $P^{(i)}$, let $G:=1-xy$ be the factor of $P^{(i)}$ which will be replaced by $U_i\sim \Unif(\mathbb F_q)$ in $P^{(i+1)}$. Let $R_G$ be the product of remaining factors of $P^{(i)}$ containing $x$ or $y$, and $R_{ind}$ be the product of factors that are independent of $G$. Then, $P^{(i)}=G\cdot R_G \cdot R_{ind}$ and $P^{(i+1)}=U_i\cdot R_G \cdot R_{ind}$.

By \Cref{lem:product-uniforms}, since $R_{ind}$ is independent of $G\cdot R_G$ and $U_i\cdot R_G$,
\begin{align}\label{inequality:1}
    \TV{P^{(i)}}{P^{(i+1)}}
    &= \TV{G\cdot R_G\cdot R_{ind}}{U_i\cdot R_G\cdot R_{ind}}\nonumber \\
    &\leq \TV{G\cdot R_G}{U_i\cdot R_G}.
\end{align}
Therefore we focus on $\TVc{G\cdot R_G}{U_i\cdot R_G}$. First, observe that a direct calculation shows that $\TVc{G}{\Unif(\fieldq)}~=~\frac{q-1}{q^2}\in\bigoo(1/q)$. Thus, if there is no factor corresponding to $R_G$ (the case where all factors in $P^{(i)}$ is independent from $x$ and $y$), we can bound $\TV{P^{(i)}}{P^{(i+1)}}=\bigo (1/q)$. We thus assume that there is a factor corresponding to $R_G$.
Without loss of generality, we assume that $x=c_u^1$ and and $y=c_v^1$ (the other $k-1$ where $(x,y) = (c_u^j,c_v^j)$ for some $j\in [k]\setminus\{1\}$ are analogous). For any $s,t \in \field$, let $\mathcal{E}_{s,t}$ be the event such that $c_u^1 = s$ and $c_v^1=t$. Then, using \Cref{lem:partition} we have

\begin{align}\label{eq:lastnedd}
    \TV{P^{(i)}}{P^{(i+1)}}) &\leq \TVc{G\cdot R_G}{U_i\cdot R_G} \nonumber \\
    &=\sum_{s,t \in \mathbb F_q} \Prob{\mathcal{E}_{s,t}}\cdot \TVc{(1-st)\cdot R_G\middle|\mathcal E_{s,t}}{U_i\cdot R_G\middle|\mathcal E_{s,t}}\nonumber\\
    &\le \sum_{st \notin \{0,1\}} \Prob{\mathcal{E}_{s,t}}\cdot \TVc{(1-st)\cdot R_G\middle|\mathcal E_{s,t}}{U_i\cdot R_G\middle|\mathcal E_{s,t}} + \Prob{r_ur_v\in \{0,1\}}.
\end{align}
Notice that each factor appearing in $R_G$ is either
\begin{itemize}
    %\item $A_u, A_v,B_v,B_u$.
    %\item $P=Q_w \colon = (1-r_ur_w)(1-r_vr_w)$ for each $w\in N(u)\cap N(v)$.
    \item $L^{u,v}_w \colon = (1-c_u^1c_w^1)(1-c_v^1c_w^1)$ for each $w\in N(u)\cap N(v)$,
    \item $L_w^u \colon = (1-c_u^1c_w^1)$ for each $w\in N(u)\backslash N(v)$, or
    \item $L_w^v \colon = (1-c_v^1c_w^1)$ for each $w\in N(v)\backslash N(u)$,
\end{itemize}
and $R_G$ is the product of these $\bigoo(n)$ factors.  When $c_u^1=s$ and $c_v^1=t$ such that $s\cdot t \notin \{0,1\}$, both $L_w^u$ and $L_w^v$ are exactly uniform random variables, and for each $w\in N(u)\cap N(v)$, the probability that $L^{u,v}_w=0$ is at most $\bigo{1/q}$. By union bound, the probability that $\prod_w L^{u,v}_w\neq 0$ is at least $1-\bigo{n/q}$. 
Moreover, both $L_w^u$ and $L_w^v$ are independently uniform, when $r_u=s$ and $r_v=t$ such that $st \notin \{0,1\}$. Thus, by~\Cref{lem:product-uniforms},
\begin{align*}
    \TVc{R_G\middle|\mathcal E_{s,t}}{\Unif(\fieldq)}&\le \Prob{\prod_w L^{u,v}_w \neq 0} \cdot  \TVc{R_G\middle|\mathcal E_{s,t}}{\Unif(\fieldq)} +  \Prob{\prod_w L^{u,v}_w = 0}\\
    &\le \left(1- \bigo{\frac{n}{q}} \right)\cdot \bigo{\frac{n}{q}}  +  \bigo{\frac{n}{q}} \\ &= \bigo{\frac{n}{q}}
\end{align*}
for every $s,t$ such that $s\cdot t\notin \{0,1\}$.  Again by~\Cref{lem:product-uniforms},
\begin{align*}
     &\TVc{(1-st)\cdot R_G\middle|\mathcal E_{s,t}}{U_i\cdot R_G\middle|\mathcal E_{s,t}}\\
     &\le \TVc{(1-st)\cdot R_G\middle|\mathcal E_{s,t}}{(1-st)\cdot \Unif(\fieldq)} + \TVc{(1-st)\cdot \Unif(\fieldq)}{U_i\cdot \Unif(\fieldq)} \\
     & \hspace{5mm} +  \TVc{U_i\cdot \Unif(\fieldq)}{U_i\cdot R_G\middle|\mathcal E_{s,t}} \\
     & = \bigo{n/q} + \bigo{1/q}  + \bigo{n/q}\\ &= \bigo{n/q}.
\end{align*}
We also have that for fixed $s$ and $t$, $\Prob{\mathcal{E}_{s,t}}=\bigo{1/q^2}$ and $ \Prob{c_u^1\cdot c_v^1\in \{0,1\}}=\bigo{1/q}$. Together with~\Cref{eq:lastnedd}, we get $\TV{P^{(i)}}{P^{(i+1)}}) = O\left(1/q\right)$ assuming $q=n^{\omega(1)}$.

Now using the triangle inequality and \Cref{lem:product-uniforms}, the bias of edge factors $T$ is bounded as follows.
\begin{align}
    \TVc{T}{\Unif(\fieldq)}&\leq\TVc{P^{(0)}}{P^{(L)}}+ \TVc{P^{(L)}}{\Unif(\fieldq)}\\
    &\leq \sum_{i=1}^L\TVc{P^{(i-1)}}{P^{(i)}} +\TVc{P^{(L)}}{\Unif(\fieldq)} \\
    &\leq\bigo{\dfrac{L}{q}} + \bigo{\dfrac{L}{q}} = \bigo{\dfrac{n^2}{q}}.
\end{align}
\end{proof}

  Moreover, since the node $v$ computes some portion $T_v(x)$ of the polynomial $T(x)$ using the messages given by the prover and its neighbors ($T_v(x)$ can be obtained by eliminating some factors from $T(x)$),  the same argument works for the distribution of $T_v(x)$, showing each $T_v(x)$ is also statistically close to uniform. Overall, the constant round simulator just needs to generate enough uniform random variables for each node, since all the information received by each node is either an uniform random variable itself, o it is statistically indistinguishable from uniform distribution (the messages $T_v$) given \Cref{thm:bias-P}. 
  
  Together with~\Cref{theo:zksc},  this concludes the proof of~\Cref{theo:finalnon3zk}.

 \end{proof}
 
\subsection{Counting Subgraphs.}
In this section, we study the problem \subgraphcounting. The formal definition of the problem is below. 
\begin{algorithmbox}{\subgraphcounting}{}{
    \textbf{Input.} An $n$-node graph $G=(V,E)$, with unique identifiers $\{0,\ldots, n-1\}$. Each node knows a $k$-node pattern graph $H$, the number $\Delta <n^k$, and the ID of its parent in an arbitrary spanning tree rooted at node with ID $0$.\\
    \textbf{Decision problem:} Decide if the number of copies of $H$ in $G$ is exactly $\Delta$ or not.
}
\end{algorithmbox}

For this problem, we show the following result, which is a formalization of  \Cref{theo:naivetriangle}.

\begin{theorem}\label{theo:finaltczk} 
    The language  $\subgraphcounting$ is in $$\mathrm{dStatZK}\left[\bigo{k\log n},\bigo{\log q}, \mathcal C[\bigo{1},\bigo{\log q}]\right]$$
    where $q$ is any prime such that $q\in n^{\omega(1)}$ for $k\in\bigoo(1)$ and $q>n^k$ for $k\in\omega(1)$.
\end{theorem}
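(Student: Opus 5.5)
We reduce \subgraphcounting\ to a Sumcheck-instance and then invoke \Cref{theo:zksc}, as in the proof of \Cref{theo:finalnon3zk}. The remaining work is to realize the oracle for the target polynomial distributively and to show that the oracle messages can be simulated.

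\textbf{Step 1: the polynomial.} Let $\ell=\lceil\log n\rceil$ and let $\widetilde{A}$ be the multilinear extension over $\fieldq$ of the adjacency matrix of $G$. It has $2\ell$ variables. For induced copies we also use $1-\widetilde{A}$, and for a diagonal/distinctness indicator we use the multilinear extension $\widetilde{\mathrm{EQ}}$. For the pattern $H$ on vertices $\{1,\ldots,k\}$ define
\[
f(x^{(1)},\ldots,x^{(k)})=\prod_{\{a,b\}\in E(H)}\widetilde{A}(x^{(a)},x^{(b)})\cdot\prod_{\{a,b\}\notin E(H)}\bigl(1-\widetilde{A}(x^{(a)},x^{(b)})\bigr)\cdot\prod_{a<b}\bigl(1-\widetilde{\mathrm{EQ}}(x^{(a)},x^{(b)})\bigr).
\]
The middle factor is dropped in the non-induced case. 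The polynomial $f$ has $N=k\ell$ variables, and each $x^{(a)}$ appears in $O(k)$ factors, so its individual degree is $O(k)$.

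On Boolean inputs, $\sum_{x\in\{0,1\}^N}f(x)$ counts injective (induced) homomorphisms of $H$. This equals $|\mathrm{Aut}(H)|\cdot\#H$, so the target is $a=|\mathrm{Aut}(H)|\Delta \bmod q$. The true count is below $k!\,n^k$. For $k\in\omega(1)$, the hypothesis $q>n^k$ (after absorbing $k!$ by choosing $q$ slightly larger, or by arguing directly about injective maps) makes the equality modulo $q$ equivalent to the integer equality. For constant $k$, $q\in n^{\omega(1)}$ suffices.

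By \Cref{theo:zksc}, we obtain $O(N)=O(k\log n)$ rounds with $O(\log q)$-bit messages and soundness error $O(k^2\log n/q)+1/t$, which is negligible.

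\textbf{Step 2: the distributed oracle.} We need $f(r)$ at a random point $r$. By multilinearity,
\[
\widetilde{A}(y,z)=\sum_{u}\sum_{v\in N(u)}\chi_u(y)\chi_v(z).
\]
Node $u$ computes its local term from its adjacency list, and the terms are summed by convergecast along $T$. The $O(k^2)$ evaluations $\widetilde{A}(r^{(a)},r^{(b)})$ are aggregated in parallel. Each partial sum is committed by the prover and checked along the tree, masked exactly like the $\beta$-values of Step 1 of \Cref{theo:zksc}. The $\widetilde{\mathrm{EQ}}$ values are computable by everyone. The root multiplies the results.

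\textbf{Step 3: simulation, the main obstacle.} The Sumcheck part is simulated by uniform values, as shown for \Cref{theo:zksc}. The masked aggregation shares are also uniform. What remains is the value $f(r)$ revealed at the root, together with whatever the root combines with it. We show it is statistically close to uniform.

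Fix $a\neq b$. The value $\widetilde{A}(r^{(a)},r^{(b)})$ is a nonzero multilinear polynomial in the uniform point when $G$ has an edge. I would prove a lemma: for a nonzero multilinear $g$ and a coordinate $x_i$ on which $g$ depends, write $g=x_i g_1+g_0$. Conditioned on $g_1\neq0$, which fails with probability at most $\deg/q$ by Schwartz--Zippel, $g$ is exactly uniform. Hence $\TVc{g}{\Unif(\fieldq)}\le O(\log n/q)$.

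Distinct pairs $(a,b)$ share variable blocks, so the factors are not independent. I would handle this with the hybrid argument of \Cref{thm:bias-P}: replace factors one at a time by fresh uniforms, each time conditioning on the shared blocks and using \Cref{lem:product-independent} and \Cref{lem:partition}. Then \Cref{lem:product-uniforms} finishes, giving total distance $O(k^2\log n/q)$, which is negligible.

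Degenerate cases need care:
\begin{itemize}
\item If $G$ has no edges, $\widetilde{A}\equiv 0$ and $f(r)$ is determined by public data, so it is trivially simulable.
\item For induced copies, $1-\widetilde{A}$ is likewise near-uniform.
\end{itemize}
The simulator therefore outputs uniform values for all oracle-related messages. Every verification step is a one-round tree check with $O(\log q)$-bit messages, so the protocol lies in $\mathcal C[O(1),O(\log q)]$. The hybrid bound under overlapping variable blocks is the step I expect to be most delicate.
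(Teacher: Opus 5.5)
Your route is the paper's: arithmetize via the multilinear extension $\widetilde{A}$, realize the oracle by tree aggregation, and let the simulator replace the oracle answer by a uniform value after a near-uniformity bound. Several of your choices improve on the paper's text:
\begin{itemize}
\item the $\widetilde{\mathrm{EQ}}$ factor, which is needed to count injective maps when $H$ is not a clique;
\item unordered pairs, whereas the paper's $\prod_{i\neq j}\widetilde{A}(v_i,v_j)$ is a perfect square because $\widetilde{A}$ is symmetric, so it only takes square values and cannot be near-uniform;
\item masking the partial sums, so internal nodes no longer see subtree sums;
\item the conditioning lemma $\mathrm{TV}\le\Pr[g_1=0]$, which is simpler than the Fourier argument and gives a better bound.
\end{itemize}
But Step 3 has a genuine gap, and it is not the hybrid argument you flag.

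Statistical zero-knowledge concerns the \emph{joint} distribution of a node's view. The root's view contains the challenge point $r=(r_1,\ldots,r_N)$: the root samples it, and every node receives it because it is needed both for the Sumcheck checks and for the local terms $\chi_u(y)\chi_v(z)$. The view also contains $f(r)$, and in your scheme the unmasked values $\widetilde{A}(r^{(a)},r^{(b)})$ as well. These are deterministic functions of $r$, so $\mathrm{TV}\bigl((r,f(r)),(r,U)\bigr)=1-1/q$ no matter how close the marginal of $f(r)$ is to $\Unif(\fieldq)$. Outputting uniform values therefore fails.

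Worse, no constant-round simulator can work while $f(r)$ is opened. Take two connected triangle-free graphs $G,G'$ on the same ID set that agree (with their tree pointers) within the simulator's radius of the root but have different edge sets. Set the second and third blocks to the binary encodings of vertices $u,w$ with $\{u,w\}\in E(G)\setminus E(G')$. This gives $f_G(x,u,w)=\widetilde{A}_G(x,u)\widetilde{A}_G(x,w)\not\equiv 0$ while $f_{G'}(x,u,w)\equiv 0$, so $f_G\neq f_{G'}$. By Schwartz--Zippel the real root views then differ in total variation by at least $1-\deg(f)/q$, while the simulator's output at the root is the same for both graphs.

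The missing idea is to keep the final comparison $g_N(r_N)=f(r)$ under masks as well, so that no node ever sees $f(r)$ or the $\widetilde{A}$-values in the clear. Because $f$ is a product of $\Theta(k^2)$ aggregated values, this needs an additional masked-multiplication mechanism, for instance multiplication triples checked by cut-and-choose. The paper's proof does not supply this either: it bounds only the marginal of $\widetilde{A}_v$ and substitutes a uniform value, so following it does not close the gap.

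Separately, all checks run after the interaction, so performing the $\Theta(k\log n)$ stages' checks in $\bigo{1}$ rounds needs $\bigo{k\log n\log q}$-bit edge messages, as in \Cref{theo:zksc}. One-round $\bigo{\log q}$-bit checks per stage therefore do not by themselves give the class $\mathcal C[\bigo{1},\bigo{\log q}]$.
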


\begin{proof}[Proof of \Cref{theo:finaltczk}]

Given a graph $G=(V,E)$, let $A:\{0,1\}^{\log n} \times \{0,1\}^{\log n}\rightarrow \{0,1\}$ be the adjacency matrix of $G$ as a function, i.e., $A(i,j) =1 \iff \{i,j\}\in E$ and $A(i,j) = 0$ otherwise.

For a given $H=(V(H),E(H))$, the number of $H$'s $\Delta$ is represented as
\begin{align}\label{eq:subgraph}
\mathsf{Aut}(H)\cdot\Delta = \sum_{v_1,\ldots,v_k\in \{0,1\}^{\log n}} \prod_{(v_i,v_j)\in E(H)} A(v_i,v_j),
\end{align}
where $\mathsf{Aut}(H)$ is the number of automorphisms of $H$.
When dealing with induced copies, the representation also includes the non-edge factors
\begin{align*}
\mathsf{Aut}(H)\cdot\Delta = \sum_{v_1,\ldots,v_k\in \{0,1\}^{\log n}} \prod_{(v_i,v_j)\in E(H)} A(v_i,v_j)\prod_{(v_i,v_j)\notin E(H)} \left(1-A(v_i,v_j)\right).
\end{align*}
We will check the right-hand side of the equation using the Sumcheck protocol. Let $\widetilde{A}: \mathbb F_q^{2\log n}\rightarrow \mathbb F_q$ be the multilinear extension of $A$, i.e.,
\[
\widetilde{A}(x,y) = \sum_{z,w \in \{0,1\}^{\log n} } A(z,w) \cdot 
\chi_{x,y}(z,w),
\]
where $\chi_{x,y}$ is a polynomial defined as follows ($x_{\ell}$ is the $\ell$-th element of the $|\mathbb F_q|$-ary representation of $x\in \mathbb F_q^{\log n}$):
\[
\chi_{x,y}(z,w) = \prod_{\ell \in \{1,\ldots, \log n\}}\left[ x_\ell \cdot z_{\ell} + (1-x_{\ell}) \cdot (1-z_{\ell})  \right] \cdot
\left[ y_\ell \cdot w_{\ell} + (1-y_{\ell}) \cdot (1-w_{\ell})  \right].
\]

For simplicity, we only focus on the case of $k$-cliques. For all (induced) subgraphs, the protocol and the analysis are essentially identical to this case\footnote{We rely on the fact that $f$ is a product of multilinear polynomials which holds for not only cliques but also all subgraphs.}. Therefore, we let 
$f(v_1,\ldots,v_k)=\prod_{i\neq j} \widetilde{A}(v_i,v_j)$. Then $f$ is a $N= k\log n$-variate polynomial of total degree $\bigoo(k^2\log n)$.
Taking $|\mathbb F_q|$ to be a prime $q > 2n^k$ (note that the number of $H$'s is at most $\begin{pmatrix}
    n\\k
\end{pmatrix} \leq n^k$), the condition~\eqref{eq:subgraph} is equivalent to
\[
k!\cdot \Delta = \sum_{v_1,\ldots,v_k\in \{0,1\}^{\log n}} f(v_1.\ldots,v_k),
\]
since $f(v_1,\ldots,v_k)$ coincides with $\prod_{i \neq j} A(v_i,v_j)$ 
for all $v_1,\ldots,v_k \in \{0,1\}^{\log n}$. In order to conclude \Cref{theo:naivetriangle} from \Cref{theo:zksc}, 
it remains only to show how the nodes can evaluate $f$ at a single point, and that the view of each node is statistically close to the uniform distribution.

\paragraph*{Oracle access to $f$.}
Let $(r_1,\ldots,r_N)\in \mathbb F_q^N$ be any point. %Let $v_1^*,\ldots,v_k^*$ be $k$ elements of $\mathbb F_q^{N/k}$, where $v_1^* = r_1\ldots r_{N/k}$, $v_2^* = r_{N/k + 1}\cdots r_{2N/k}$ and so on,  respectively. 
Then, $f(r_1,\ldots , r_N)$ is a product of $\bigoo(k^2)$ terms of the multilinear extension $\widetilde{A}$, meaning that it is sufficient if we can compute $\widetilde{A}$ at a random point.
By definition, for any $v_i^*,v_j^* \in \mathbb F_q^{\lceil \log n \rceil}$, we have
\[
\widetilde{A}(v_i^*,v_j^*)=\sum_{v_i\in\{0,1\}^{\log n}}\sum_{v_j\in\{0,1\}^{\log n}} A(v_i,v_j)\cdot \chi_{v_i^*,v_j^*}(v_i,v_j),
\]
and for each fixed $v_i$, the inner sum can be computed locally by node $v_i$. By summing up the tree (with the help of the prover), the root node can compute $\widetilde{A}(v_i^*,v_j^*)$ for given $v_i^*,v_j^* \in \mathbb F_q^{\lceil \log n \rceil}$.

Finally, to conclude it only remains to prove that the polynomial $f$ associated with \subgraphcounting\ is statistically indistinguishable from the uniform distribution when evaluated on a random point. To this end, we show that for all multilinear polynomials, its distribution is statistically close to the uniform distribution when evaluated on a random point. In specific, we prove the following.

\begin{lemma}\label{lemma:tvofsubcount}
    Given a graph $G=(V,E)$, for each node $v\in V$ define as $\widetilde{A}_v$ the random variable representing the portion of $\widetilde{A}(v_i^*,v_j^*)$ that node $v$ receives after summing-up through its subtree. Then, if $U\sim\Unif(\fieldq)$, it holds that

    \begin{equation*}
        \TVc{\widetilde{A}_v}{U}\leq \bigo{\sqrt{\dfrac{\log n}{q}}}
    \end{equation*}
\end{lemma}

\begin{proof}[Proof of \Cref{lemma:tvofsubcount}]

We use the following standard Fourier analysis technique.

\begin{lemma}[Parseval's theorem]\label{lem:Parseval}
For any $f\colon\mathbb{F}_q\to\mathbb C$ define the Fourier transform
\(\widehat f(\lambda):=\sum_{a\in\mathbb{F}_q }f(a)\psi(\lambda a)\), where $\psi(t)=e^{2\pi i t/q}$ for $t\in \mathbb{F}_q$.
Then
\[
      \sum_{a\in\mathbb{F}_q}\abs{f(a)}^{2}
      =\frac1q
        \sum_{\lambda\in\mathbb{F}_q}\!\abs{\widehat f(\lambda)}^{2}.
\]
\end{lemma}

Let $v\in V$ be any node and fix $Q:=\widetilde{A}_v$.
Let $\delta(a): = \underset{\boldsymbol{x}\in \mathbb F_q^N}{\mathbb{P}}[Q(\boldsymbol{x})=a] - \frac1q$ be the bias of $Q$ on $a\in \mathbb{F}_q$. We write $\mathrm{TV}:=\frac12 \sum_{a\in\mathbb F_q}\abs{\delta(a)}$ the total variation distance between $Q$ and the uniform distribution. The Fourier transform of $\delta$ is given by
\begin{align*}
    \widehat{\delta}(\lambda) &:= \sum_{a\in \mathbb{F}_q}\delta(a)\psi(\lambda a)\\
    &= \sum_a \left[\underset{\boldsymbol{x}}{\mathbb{P}}\left[Q(\boldsymbol{x})=a\right]-\frac1q\right]\cdot \psi(\lambda a)\\
    &= \mathbb E_{\boldsymbol{x}\in\mathbb{F}_q^{N}}
        \psi\bigl(\lambda Q(\boldsymbol{x})\bigr) - \frac1q\sum_a \psi(\lambda a)\\
    &= \mathbb E_{\boldsymbol{x}\in\mathbb{F}_q^{N}}
        \psi\bigl(\lambda Q(\boldsymbol{x})\bigr)
\end{align*}
for $\lambda \in \mathbb{F}_q^\times$, and $\widehat{\delta}(0) = 0$.

Next, we let $\vec{\delta}$ be the $q$-dimensional vector where the $a$-th element is $\lvert \delta(a) \rvert$ and $\vec{1}$ be the all-one vector.
By Cauchy-Schwarz we have 
\begin{align}
    \sum_{a\in \mathbb{F}_q}\lvert \delta(a) \rvert &= \vec{\delta}\cdot \vec{1} \nonumber \\
    &\le \lVert \vec{1} \rVert_2 \cdot \lVert \vec{\delta} \rVert_2 \nonumber \\
    &= \sqrt{q}\cdot \sqrt{\sum_a \lvert \delta(a) \rvert^2}. \label{eq:cauchy-schwarz}
\end{align}
Hence, combined with Parseval's theorem (\Cref{lem:Parseval}), the total variation distance is now bounded by
\begin{align*}
    \mathrm{TV} = \frac12 \sum_a \lvert \delta(a) \rvert \le \frac{\sqrt{q}}2  \sqrt{\sum_a \lvert \delta(a) \rvert^2} \\
    = \frac{\sqrt{q}}2  \sqrt{\frac1q \sum_{\lambda \neq 0} \lvert \widehat{\delta}(\lambda) \rvert^2}.
\end{align*}

Our goal is to derive an explicit universal bound on the absolute Fourier coefficients $|\widehat{\delta}(\lambda)|$.

\begin{lemma}
Assume that for some $i$, $Q(x_{1},\ldots, x_N)=x_ih(x_{\neq i})+R(x_{\neq i})$ with $h\not\equiv0$ multilinear.
Then for every $\lambda\in \mathbb{F}_q^\times$
\[
      |\widehat{\delta}(\lambda)|\;\le\;\frac{N-1}q.
\]
\end{lemma}

\begin{proof}
Without loss of generality, assume that $P(x_{1},\ldots, x_N)=x_1h(x_{2},\ldots, x_N)+R(x_2,\ldots,x_N)$ with $h\not\equiv0$ multilinear.
$\widehat{\delta}(\lambda)$ can be written as

\begin{align*}
\widehat{\delta}(\lambda)
 &= \mathbb E_{x_{1},\ldots, x_N}
        \psi\bigl(\lambda P(x_{1},\ldots, x_N)\bigr)\\
 &= \mathbb E_{x_{2},\ldots, x_N}
     \psi\bigl(\lambda R(x_{2},\ldots, x_N)\bigr)
     \Bigl[
        \tfrac1q
        \sum_{x_1\in\mathbb{F}_q}
           \psi\bigl(\lambda x_1 h(x_{2},\ldots, x_N)\bigr)
     \Bigr].
\end{align*}
For fixed $\alpha:=h(x_{2},\ldots, x_N)$ the inner sum equals
\[
   \frac1q
   \sum_{x_1}\psi(\lambda\alpha x_1)
   =\begin{cases}
       1 &(\alpha=0),\\
       0 &(\alpha\ne0),
     \end{cases}
\]
by orthogonality of roots of unity.
Since $h$ involves at most $N-1$ variables and each with degree $1$, Schwartz-Zippel gives

\begin{align*}
    \lvert \widehat{\delta}(\lambda) \rvert
      &\le \underset{x_{2},\ldots, x_N}{\mathbb{P}} \bigl[h(x_{2},\ldots, x_N)=0\bigr]\\ 
      &\le \frac{N-1}q.
\end{align*}
\end{proof}
Recall that for the multilinear extension of the adjacency matrix, there exists at least one variable $x_i$ satisfying the assumption of the above lemma: $Q(x_{1},\ldots, x_N)=x_ih(x_{\neq i})+R(x_{\neq i})$ with $h\not\equiv0$ multilinear (otherwise, the corresponding graph cannot have any edges).
We now get the upper bound for TV:
\[
\mathrm{TV}\leq \frac{\sqrt{q}}2 \cdot\sqrt{\frac1q\cdot(q-1)\frac{(N-1)^2}{q^2}} \le O\left( \frac {\log n} {\sqrt{q}} \right).
\]

\end{proof}

\end{proof}

\section{Reducing Round Complexity by Divide-and-Conquer Sumcheck}

\label{alg:fold-DCS}
\newcommand{\DCS}[2]{\mathsf{Fold}\text{-}\mathsf{DCS}_{#1,#2}}

In this section, we improve the round complexity of subgraph counting using the recent technique of~\cite{levrat2025divide}. For $k\in\bigoo(1)$, this reduces the round complexity from $\bigoo(\log n)$ to $\bigoo(\log \log n)$.  We prove the following theorem.

\begin{theorem}[Restatement of~\Cref{theo:naivetriangle}]\label{theo:divide-and-conquer}
The language  $\subgraphcounting$ is in $$\mathrm{dStatZK}\left[\bigo{k\log k + \log \log n},\bigo{\frac{\log^{2} q}{\log k} + k\log k\log q}, \mathcal C[\bigo{1},k\cdot\mathrm{polylog}(n,k)]\right]$$
    where $q$ is any prime such that $q\in n^{\omega(1)}$ for $k\in\bigoo(1)$ and $q>n^k$ for $k\in\omega(1)$.  
 
\end{theorem}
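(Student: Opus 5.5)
We follow the divide-and-conquer outline of the overview, instantiated on the $k$-clique polynomial $f(v_1,\ldots,v_k)=\prod_{i\neq j}\widetilde{A}(v_i,v_j)$ from the proof of \Cref{theo:finaltczk}. This polynomial has $N=k\log n$ variables and individual degree at most $2(k-1)$. The basic step takes a Sumcheck relation $a=\sum_{\{0,1\}^{N'}}F$ and splits the variables into a prefix block of size $s$ and a suffix block. The prover sends
\[
F'(y)=\sum_{z\in\{0,1\}^{N'-s}}F(y,z),
\]
a polynomial in $s$ variables with at most $(2k-1)^{s}$ coefficients. We choose $s\approx \log n/\log(2k)$ in the first step so that $F'$ has at most $n$ coefficients, one per node. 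The protocol then proceeds as follows.

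1. The nodes check $\sum_{y\in\{0,1\}^s}F'(y)=a$ with one converge-cast. The summation over $\{0,1\}^s$ is linear in the coefficients, so each node contributes its coefficient times a known weight.

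2. The root picks a random $\rho\in\fieldq^{s}$ and computes $F'(\rho)$ by converge-cast.

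3. This produces the new relation $F'(\rho)=\sum_{z}F(\rho,z)$ on $N'-s$ variables.

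By Schwartz--Zippel, a wrong $F'$ survives with probability at most $s(2k-1)/q$. We recurse on the suffix instance until it has $O(1)$ variables.

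For the depth to be $O(\log\log n)$, a single step must reduce the number of variables by a constant factor. A constant-size prefix block does not do this. So we use the two-instance version of \cite{levrat2025divide}: we split the variables in half, and the prefix polynomial $F'$ is itself certified by a recursive sumcheck relation instead of being sent explicitly. Sending it explicitly happens only once its coefficient count fits in $n$ nodes times the per-round message budget $\ell/\log q$. The doubling of the number of instances at each level is controlled by folding. Given $m$ claims $a_j=\sum F_j$ over the same number of variables, the root draws random $\gamma_j$ and the nodes verify the single claim $\sum_j\gamma_ja_j=\sum\bigl(\sum_j\gamma_jF_j\bigr)$, which costs an extra $m/q$ in soundness. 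The parameters are set as follows. With $\ell=O(\log^2q/\log k+k\log k\log q)$ bits per node per round, each node holds about $\log q/\log k$ coefficients, so the network holds polynomials in $\Theta(\log n)$ variables. We then count rounds: the $k\log k$ term pays for the high individual degree $2k$, and the halving recursion on $N=k\log n$ variables gives $O(\log\log n)$ levels. Each level uses $O(1)$ prover rounds plus $O(1)$ verification rounds on the tree. The final evaluations of $f$ at a random point are handled exactly as in \Cref{theo:finaltczk}, through $O(k^2)$ evaluations of $\widetilde{A}$ computed by converge-cast.

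To obtain zero knowledge, every coefficient and every partial sum along the tree is replaced by a masked share, exactly as in $\Pi_1$--$\Pi_3$. We use polynomial encryptions, cut-and-choose with $t^2$ copies, and the conditions $\textbf{(RC)}$ and $\textbf{(0C)}$. The linear checks in step 1, the evaluation check in step 2, and the folding combination are all linear relations in the coefficients. So they fit the same ``random masks summing to zero'' template used in Step 2 of \Cref{theo:zksc}, with the coefficients $r_{i-1}^k$ replaced by the known monomial weights $\rho^{e}$ and $\gamma_j$. Soundness is then a union bound of $1/t$, the Schwartz--Zippel terms per level, and the folding terms $m/q$. With $q=n^{\omega(1)}$, this total is negligible plus $1/t$.

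I expect the main obstacle to be the simulator: the real evaluation values that reach the root, namely $F'(\rho)$ and the folded intermediate claims, must be replaced by uniform values. Each intermediate polynomial is a partial sum or restriction of a product of multilinear extensions. I would first show that each such polynomial keeps a variable of the form $x_ih+R$ with $h\not\equiv0$. Then I would reuse the Fourier argument of \Cref{lemma:tvofsubcount} to show that its value at a uniform point is $O(N/\sqrt q)$-close to uniform. Finally, a hybrid argument over the $O(\log\log n)$ levels sums these distances, and the sum stays negligible for $q=n^{\omega(1)}$. The remaining messages are exactly uniform by \Cref{lemma:unirandvalues}.
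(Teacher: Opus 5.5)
\textbf{Gap: the halving recursion.} Your $O(\log\log n)$ depth rests on it, and it does not work here. In the two-instance divide-and-conquer of \cite{levrat2025divide}, soundness relies on two things. First, the prover is committed to the prefix polynomial $F'(x)=\sum_z F(x,z)$ before $\rho$ is drawn. Second, the verifier reads $F'(\rho)$ off that commitment, and in the query phase of $\mathsf{Fold}\text{-}\mathsf{DCS}$ it reads further values of $F'$ the same way. In the distributed setting, the only available commitment is to distribute the coefficients.

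At the top of your recursion, $F'$ has $N/2=\tfrac k2\log n$ variables and up to $(2k-1)^{N/2}=n^{\Theta(k\log k)}$ coefficients. As you acknowledge, these cannot be distributed. ``Certifying $F'$ by a recursive sumcheck relation instead of sending it'' does not replace the commitment. Without it, $F'(\rho)$ is a value the prover picks after seeing $\rho$. The companion claim $a=\sum_x F'(x)$ then concerns a polynomial that is either not fixed at all, so a cheating prover satisfies both claims, or, read as $\sum_x\sum_zF(x,z)$, is just the original claim again. So the halving-plus-folding protocol is unsound as described, and the round count you attach to it is not backed by an implementable protocol.

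\textbf{The fix is your own first step.} You dismissed it because you aimed at depth $O(\log\log n)$. But the target bound has an additive $k\log k$ term. Peeling off blocks of $\ell=\Theta(\log n/\log k)$ variables takes only $N/\ell=O(k\log k)$ sequential steps, where $\ell$ is chosen so that an $\ell$-variate polynomial of $f$'s individual degree has at most $n$ monomials.

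This is exactly the paper's $\mathcal P_{\mathsf{split}}$, with $\ell=\lfloor\log n/\log k\rfloor$ and $t=k\lceil\log k\rceil$. The prover sends $h_1,\ldots,h_t$ and finally $\widetilde h_t$, each on at most $\ell$ variables and given one coefficient per node. This leaves $t+1$ sumcheck claims about explicitly distributed polynomials, plus the single oracle check $\widetilde h_t(\boldsymbol\beta)=f(\boldsymbol\alpha^{(1)},\ldots,\boldsymbol\alpha^{(t)},\boldsymbol\beta)$.

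Only then does the paper run $\mathsf{Fold}\text{-}\mathsf{DCS}$, on these $t+1$ instances in parallel. Because they are $\ell$-variate, every polynomial it needs oracle access to fits on the network. This phase costs $O(\log\ell)=O(\log\log n)$ rounds with $O(t\log q)$-bit messages. So the $k\log k$ term is the chain length and the $\log\log n$ term is the $\mathsf{Fold}\text{-}\mathsf{DCS}$ phase.

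Your direct converge-cast check of $\sum_{y\in\{0,1\}^{\ell}}F'(y)$ is linear in the distributed coefficients. It would therefore discharge each of these claims in $O(1)$ rounds, so the chain alone already fits within the stated bound. The masking layer and simulator plan you describe are in line with the paper's.
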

In order to prove this result, we use the following protocol which we call $\DCS{F}{a}$~\cite{levrat2025divide}.

\begin{tcolorbox}[title =$\DCS{F}{a}$.,enhanced,breakable=true]
\small
\textbf{Assumption:} Both players have oracle access to a polynomial $F:\mathbb F_q^N \to \mathbb F_q$ where $N=2^m$.
\begin{itemize}
\item \textbf{Commit phase:}\\
For $i\in\{1,\ldots,m\}$ do the following:\\
Define $2^{m-i}$-variate polynomial $F_0^{(i)}(x)=\sum_{\vec{a}\in\{0,1\}^{2^{m-i}}}F^{(i-1)}(x,\vec{a})$.
The verifier independently samples $\alpha^{(i)} \in_R \fieldq^{m-i}$ and $z^{(i)}\in_R\fieldq$ uniformly at random, and sends them to the prover.
\\
The prover and the verifier set the following polynomials and value.
\begin{itemize}
    \item $F_1^{(i)}=F^{(i-1)}(\alpha^{(i)},\cdot)$
    \item $F^{(i)} = z^{(i)}F_0^{(i)} + F_1^{(i)}$
    \item $a^{(i)} = z^{(i)}a^{(i-1)}+F_0^{(i)}(\alpha^{(i)})$
\end{itemize} 

\item \textbf{Query phase:}\\
The verifier computes
\[
a^{(m)}=\prod_{j=1}^m z^{(j)} a + \sum_{j=1}^m\left( \prod_{\ell=j+1}^m z^{(\ell)} \right) F_0^{(j)}(\alpha^{(j)})
\]
by querying $ F_0^{(j)}(\alpha^{(j)})$ for $j\in\{1,\ldots,m\}$.\\

The verifier picks $\beta \in \mathbb F$ uniform randomly and checks 
\[
F^{(m)}(\beta) = F(\alpha^{(1)},\ldots,\alpha^{(m)},\beta) + \sum_{j=1}^m z^{(j)}F_0^{(j)}(\alpha^{(j+1)},\ldots,\alpha^{(m)},\beta)
\]
by querying $F^{(m)}(\beta),F(\alpha^{(1)},\ldots,\alpha^{(m)},\beta)$, and $F_0^{(j)}(\alpha^{(j+1)},\ldots,\alpha^{(m)},\beta)$ for $j\in\{1,\ldots,m\}$.
\end{itemize}

\end{tcolorbox}

The protocol is originally implemented in the interactive oracle proof model where each message from the prover is a function which can be queried by the verifier. In the commit phase of $\DCS{F}{a}$, the verifier receives oracle access to $F^{(m)}$ and $F_0^{(j)}$ for $j\in\{1,\ldots, m\}$, and in the query phase the verifier makes $2m + 2$ oracle queries. The following analysis is used in our proof.

\begin{proposition}[Completeness and soundoness of $\DCS{F}{a}$, from \cite{levrat2025divide}]
    If $\sum F =a$, then given an honest prover described in the protocol $\DCS{F}{a}$, the verifier accepts with probability 1,
    Otherwise, the verifier accepts with probability at most
    \[
    \frac{(m+1)(d+1)}{|\fieldq|},
    \]
    where $d$ is the total degree of $F$.
\end{proposition}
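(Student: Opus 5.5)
We argue by tracking a single invariant through the commit phase, in the style of the proof of \Cref{thm:completeness-soundness-sumcheck}. The invariant is that the running claim equals the true hypercube sum of the running polynomial:
\[
a^{(i)}=\sum_{\vec b\in\{0,1\}^{2^{m-i}}}F^{(i)}(\vec b)\qquad(\star)
\]
for $i=0,\dots,m$, with $F^{(0)}=F$ and $a^{(0)}=a$.

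\textbf{Completeness.} We show by induction that an honest prover preserves $(\star)$. By definition of $F_0^{(i)}$ (sum out the last half of the variables),
\[
\sum F_0^{(i)}=\sum F^{(i-1)}=a^{(i-1)}.
\]
By definition of $F_1^{(i)}$ (fix the first half to $\alpha^{(i)}$),
\[
\sum_{\vec b}F_1^{(i)}(\vec b)=\sum_{\vec b}F^{(i-1)}(\alpha^{(i)},\vec b)=F_0^{(i)}(\alpha^{(i)}).
\]
Linearity of summation then gives
\[
\sum F^{(i)}=z^{(i)}a^{(i-1)}+F_0^{(i)}(\alpha^{(i)})=a^{(i)}.
\]
Unrolling the recursion $a^{(i)}=z^{(i)}a^{(i-1)}+F_0^{(i)}(\alpha^{(i)})$ yields exactly the closed form for $a^{(m)}$ computed in the query phase. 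Unrolling $F^{(i)}(y)=z^{(i)}F_0^{(i)}(y)+F^{(i-1)}(\alpha^{(i)},y)$ expresses $F^{(m)}(\beta)$ as $F(\alpha^{(1)},\dots,\alpha^{(m)},\beta)$ plus the weighted $F_0^{(j)}$ terms, so the final identity holds for every $\beta$. Hence the verifier accepts with probability $1$.

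\textbf{Soundness.} Suppose $\sum F\ne a$, so $(\star)$ fails at $i=0$. Let $\widehat F_0^{(i)}$ denote the oracle actually committed by the prover, let $F_0^{(i)}$ denote the true partial sum of the (true) $F^{(i-1)}$, and let the verifier's value be $a^{(i)}=z^{(i)}a^{(i-1)}+\widehat F_0^{(i)}(\alpha^{(i)})$. Assume $(\star)$ fails at step $i-1$. We write
\[
a^{(i)}-\sum F^{(i)}=z^{(i)}\Bigl(a^{(i-1)}-\sum F_0^{(i)}\Bigr)+\Bigl(\widehat F_0^{(i)}(\alpha^{(i)})-F_0^{(i)}(\alpha^{(i)})\Bigr).
\]
The coefficient of $z^{(i)}$ is nonzero. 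The prover fixes $\widehat F_0^{(i)}$ before seeing $z^{(i)}$, and $\alpha^{(i)}$ is independent of $z^{(i)}$. Therefore this affine function of $z^{(i)}$ vanishes with probability at most $1/q$, and $(\star)$ fails again at step $i$ except with that probability.

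One subtlety is that the prover's oracle $\widehat F_0^{(i)}$ also enters the final identity. So in each round we also charge the event that $\widehat F_0^{(i)}\ne F_0^{(i)}$ as polynomials but they agree at the random points where they are later queried. By Schwartz--Zippel this costs at most $d/q$ per round. Altogether each round contributes at most $(d+1)/q$.

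At the end, $(\star)$ fails for $F^{(m)}$. The check of $a^{(m)}$ against the committed $F^{(m)}$ therefore forces the committed polynomial to differ from the true univariate polynomial defined by the right-hand side of the final identity. Two distinct polynomials of degree at most $d$ agree at a random $\beta$ with probability at most $d/q\le (d+1)/q$. A union bound over the $m$ commit rounds and the final check gives acceptance probability at most $(m+1)(d+1)/q$.

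\textbf{Main obstacle.} The delicate step is the bookkeeping of which objects are honest and which are prover-supplied. We must make sure that a cheating $\widehat F_0^{(j)}$ cannot simultaneously repair $(\star)$ and pass the final identity at $(\alpha^{(j+1)},\dots,\alpha^{(m)},\beta)$. I would handle this by a hybrid argument that replaces each $\widehat F_0^{(j)}$ by the true $F_0^{(j)}$ and pays $d/q$ per replacement.
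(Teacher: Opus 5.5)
The paper gives no proof of this proposition; it is imported from \cite{levrat2025divide}, so your argument has to stand on its own. Your completeness argument is correct. You also rightly supply the comparison $\widehat F^{(m)}(0)+\widehat F^{(m)}(1)=a^{(m)}$, which the paper's query phase leaves implicit. The soundness argument, however, has a genuine gap.

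You track $(\star)$ along the \emph{honest} trajectory, where $F_0^{(i)}$ is the true partial sum and $F^{(i)}=z^{(i)}F_0^{(i)}+F^{(i-1)}(\alpha^{(i)},\cdot)$. That is why the coefficient of $z^{(i)}$ is automatically nonzero and you lose only $1/q$ per round. But the right-hand side of the final identity, as a function of $\beta$, is the running polynomial built from the prover's oracles $\widehat F_0^{(j)}$, not the honest $F^{(m)}$. So the step ``$(\star)$ fails for $F^{(m)}$, therefore the committed polynomial differs from the right-hand side'' does not follow. A prover who commits $\widehat F^{(m)}$ equal to that right-hand side passes the $\beta$-check for every $\beta$. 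Whether it is caught then depends only on whether $a^{(m)}$ equals the hypercube sum of the \emph{prover-defined} polynomial, which your invariant does not control. Your hybrid repair charges the wrong event. If $\widehat F_0^{(j)}\neq F_0^{(j)}$, the two disagree at the random query points with probability at least $1-d/q$. Substituting one for the other therefore changes the verifier's values with probability close to $1$, not $d/q$.

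The fix is to run the invariant on the trajectory the protocol actually defines. Set $F^{(0)}=F$ and $F^{(i)}=z^{(i)}\widehat F_0^{(i)}+F^{(i-1)}(\alpha^{(i)},\cdot)$. Let $G^{(i)}(x)=\sum_{\vec b}F^{(i-1)}(x,\vec b)$ be the correct partial sum of \emph{this} $F^{(i-1)}$. Then
\[
a^{(i)}-\sum F^{(i)} = z^{(i)}\Bigl(a^{(i-1)}-\sum \widehat F_0^{(i)}\Bigr) + \bigl(\widehat F_0^{(i)}-G^{(i)}\bigr)(\alpha^{(i)}).
\]
If $\sum\widehat F_0^{(i)}\neq a^{(i-1)}$, this vanishes with probability $1/q$ over $z^{(i)}$, as in your argument. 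The case your invariant hides is when the prover commits an oracle with the claimed sum, which kills the $z^{(i)}$-term. Then $\sum \widehat F_0^{(i)}=a^{(i-1)}\neq\sum F^{(i-1)}=\sum G^{(i)}$. Hence $\widehat F_0^{(i)}-G^{(i)}$ is a nonzero polynomial, fixed before $\alpha^{(i)}$ and of degree at most $d$ (assuming, as your Schwartz--Zippel step already does, that committed oracles have degree at most $d$). It vanishes at $\alpha^{(i)}$ with probability at most $d/q$; this is where Schwartz--Zippel is genuinely needed.

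By your own unrolling, the right-hand side of the final identity is exactly this $F^{(m)}(\beta)$. So a failed invariant at step $m$, together with the check against $a^{(m)}$, forces $\widehat F^{(m)}\neq F^{(m)}$, which is caught at $\beta$ except with probability $d/q$. A union bound gives $\bigl(m\max(1,d)+d\bigr)/q\le (m+1)(d+1)/q$.
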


\subsection{Splitting the instance} 
\label{alg:split}
Let us consider the $N$-variate polynomial $f$ associated with $\subgraphcounting$, and let $\ell = \lfloor \frac{\log n}{\log k} \rfloor$. We define a protocol $\mathcal{P}_{\mathsf{split}}$ for $\subgraphcounting$ based on splitting the Sumcheck relation, in which the honest prover and the verifier proceed as follows.

\begin{tcolorbox}[title=$\mathcal{P}_{\mathsf{split}}$.,enhanced,breakable=true]
\small
\begin{enumerate}
    \item The prover sends \[
            h_1(x_1,\ldots,x_{\ell}) = \sum_{\mathbf{a}\in\{0,1\}^{N-\ell}} f(x_1,\ldots,x_\ell,\mathbf{a}).
            \]
    
    \item 
    Let $t$ be a positive integer that will be determined later.
    For $i \in \{1,\ldots, t - 1\}$, both do the following.
    \begin{enumerate}
        \item The verifier picks $\boldsymbol{\alpha}^{(i)} \in \mathbb F^{\ell}$. Both set $\widetilde{h}_i(x_1,\ldots,x_{N-i\ell}) = \widetilde{h}_{i-1}(\boldsymbol{\alpha}^{(i)},x_1,\ldots,x_{N-i\ell})$ (where $\widetilde{h}_0 =f$). The verifier computes $a_i=h_i(\boldsymbol{\alpha}^{(i)})$ by querying to $h_i$.
        \item The prover sends \[
            h_{i+1}(x_1,\ldots,x_{\ell}) = \sum_{\mathbf{a}\in\{0,1\}^{N-(i+1)\ell}} \widetilde{h}_i(x_1,\ldots,x_\ell,\mathbf{a}).
            \]
    \end{enumerate}
    \item The verifier picks $\boldsymbol{\alpha}^{(t)} \in \mathbb F^{\ell}$. The prover sends $\widetilde{h}_t(x_1,\ldots,x_{N-t\ell}) = \widetilde{h}_{t-1}(\boldsymbol{\alpha}^{(t)},x_1,\ldots,x_{N-t\ell})$. Both set $a_t=h_t(\boldsymbol{\alpha}^{(t)})$. The verifier picks $\boldsymbol{\beta} \in \mathbb F^{N-t\ell}$, and checks 
    \[
    \tilde{h}_t(\boldsymbol{\beta}) = f(\boldsymbol{\alpha}^{(1)},\ldots,\boldsymbol{\alpha}^{(t)},\boldsymbol{\beta})
    \]
    by querying to $\tilde{h}_t$ and $f$.
    \item The prover and the verifier jointly solve $\DCS{h_1}{a}$, $\DCS{h_{i+1}}{a_{i}}$ for $i \in \{1,\ldots, t-1\}$, $\DCS{\widetilde{h}_t}{a_{t}}$ in parallel.
\end{enumerate}
\end{tcolorbox}

For this protocol, we prove the following.
\begin{proposition}[Completeness]\label{proposition:completeness}
    If $\sum f=a$, then given $h_1,\ldots,h_t,\widetilde{h}_t$ by the honest prover of $\mathcal{P}_{\mathsf{split}}$, the verifier accepts with probability 1.
\end{proposition}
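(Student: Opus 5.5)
The plan is to check, one by one, that every test the verifier performs in $\mathcal{P}_{\mathsf{split}}$ holds identically when the prover is honest. Everything else then follows from the completeness of $\DCS{F}{a}$ (stated above, from \cite{levrat2025divide}). There are two kinds of tests. The first is the direct check in step~3, $\widetilde{h}_t(\boldsymbol{\beta}) = f(\boldsymbol{\alpha}^{(1)},\ldots,\boldsymbol{\alpha}^{(t)},\boldsymbol{\beta})$. The second consists of the $t+1$ parallel folded sumcheck instances in step~4. I will argue that each of these instances is a \emph{true} claim of the form $\sum F = a'$, so that each is accepted with probability $1$. Since the instances are run in parallel with independent verifier randomness, the conjunction of all tests is also accepted with probability $1$.

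For step~3, I unfold the recursion. By definition $\widetilde{h}_0 = f$ and $\widetilde{h}_i(\cdot) = \widetilde{h}_{i-1}(\boldsymbol{\alpha}^{(i)},\cdot)$. A straightforward induction on $i$ gives $\widetilde{h}_i(\mathbf{y}) = f(\boldsymbol{\alpha}^{(1)},\ldots,\boldsymbol{\alpha}^{(i)},\mathbf{y})$ as polynomials. Taking $i=t$ and evaluating at $\boldsymbol{\beta}$, the honest $\widetilde{h}_t$ passes the check deterministically.

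For step~4, I verify the sum relations one at a time.
\begin{itemize}
\item First instance. Summing $h_1$ over $\{0,1\}^{\ell}$ gives $\sum_{\mathbf{x}\in\{0,1\}^{\ell}} h_1(\mathbf{x}) = \sum_{\{0,1\}^N} f = a$, so the instance $\DCS{h_1}{a}$ is true.
\item Middle instances. For $1\le i\le t-1$, the honest $h_{i+1}$ is the partial sum of $\widetilde{h}_i$ over its last $N-(i+1)\ell$ variables. Hence $\sum_{\mathbf{x}\in\{0,1\}^\ell} h_{i+1}(\mathbf{x}) = \sum_{\{0,1\}^{N-i\ell}} \widetilde{h}_i$. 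By the identity above, this equals $\sum_{\mathbf{a}} \widetilde{h}_{i-1}(\boldsymbol{\alpha}^{(i)},\mathbf{a})$, which is exactly $h_i(\boldsymbol{\alpha}^{(i)}) = a_i$ by the definition of $h_i$ as a partial sum of $\widetilde{h}_{i-1}$. So $\DCS{h_{i+1}}{a_i}$ is true.
\item Last instance. The same computation gives $\sum \widetilde{h}_t = h_t(\boldsymbol{\alpha}^{(t)}) = a_t$.
\end{itemize}
The one point needing care is that $h_i$ sums out precisely the variables left after fixing the first $\ell$. This requires the index bookkeeping $N-i\ell$ to be consistent, and it uses that the honest $a_i$ is computed from the honest $h_i$.

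The only real obstacle is formal. $\DCS{F}{a}$ is stated for $F$ with $N=2^m$ variables, while here the polynomials have $\ell$ or $N-t\ell$ variables. I would handle this by padding with dummy variables to the next power of two and multiplying the target sum by the corresponding power of $2^{-1}$ (invertible since $q$ is odd). Alternatively, I would simply note that the completeness argument of \cite{levrat2025divide} does not depend on $N$ being a power of two. Once every instance is true, invoking the completeness of $\DCS{F}{a}$ for each instance finishes the proof.
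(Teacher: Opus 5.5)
Your proposal is correct and follows the paper's proof essentially step for step. The paper likewise shows that each of the $t+1$ folded instances is a true sum claim, via $\sum h_{i+1}=\sum \widetilde{h}_i = h_i(\alpha^{(i)}) = a_i$ and likewise $\sum \widetilde{h}_t = h_t(\alpha^{(t)})=a_t$; it notes that the step-3 check holds identically because $\widetilde{h}_t \equiv f(\alpha^{(1)},\ldots,\alpha^{(t)},\cdot)$, and then invokes completeness of $\mathsf{Fold}\text{-}\mathsf{DCS}$. Your extra remarks only make explicit what the paper leaves implicit: the induction, and padding to a power of two (adding $s$ dummy variables multiplies the sum by $2^{s}$, so you rescale by $2^{-s}$). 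Independence of the parallel runs is not actually needed, since a finite conjunction of probability-one events has probability one.
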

\begin{proof}
    Assume that $\sum_{x_1,\ldots,x_{N}} f(x_1,\ldots,x_N) =a$.
    The verifier accepts $\DCS{h_1}{a}$ since
    \[
        \sum_{x_1,\ldots,x_{\ell}} h_1(x_1,\ldots,x_{\ell}) = \sum_{x_1,\ldots,x_{N}} f(x_1,\ldots,x_N   ) = a.
    \]
    For $i\in \{1,\ldots ,t-1\}$, the verifier accepts $\DCS{h_{i+1}}{S_i}$ since
    \begin{align*}
        \sum_{x_1,\ldots,x_{\ell}} h_{i+1}(x_1,\ldots,x_{\ell}) &= \sum_{x_1,\ldots,x_{\ell}} \sum_{\mathbf{a}\in\{0,1\}^{N-(i+1)\ell}} \widetilde{h}_i(x_1,\ldots,x_\ell,\mathbf{a})\\
        &=   \sum_{x_1,\ldots,x_{N-i\ell}} \widetilde{h}_{i-1}(\boldsymbol{\alpha}^{(i)}, x_1,\ldots,x_{N-i\ell})\\
        &= h_i(\boldsymbol{\alpha}^{(i)}).
    \end{align*}
    The verifier accepts $\DCS{\widetilde{h}_t}{a_t}$ since
    \begin{align*}
        \sum_{x_1,\ldots,x_{N-t\ell}} \widetilde{h}_t(x_1,\ldots,x_{N-t\ell}) 
        &= \sum_{x_1,\ldots,x_{N-t\ell}} \widetilde{h}_{t-1}(\boldsymbol{\alpha}^{(t)},x_1,\ldots,x_{N-(t-1)\ell}) \\
        &= h_{t}(\boldsymbol{\alpha}^{(t)}).
    \end{align*}
    Finally, it holds
    \[
    \tilde{h}_t(\boldsymbol{\beta}) = f(\boldsymbol{\alpha}^{(1)},\ldots,\boldsymbol{\alpha}^{(t)},\boldsymbol{\beta})
    \]
    for any $\boldsymbol{\beta}\in \mathbb F^{N-t\ell}$ as $\tilde{h}_t(\mathbf x) \equiv F(\boldsymbol{\alpha}^{(1)},\ldots,\boldsymbol{\alpha}^{(t)},\mathbf x).$
    
\end{proof}

\begin{proposition}[Soundness]\label{proposition:soundness}
    If $\sum f\neq a$, then for any prover of $\mathcal{P}_{\mathsf{split}}$, the verifier accepts with probability at most $O\left( \frac{td\log N}{|\mathbb F_q|} \right)$, where $d$ is the total degree of $f$.
\end{proposition}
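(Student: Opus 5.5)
We prove the soundness of $\mathcal{P}_{\mathsf{split}}$ by tracing the first point at which the prover's messages deviate from the honest ones, and then charging that deviation either to a Schwartz--Zippel event or to the soundness of one of the parallel $\DCS{\cdot}{\cdot}$ instances. Assume $\sum f\neq a$, and let $h_1,\ldots,h_t,\widetilde{h}_t$ be the (possibly cheating) polynomials sent by the prover. Let $h_i^{*}$ and $\widetilde{h}_t^{*}$ be the honest polynomials defined from the \emph{actual} challenges $\boldsymbol{\alpha}^{(1)},\ldots,\boldsymbol{\alpha}^{(i-1)}$, namely $h_i^{*}(\mathbf{x})=\sum_{\mathbf a}\widetilde{h}_{i-1}(\mathbf x,\mathbf a)$ and $\widetilde{h}^{*}_t=f(\boldsymbol{\alpha}^{(1)},\ldots,\boldsymbol{\alpha}^{(t)},\cdot)$. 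We then split into cases according to whether the chain of claims is broken.

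\textbf{Case 1: $h_1=h_1^{*}$.} Then $\sum h_1=\sum f\neq a$, so the instance $\DCS{h_1}{a}$ is false, and the verifier accepts with probability at most $(\log\ell+1)(d+1)/q$ by the soundness of $\DCS{F}{a}$ from~\cite{levrat2025divide}.

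\textbf{Case 2: $h_1\neq h_1^{*}$.} We look for the first index at which a \emph{false} sum claim appears. Inductively, suppose $h_i\neq h_i^{*}$. Both are $\ell$-variate polynomials of total degree at most $d$, so by Schwartz--Zippel $h_i(\boldsymbol{\alpha}^{(i)})=h_i^{*}(\boldsymbol{\alpha}^{(i)})$ with probability at most $d/q$. Outside this bad event, $a_i\neq h_i^{*}(\boldsymbol{\alpha}^{(i)})=\sum h^{*}_{i+1}$. Now either $h_{i+1}=h_{i+1}^{*}$, in which case $\DCS{h_{i+1}}{a_i}$ is a false instance, or $h_{i+1}\neq h_{i+1}^{*}$ and we continue the induction. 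At the end of the chain, the analogous dichotomy applies to $\widetilde{h}_t$. If $\widetilde{h}_t=\widetilde{h}_t^{*}$, then $\DCS{\widetilde{h}_t}{a_t}$ is false. If instead $\widetilde h_t\neq\widetilde h_t^*$, the final check $\widetilde h_t(\boldsymbol\beta)=f(\boldsymbol\alpha^{(1)},\ldots,\boldsymbol\beta)=\widetilde h_t^*(\boldsymbol\beta)$ passes with probability at most $d/q$.

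\textbf{Union bound.} Summing over the $t$ Schwartz--Zippel events and the single false $\mathsf{DCS}$ instance, we get an acceptance probability of at most
\[
\frac{(t+1)d}{q}+\frac{(\log N+1)(d+1)}{q}=O\!\left(\frac{td\log N}{q}\right).
\]
Here the number of variables of each instance (at most $N$) bounds the number $m$ of folding rounds by $\log N$. Running the $\mathsf{DCS}$ instances in parallel does not interfere with this bound: the one false instance is rejected with the stated probability regardless of what happens in the others, since the verifier rejects if any instance fails.

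The main subtlety is that the challenges $\boldsymbol{\alpha}^{(i)}$ must be chosen \emph{after} $h_i$ is fixed, so that the Schwartz--Zippel step is legitimate. This holds in $\mathcal{P}_{\mathsf{split}}$, since $h_{i}$ is sent before $\boldsymbol{\alpha}^{(i)}$ is picked. We also need degrees to be inherited: restrictions and partial Boolean sums of $f$ have total degree at most $d$, so every instance and every Schwartz--Zippel application uses the same degree bound $d$.
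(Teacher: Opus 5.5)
Your proof is correct and follows essentially the paper's argument: compare each prover polynomial with the honest one defined from the actual challenges, apply Schwartz--Zippel whenever they differ, charge a false sum claim to the soundness of $\mathsf{Fold}\text{-}\mathsf{DCS}$, and let the final check catch a deviating $\widetilde{h}_t$. The only difference is bookkeeping. The paper splits on whether all $t+1$ claims $\sum p_i=a_{i-1}$ hold (the event $\mathcal F_{\mathsf{all}}$) and union-bounds the $\mathsf{Fold}\text{-}\mathsf{DCS}$ error over all instances. Your ``first false claim'' argument charges a single instance, which is fixed once Steps 1--3 end, and so gives the sharper additive bound $\frac{(t+1)d}{q}+\frac{(\log N+1)(d+1)}{q}$. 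It also bounds every Schwartz--Zippel event unconditionally rather than conditioned on $\mathcal F_{\mathsf{all}}$, which is the cleaner justification, since $\mathcal F_{\mathsf{all}}$ depends on prover messages chosen after the challenges.
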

\begin{proof}
    
    Assume that $\sum f\neq a$. Let $p_i$ for $i \in \{1.\ldots, t\}$, $\widetilde{p}_t$ be the polynomial actually sent by the prover. Then, $a_i= p_i(\boldsymbol{\alpha}^{(i)})$ for $i\in\{1,\ldots, t\}$. We write $\mathsf{Fold}\text{-}\mathsf{DCS}_i =\DCS{p_i}{a_{i-1}}$ for $i \in \{1,\ldots, t\}$ and $\mathsf{Fold}\text{-}\mathsf{DCS}_{t+1} = \DCS{\widetilde{p}_t}{a_t}$.

    Define the following events: $\mathcal E_{\mathsf{acc}}$ is the event that the verifier accepts. $\mathcal E_i$ is the event $p_i=h_i$ for $i\in \{1,\ldots,t\}$, and $\mathcal E_{t+1}$ is the event $\widetilde{p}_t=\widetilde{h}_t$. $\mathcal F_i$ is the event $\sum p_i=a_{i-1}$ for $i\in \{1,\ldots ,t\}$, and $\mathcal F_{t+1}$ is the event $\sum \widetilde{p}_t=a_t$. Define $\mathcal F_{\mathsf{all}} = \underset{i}{\bigwedge} \mathcal F_i$. Due to the soundness of $\mathsf{Fold}\text{-}\mathsf{DCS}$, for $i\in\{1,\ldots ,t+1\}$, 
    \[
    \mathrm{Pr}\left[ \text{The verifier accepts $\mathsf{Fold}\text{-}\mathsf{DCS}_i$}\middle| \overline{\mathcal F_i} \right]\le O\left( \frac{d\log N}{|\mathbb F_q|} \right).
    \]
    By union bound,
    \begin{align*}
    \mathrm{Pr}\left[ \mathcal E_{\mathsf{acc}} \middle| \overline{\mathcal F_{\mathsf{all}}} \right]
    &\le \sum_{i=1}^{t+1} \mathrm{Pr}\left[ \mathcal E_{\mathsf{acc}} \middle| \overline{\mathcal F_{i} } \right] \\
    &\le \sum_{i=1}^{t+1} \mathrm{Pr}\left[ \text{The verifier accepts $\mathsf{Fold}\text{-}\mathsf{DCS}_i$}\middle| \overline{\mathcal F_{i}} \right] 
    \le O\left( \frac{td\log N}{|\mathbb F_q|} \right).
    \end{align*}

    Next, we assume $\mathcal F_{\mathsf{all}}$ holds. Then, from the assumption that $\sum f\neq a$, we have
    \[
    \mathrm{Pr}\left[ \overline{\mathcal E_1} \middle| \mathcal F_{\mathsf{all}} \right] = 1.
    \]
    Under $ \overline{\mathcal E_1}\wedge \mathcal F_2$, $p_1 \not\equiv h_1$ and hence $p_1(\boldsymbol{\alpha}^{(1)}) = h_1(\boldsymbol{\alpha}^{(1)})$ with probability at most $d/|\mathbb F_q|$, implying $\overline{\mathcal E_2}$ holds with probability at least $1-d/|\mathbb F_q|$. Therefore,
    \[
    \mathrm{Pr}\left[ \overline{\mathcal E_2} \middle| \mathcal F_{\mathsf{all}} \right] = \mathrm{Pr}\left[ \overline{\mathcal E_2} \middle| \mathcal F_{\mathsf{all}} \wedge \overline{\mathcal E_1} \right] \cdot \mathrm{Pr}\left[ \overline{\mathcal E_1} \middle| \mathcal F_{\mathsf{all}} \right] \ge 1-d/|\mathbb F_q|.
    \]
    Similarly, for $i\in \{2,\ldots, t\}$, under $ \overline{\mathcal E_i}\wedge \mathcal F_{i+1}$, we have $p_i(\boldsymbol{\alpha}^{(i)}) = h_i(\boldsymbol{\alpha}^{(i)})$ with probability at most $d/|\mathbb F_q|$, implying $\overline{\mathcal E_{i+1}}$ holds with probability at least $1-d/|\mathbb F_q|$. Therefore,
    \[
    \mathrm{Pr}\left[ \overline{\mathcal E_{i+1}} \middle| \mathcal F_{\mathsf{all}}  \right] \ge \mathrm{Pr}\left[ \overline{\mathcal E_{i+1}} \middle| \mathcal F_{\mathsf{all}} \wedge \overline{\mathcal E_i} \right] \cdot \mathrm{Pr}\left[ \overline{\mathcal E_i} \middle| \mathcal F_{\mathsf{all}} \right]  \ge 1-\frac{id}{|\mathbb F_q|}.
    \]
    
    Now, under $\overline{\mathcal E_{t+1}}$, 
    \[
    \tilde{p}_t(\boldsymbol{\beta}) = f(\boldsymbol{\alpha}^{(1)},\ldots,\boldsymbol{\alpha}^{(t)},\boldsymbol{\beta})
    \]
    holds with probability at most $D/|\mathbb F|$. Therefore, we have
    \begin{align*}
    \mathrm{Pr}\left[ \mathcal E_{\mathsf{acc}} \middle| \mathcal F_{\mathsf{all}} \right] 
    &\le  \mathrm{Pr}\left[ \tilde{p}_t(\boldsymbol{\beta}) = f(\boldsymbol{\alpha}^{(1)},\ldots,\boldsymbol{\alpha}^{(t)},\boldsymbol{\beta}) \middle| \mathcal F_{\mathsf{all}} \right] \\
    & \le \mathrm{Pr}\left[ \tilde{p}_t(\boldsymbol{\beta}) = f(\boldsymbol{\alpha}^{(1)},\ldots,\boldsymbol{\alpha}^{(t)},\boldsymbol{\beta}) \middle| \mathcal F_{\mathsf{all}} \wedge \overline{\mathcal E_{t+1}} \right] \cdot \mathrm{Pr}\left[\overline{\mathcal E_{t+1}} \middle| \mathcal F_{\mathsf{all}} \right] +  \mathrm{Pr}\left[\mathcal E_{t+1} \middle| \mathcal F_{\mathsf{all}} \right]\\
    &\le \frac{d}{|\mathbb F_q|}\cdot \left(1-\frac{td}{|\mathbb F_q|}\right) + \frac{td}{|\mathbb F_q|} \\
    &\le \frac{(t+1)d}{|\mathbb F_q|}.
    \end{align*}
    The acceptance probability is bounded by
    \[
    \mathrm{Pr}\left[\mathcal E_{\mathsf{acc}}\right] = \mathrm{Pr}\left[ \mathcal E_{\mathsf{acc}} \middle| \mathcal F_{\mathsf{all}} \right] + \mathrm{Pr}\left[ \mathcal E_{\mathsf{acc}} \middle| \overline{\mathcal F_{\mathsf{all}}} \right] = O\left( \frac{td\log N}{|\mathbb F_q|} \right).
    \]
\end{proof}

\subsection{Proof of~\Cref{theo:divide-and-conquer}}
\Cref{proposition:completeness,proposition:soundness} show that $\mathcal{P}_{\mathsf{split}}$ correctly solves \subgraphcounting.
We next explain how to implement oracle access to the polynomials appeared in $\mathcal{P}_{\mathsf{split}}$.
\paragraph*{Implementing oracle access to the polynomials.} 
For $\subgraphcounting$, we use $N=k\lceil \log n \rceil$ since 
\[
f(v_1,\ldots,v_k)=\prod_{i\neq j}\widetilde{A}(v_i,v_j).
\]
We also set $t=k \lceil \log k \rceil$ in $\mathcal{P}_{\mathsf{split}}$.
Observe that $\widetilde{A}(i,j)$ is a multilinear polynomial. So $f$ has individual degree $k-1$.
Each of $h_1,\ldots, h_t,\widetilde{h}_t$ can be obtained by taking sums of $f$ or substituting random values to $f$.
Thus, they also have individual degree $k-1$, implying that
 the number of monomials in each of them is at most $k^\ell = k^{\lfloor\frac {\log n}{\log k}\rfloor} \le n$ (here we used the fact that the number of variables in $\widetilde{h}_t$ is $N-t\ell \le \ell$). The prover can distribute the coefficient of a single monomial to each node, with a vector from $\{0,1,\ldots,k-1\}^\ell$ representing the corresponding monomial (the $i$-th element of the vector represents the degree of the $i$-th variable). To implement queries, the prover gives each node the partial sums of monomials assigned to all its descendants in the spanning tree, and the root node can compute the value by aggregating them along the tree.  
 In all of instances of $\mathsf{Fold}\text{-}\mathsf{DCS}$, polynomials sent by the prover have at most $n$ monomials as well, and therefore oracle access to these polynomials can be implemented in exactly the same way.
\paragraph*{Analysis of complexities.}
The first three steps of $\mathcal{P}_{\mathsf{split}}$ require $t = \bigoo(k\log k)$ rounds and $\ell q \in\bigoo(\frac{\log^2 q }{\log k})$-bit messages. The last step (Step (4)) requires $\bigoo(\log \ell) = \bigoo(\log \log n)$ rounds and $\bigoo(t\log q) = \bigoo(k\log q \log k)$-bit messages. 

 The simulator for this protocol simulates each oracle query by an uniform random value. Recall that all polynomials defined in the protocol $\mathcal{P}_{\mathsf{split}}$ are also multilinear. Thus, the Fourier analysis given in~\Cref{lemma:tvofsubcount} shows that the evaluation of these polynomials at random points are statistically close to the uniform distribution. For the first three steps of $\mathcal{P}_{\mathsf{split}}$, the simulator can be taken from the class $\mathcal{C}[\bigoo(1),\bigoo(k\log^2 q)]$. For the last step, recalling that the protocol $\DCS{F}{a}$ for $N$ variables requires $\bigoo(\log N)$ rounds and $\bigoo(\log N \log q)$-bit messages, it can be simulated by the class $\mathcal{C}[\bigoo(1),\bigoo(t\log^2 \ell \log q)]$. The resulting simulator can be taken from the class $\mathcal{C}[\bigoo(1),k\cdot \mathrm{polylog}(n,k)]$, which completes the proof of~\Cref{theo:divide-and-conquer}.

\section{A Barrier for Round Reduction for \nonkcol}
In this section, we again deal with \nonkcol, focusing on constant-degree graphs. In \Cref{subsec:upper-bound-constant-degree-graphs}, we prove the upper bound side of~\Cref{theorem:conditional-lower-bound}, and in \Cref{subsec:lower-bound-constant-degree-graphs}, we prove the lower bound side of~\Cref{theorem:conditional-lower-bound}.

\subsection{Upper Bound for Constant-Degree Graphs}\label{subsec:upper-bound-constant-degree-graphs}
Consider the task of verifying \nonkcol\ of constant-degree graphs.
We use the polynomial $f$ for \nonkcol, i.e., $N= \bigoo(n)$ (for constant integer $k$). We run the first three steps of $\mathcal{P}_{\mathsf{split}}$ in \Cref{alg:split}, using $t$ the minimum integer such that $N-t\ell \le \ell$ where $\ell=\lceil \log_3 n \rceil$, to generate $t+1$ instances of Sumcheck. Instead of running $\mathsf{Fold}\text{-}\mathsf{DCS}$ as in the final step of $\mathcal{P}_{\mathsf{split}}$, the prover and the verifier perform the standard Sumcheck (using our Sumcheck compiler) for the generated instances. This solves $\nonthreecol$ of constant-degree graphs.

\paragraph*{Analysis.} Generating $t+1$ instances by the first three steps of $\mathcal{P}_{\mathsf{split}}$ takes $\bigoo(t)\in\bigoo(n/\log n)$ rounds and $\bigoo(\log^{2+o(1)} n)$-bit messages per node per round. Running $t+1$ instances of Sumcheck requires the same amount of rounds and message size: Consider that the prover and the verifier check $\bigoo(\log n)$ Sumcheck instances at once, using $\bigoo(\log n)$ rounds and $\bigoo(\log^{2+o(1)} n)$-bit messages per node per round. This is repeated for $\bigoo(t/\log n) = \bigoo(n/\log^2 n)$ times, making the total round complexity $\bigoo(n/\log n)$.
The completeness and soundness follow from \Cref{proposition:completeness} and \Cref{proposition:soundness}. But here the total degree in the soundness error is replaced by the individual degree, which is $\bigoo(1)$, as we use the standard Sumcheck in this case.

\subsection{Conditional Lower Bound for Constant-Degree Graphs}\label{subsec:lower-bound-constant-degree-graphs}

Now we show a lower bound for $\nonthreecol$ problem for simplicity. The same reduction holds for any \nonkcol\ problem for fixed $k$. We will use the following lemmas.
\begin{lemma}[The Sparsification Lemma~\cite{impagliazzo2001problems}]\label{lem:sparsification}
    For all $\varepsilon\geq 0$, $k$-CNF $F$ on $n$ variables can be written as the disjunction of at most
$2^{\varepsilon n}$ $k$-CNFs $\{F_i\}$ on $n$ variables such that $F_i$ contains each variable in at most $\mathrm{poly}(1/\varepsilon)$ clauses. Moreover, this reduction takes at most
$poly(n) 2^{\varepsilon n}$ time.
\end{lemma}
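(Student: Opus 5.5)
This is the sparsification lemma of Impagliazzo, Paturi and Zane, and the plan is to reproduce their branching argument on \emph{sunflowers} of clauses. Treat every clause as a set of literals, and call a family of clauses $S_1,\dots,S_h$ a sunflower with heart $H$ if $S_a\cap S_b=H$ for all $a\neq b$; the sets $S_a\setminus H$ are its petals. Fix thresholds $\theta_1<\theta_2<\dots<\theta_{k-1}$ depending only on $k$ and $\varepsilon$, chosen later and growing very fast with the heart size. The algorithm maintains a $k$-CNF, always in the order below:
\begin{enumerate}
\item It first deletes subsumed clauses.
\item It then looks for a sunflower with heart size $c$ and at least $\theta_c$ petals. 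It takes the smallest possible petal size, and among those the largest heart size $c$.
\item If such a sunflower exists, it branches into two formulas. The \emph{heart branch} replaces all $S_a$ by the single shorter clause $H$. The \emph{petal branch} replaces them by all petals $S_a\setminus H$.
\end{enumerate}
Since $\bigwedge_a S_a\equiv H\vee\bigwedge_a(S_a\setminus H)$, the original formula is equivalent to the disjunction of the two branches, so by induction $F\equiv\bigvee_i F_i$ over the leaves. Each step is polynomial time, so the running time is $\mathrm{poly}(n)$ times the number of leaves.

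Next I will show the frequency bound at a leaf. A leaf has no sunflower with heart size $c$ and at least $\theta_c$ petals, for any $c$. Fix a variable $x$ and consider the clauses of size $s$ containing $x$. By the Erd\H{o}s--Rado sunflower lemma, a family of size roughly $s!\,\theta^s$ contains a sunflower with $\theta$ petals. An induction on $s$ over the clauses sharing a partial heart shows that the number of clauses of each size containing $x$ is bounded by a function of $(\theta_1,\dots,\theta_{k-1})$. That bound is a constant depending on $k$ and $\varepsilon$, and with the choice of thresholds below it is $\mathrm{poly}(1/\varepsilon)$ for fixed $k$.

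The main obstacle is bounding the number of leaves by $2^{\varepsilon n}$, since each branching could double the count. The plan is a charging argument along each root-to-leaf path.
\begin{itemize}
\item A petal branch on a heart of size $c$ creates at least $\theta_c$ new clauses of size $s<k$.
\item Because sunflowers are chosen with minimal petal size first, these new short clauses cannot later be destroyed by a branching at a larger size without the path paying for it.
\item Within one size class the clauses that survive form a family with no large sunflower, so their total number is at most about $(\text{const}\cdot\theta)^{s}\, n$.
\end{itemize}
Hence along any path the number of petal branches producing size-$s$ clauses is $O(n\,\theta_{s}^{\,s}/\theta_{c})$, which is small once $\theta_c\gg\theta_{s}^{\,s}$. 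Heart branches, in turn, are bounded by the number of clauses of each size ever created, i.e.\ by a linear function of $n$ times the corresponding petal counts. A path is determined by the positions of its petal branches, so the number of paths is bounded by a binomial coefficient $\binom{O(n)}{\le \delta n}\le 2^{H(\delta)\,O(n)}$. Choosing the $\theta_c$ as iterated powers of $1/\varepsilon$ (e.g.\ $\theta_c=(k/\varepsilon)^{O(k)^{c}}$) makes $\delta$ small enough that this is at most $2^{\varepsilon n}$. I expect the delicate part to be the bookkeeping that prevents clauses from being counted twice across size classes, and I would follow the ordering invariant of IPZ closely there.
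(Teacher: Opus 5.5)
The paper gives no proof of this lemma; it is quoted from Impagliazzo--Paturi--Zane. So your sketch should be measured against the IPZ sunflower-branching argument. Several of your ingredients are the right ones: the subsumption step, the identity $\bigwedge_a S_a\equiv H\vee\bigwedge_a(S_a\setminus H)$, the link/Erd\H{o}s--Rado argument for the frequency bound at a leaf, and counting paths by the positions of petal branches.

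The gap is in how you attach thresholds. You call a sunflower good when it has $\theta_c$ petals, with $c$ the \emph{heart} size and $\theta_1<\dots<\theta_{k-1}$. A petal branch, however, creates clauses of the \emph{petal} size $s$. The number of $s$-clauses that can coexist with no good sunflower among them is governed by $\theta_1,\dots,\theta_{s-1}$, roughly $n\prod_{h<s}(s-h)\theta_h$. For your charging to bound the petal branches by $\delta n$, $\theta_c$ must dominate this capacity for every pair with $c+s\le k$. For $c=1$, $s=2$ (already when $k=3$) this reads $\theta_1\gg\theta_1$. Your own condition $\theta_c\gg\theta_s^{\,s}$ is likewise unsatisfiable with increasing thresholds whenever $c<s$.

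This is a failure of the algorithm itself, not only of the estimate. Take $k=3$ and a graph on $y_1,\dots,y_P$ of maximum degree less than $\theta_1$ whose edges are partitioned into matchings $E_1,\dots,E_M$ of size exactly $\theta_1$. For example, splitting the colour classes of a proper $\theta_1$-edge-colouring of a $(\theta_1-1)$-regular graph gives $M\ge P/4-\theta_1$. Let the formula consist of the clauses $\{x_i,y_a,y_b\}$ for $\{y_a,y_b\}\in E_i$.

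\begin{itemize}
\item The only good sunflowers are the blocks with heart $\{x_i\}$, each having $\theta_1$ petals of size $2$.
\item Branching on a block subsumes nothing in other blocks, because each $x_i$ is private and the $E_i$ are edge-disjoint.
\item Branching on a block creates no new good sunflower, because each $y_a$ lies in fewer than $\theta_1$ clauses in total.
\end{itemize}

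Hence every root-to-leaf path branches on all $M=\Omega(n)$ blocks. You get $2^{M}$ leaves, all satisfiable, with the constant independent of the thresholds and hence of $\varepsilon$.

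The repair is the IPZ rule:
\begin{itemize}
\item A sunflower of $c$-clauses with an $h$-element heart is good if it has at least $\theta_{c-h}$ petals, so the threshold is indexed by the petal size.
\item One branches on the smallest $c$, and then on the largest $h$.
\item $\theta_j$ is chosen as a large power of $\theta_{j-1}$, so that it dominates the per-literal capacity $\prod_{i<j} i\,\theta_i$ of the $j$-clauses.
\end{itemize}
With this indexing, a petal branch creating $j$-clauses adds far more clauses than that class can hold. The bookkeeping still requires IPZ's accounting of removals by subsumption, which your sketch leaves vague. For fixed $k$ the leaf frequency bound remains polynomial in $1/\varepsilon$.
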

Let $\mathsf{AM}[b,M]$ be the class of languages that can be decided by $\mathsf{AM}$ protocols in which the prover and the verifier exchange at most $b$ bits in total, and the number of messages exchanged between the prover and the verifier is $M$. Let $\mathsf{AMTIME}[T]$ be the class of languages that can be decided by 2-round $\mathsf{AM}$ protocols in which the prover and the verifier send each at most $T$ bits, and the runtime of the verifier is at most $T$. The following round reduction lemma for Arthur-Merlin protocols is shown in~\cite{goldreich2002interactive}.
\begin{lemma}[\cite{goldreich2002interactive}]\label{lem:parallelization}
    $\mathsf{AM}[b,M] \subseteq \mathsf{AMTIME}[(b\cdot M)^{\bigoo(M)}]$.
\end{lemma}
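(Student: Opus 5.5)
The plan is the classical round-collapse argument of Babai and Moran, implemented as in~\cite{goldreich2002interactive}: repeatedly swap a Merlin message with the Arthur message that follows it, so that the protocol takes the form ``Arthur, then Merlin'' after $O(M)$ swaps. We then track how the total communication and the verifier's running time grow with each swap. Throughout, the protocol has public coins, perfect-or-bounded completeness error and soundness error $1/3$, and total communication $b$. Merlin's strategy is unbounded, but at the end we only need the verifier's work to be bounded by the final communication length, which holds because the verifier simply re-runs the original verifier on each parallel copy.

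\textbf{Step 1 (amplification).} Given a current protocol with $M'$ messages and communication $L$, we run it $k$ times in parallel and accept by majority vote. For public-coin protocols, parallel repetition reduces soundness error exponentially, so taking $k=O(L)$ brings the error below $2^{-2L}$. The cleanest way to see this is through the standard ``value of the game tree'' argument: the optimal prover's success in the parallel game factorizes. This multiplies the communication by $O(L)$ and preserves the number of messages.

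\textbf{Step 2 (one MA$\to$AM swap).} Take the last occurrence of a pattern ``Merlin sends $y$ (length $\le L$), then Arthur sends coins $r$'' in the amplified protocol. We replace it by ``Arthur sends $r$, then Merlin sends $y$''. Completeness is unaffected. For soundness, fix any $y$: the bad event ``there exist a continuation and a $y$ that fool the verifier on coins $r$'' has probability at most $2^{|y|}\cdot 2^{-2L}\le 2^{-L}$ by a union bound over $y$. The swap costs nothing in length, and the number of alternations drops by one. Alternating Steps~1 and~2 $O(M)$ times yields a 2-message AM protocol. Each iteration multiplies the length by a factor $O(L_{\text{orig}}\cdot M)$, which we can keep uniform by always amplifying relative to the current length. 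Hence the final length, and with it the verifier's time (polynomial in length times the original verifier's time, which is at most $b$), is $(bM)^{O(M)}$.

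\textbf{Main obstacle.} The delicate point is the bookkeeping in Step~1. Naively, amplifying to error $2^{-L}$ where $L$ is the \emph{current} length makes the length square at each step, giving a doubly exponential blow-up. To avoid this, I would amplify only enough to beat $2^{|y|}$ for the single Merlin message being swapped. I would also note that after $j$ swaps every original Merlin message has been replicated by a factor $(bM)^{O(j)}$. Choosing the repetition count at each step as $O(bM)$ times the previous factor then keeps the growth geometric in the number of swaps, which yields exactly $(bM)^{O(M)}$. If this direct accounting gets messy, I would fall back on the Goldreich--H{\aa}stad formulation, which collapses all rounds at once by having Arthur send $(bM)^{O(M)}$ coin sequences up front.
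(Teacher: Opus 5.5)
The paper gives no proof of this lemma; it imports it from the cited work. Your argument therefore has to stand on its own, and as written it does not.

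\textbf{The gap is in Step~1, not in the bookkeeping.} You amplify by repeating the \emph{whole} protocol. In the amplified protocol, the Merlin message $y$ you swap in Step~2 is the $k$-tuple of that message across all copies. Its length is up to $kL$, not $\le L$, so your union bound becomes $2^{kL}\cdot 2^{-2L}$.

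No choice of $k$ repairs this. Swapping a Merlin--Arthur pair in $\Pi^k$ gives exactly $(\Pi')^k$, where $\Pi'$ is $\Pi$ with that single pair swapped. The swap alone can push the soundness error to $1$. For example, let Merlin send $y\in\{0,1\}^\ell$, then Arthur send $r\in\{0,1\}^\ell$ and accept iff $y=r$. The error is $2^{-\ell}$ before the swap, $1$ after it, and stays $1$ under any parallel repetition, majority or not. Your fallback is not an argument as stated.

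\textbf{The missing idea is Babai--Moran's \emph{suffix} repetition.} Merlin sends $y$ once. Only the continuation after $y$ is run in $k$ independent copies that share the prefix and $y$. Then the $k$ Arthur messages are moved in front of $y$. Now the union bound is over $2^{|y|}$ strings with $|y|$ unchanged.

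Even then, ``fix $y$; the error is $\le 2^{-2L}$'' does not hold for every history, because only the root value of the game is small. Do the accounting on the game-tree value instead:
\begin{itemize}
\item Use AND-acceptance, after first making completeness perfect, which is possible for public-coin protocols.
\item At a node $h$ just before $y$, the new value is at most $\min\{1,2^{|y|}v(h)^k\}\le v(h)+|y|/k$.
\item Max and average nodes pass this additive loss unchanged up to the root.
\item With $k=\Theta(bM)$, the total loss over the $O(M)$ swaps stays below a small constant.
\end{itemize}

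Perform the swaps from the last Merlin message backwards. Then every swapped message is an original one of length $\le b$, and the repetition count never compounds. Each swap multiplies the length of the copied suffix by $O(bM)$, which is exactly where $(bM)^{O(M)}$ comes from.

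Separately, your verifier-time bound needs the original verifier to run in time polynomial in $b$ (or in $n$). Your parenthetical claim that its time is at most $b$ does not follow from the communication bound.
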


We are now ready to prove the following theorem.
\begin{theorem}
    Let $\Pi$ be the problem of certifying $\nonthreecol$ of constant degree graphs.
    Assuming $\mathsf{coNP}\not\subseteq \mathsf{AMTIME}[2^{o(n)}]$, for any $b,M>0$ satisfying $b=\mathrm{poly}(n)$ and $M~\in~o(n/\log n)$, there is no $M$-round and $b$-message distributed Arthur-Merlin protocol for $\Pi$ where the verification algorithm at each node is a polynomial-time algorithm.
\end{theorem}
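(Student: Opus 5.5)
We argue by contradiction. Suppose there is an $M$-round distributed Arthur--Merlin protocol $\Pi$ for $\nonthreecol$ on constant-degree graphs, with $M\in o(n/\log n)$, messages of $b=\mathrm{poly}(n)$ bits, and polynomial-time verification at each node. We turn it into a subexponential-time two-round $\mathsf{AM}$ protocol for $\mathsf{UNSAT}$, which is $\mathsf{coNP}$-complete. This contradicts $\mathsf{coNP}\not\subseteq \mathsf{AMTIME}[2^{o(n)}]$.

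\textbf{From 3-CNF to constant-degree 3-coloring.} Let $\varphi$ be a 3-CNF on $n$ variables. Fix an arbitrary $\varepsilon>0$ and apply \Cref{lem:sparsification}. This writes $\varphi=\bigvee_{i} F_i$ with at most $2^{\varepsilon n}$ sparse formulas $F_i$, in which every variable occurs in $\mathrm{poly}(1/\varepsilon)=O(1)$ clauses, so each $F_i$ has $O(n)$ clauses. Then $\varphi$ is unsatisfiable iff every $F_i$ is unsatisfiable.

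Apply to each $F_i$ the standard gadget reduction from 3-SAT to 3-coloring: a palette triangle, a variable/negation edge pair per variable, and an OR-gadget per clause. We must keep the maximum degree constant. The palette vertices and high-occurrence literal vertices would otherwise have large degree, so we replace each of them by a chain of equality-forcing gadgets (copies linked by diamond/triangle gadgets that force equal colors). Each copy then has constant degree. Because each variable appears in $O(1)$ clauses, and because the fan-out of the palette can be spread along a bounded-degree tree of equality gadgets, the resulting graph $G_i$ has $n'=O(n)$ vertices and constant maximum degree. Moreover, $G_i$ is 3-colorable iff $F_i$ is satisfiable. We also fix IDs and a spanning tree for $G_i$ deterministically, so that $(G_i,I_i)$ is a legal instance. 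I expect this degree-bounding step to be the main obstacle: the linear size $n'=O(n)$ is exactly what makes the final exponent $o(n)$, and it relies on the sparsification.

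\textbf{Centralized simulation.} For fixed $i$, the distributed protocol on $G_i$ yields a centralized interactive proof. Arthur simulates the random messages of all $n'$ nodes at once, and Merlin supplies the prover's messages to all nodes. This uses $M$ rounds and $b'=n'\cdot b\cdot M=\mathrm{poly}(n)$ bits in total. Afterwards Arthur runs the $O(1)$-round local verification for every node, which takes $\mathrm{poly}(n)$ time since each node is polynomial-time, and accepts iff all nodes accept.

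To handle all $i$ simultaneously, we run the $2^{\varepsilon n}$ simulations in parallel, one per $F_i$. First we amplify each simulation by $O(\varepsilon n)$ parallel repetitions (majority or all-accept), which brings the soundness error down to $2^{-2\varepsilon n}$; a union bound over the $2^{\varepsilon n}$ instances then keeps the overall error small. The combined protocol has $M$ rounds and $b''=2^{O(\varepsilon n)}\mathrm{poly}(n)$ bits, and it decides $\mathsf{UNSAT}$ for $\varphi$.

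\textbf{Round reduction.} Finally, apply \Cref{lem:parallelization}. This puts the combined protocol in $\mathsf{AMTIME}[(b''M)^{O(M)}]$, and
\[
(b''M)^{O(M)} = 2^{O(M(\varepsilon n+\log n))}.
\]
This bound is too weak if we naively multiply. So instead of running all $2^{\varepsilon n}$ instances inside one protocol, we proceed in two stages. First we parallelize a single instance: $(b'M)^{O(M)}=2^{O(M\log n)}=2^{o(n)}$ because $M=o(n/\log n)$. Only then do we combine the $2^{\varepsilon n}$ resulting two-round $\mathsf{AM}$ protocols in parallel, with soundness amplification and a union bound as before. The total verifier time is
\[
2^{\varepsilon n}\cdot 2^{o(n)}\cdot\mathrm{poly}(n).
\]
Since $\varepsilon$ is arbitrary, this is $2^{o(n)}$ in the limiting sense used by the hypothesis, giving $\mathsf{UNSAT}\in\mathsf{AMTIME}[2^{o(n)}]$ and the desired contradiction. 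Checking that this parallel composition of two-round $\mathsf{AM}$ protocols preserves the two-round form (all Arthur coins first, then all Merlin answers) is the second delicate point I would verify.
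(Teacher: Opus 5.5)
Your proposal is correct and follows essentially the same route as the paper: sparsify via \Cref{lem:sparsification}, reduce each sparse formula to a bounded-degree 3-coloring instance, simulate the distributed protocol centrally as an $\mathsf{AM}[\mathrm{poly}(n),M]$ protocol, apply \Cref{lem:parallelization} to each instance separately to get time $2^{O(M\log n)}=2^{o(n)}$, and then run all $2^{\varepsilon n}$ instances. Two of your added details are real improvements. First, insisting on $n'=O(n)$ vertices is well placed: the binary-tree construction does give $O(n)$, whereas the $O(n\log n)$ count stated in the paper would only yield $2^{o(n\log n)}$. Second, the amplification you add is genuinely needed, though it is the completeness error that must survive the union bound over the $2^{\varepsilon n}$ instances, since for soundness it suffices that the sub-protocol for any one satisfiable $F_i$ rejects.
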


\begin{proof}
Let $\Phi$ be a $3$-CNF on variables $x_1,\ldots,x_n$. Fix arbitrary $\varepsilon > 0$.
By Lemma~\ref{lem:sparsification}, in $2^{\bigoo(\varepsilon n)}$ time, we can construct $2^{\varepsilon n}$ $3$-CNFs $\{\phi_j\}_{j\in [2^{\varepsilon n}]}$ such that each CNF $\phi_j$ contains $m = \bigoo(n)$ clauses such that each variable $x_i$ appears at most $\mathrm{poly}(1/\varepsilon) = \bigoo(1)$ clauses. Fix arbitrary CNF $\phi$ in these $2^{\varepsilon n}$ $3$-CNF formulas.

\paragraph*{The standard reduction from 3-SAT to 3-coloring.}
Let $\phi$ be a 3-CNF formula on $n$ variables, where each variable appears at most $\Delta$ clauses for some constant $\Delta$. We follow the standard reduction from 3-SAT to 3-coloring (e.g., \cite{goldreich2008computational}).
The constructed graph $G_\phi$ has three special nodes $v_T,v_F,v_B$ which form a triangle.
For each variable $x_i$, we have two nodes $v(x_i),v(\overline{x_i})$. We have an edge connects them, and they are also connected to $v_B$. For each clause $C$, we create a clause gadget, as illustrated in Figure~\ref{fig:gadget}. A clause gadget for a clause $C=(x_i\lor x_j \lor x_k)$ contains six nodes where $v(x_i),v(x_j),v(x_k), v_F, v_B$ are connected to these gadget nodes.  
\begin{figure}[h]
    \centering
    \includegraphics[width=0.7\linewidth]{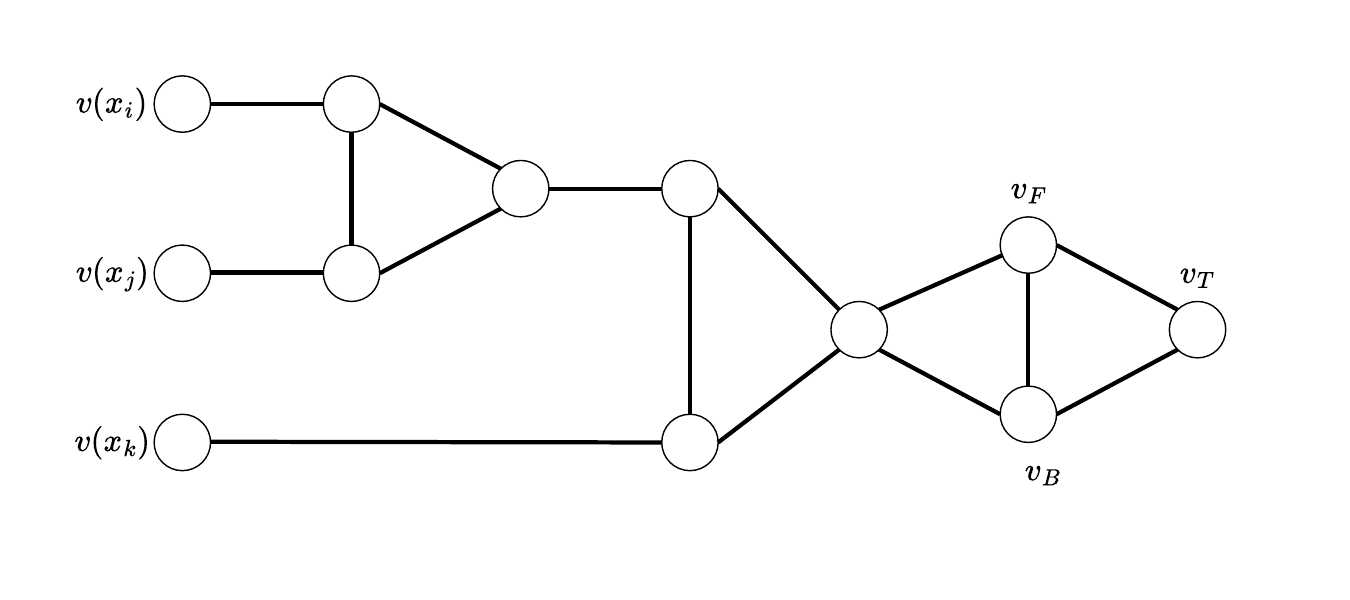}
    \caption{Clause gadget for a clause $C=(x_i\lor x_j \lor x_k)$. In any valid 3-coloring, at least one of $v(x_i),v(x_j),v(x_k)$ has the same color as of $v_T$. Note that all of $v(x_i),v(x_j),v(x_k)$ are connected to $v_B$ (but the corresponding edges are omitted in the figure for simplicity).}
    \label{fig:gadget}
\end{figure}

\paragraph*{Reducing the maximum degree of the graph.}
In the above construction, the number of nodes is $\bigoo(n+m)$ where $m$ is the number of clauses in $\phi$. Since we assumed that each variable appears at most constant number of clauses, $m\in\bigoo(n)$. In the constructed graph, variable nodes and gadget nodes have constant degrees, but the degree of $v_T,v_F,v_B$ are at least $\Omega(n)$. We can reduce the maximum degree as follows. Let us focus on one node, e.g., $v_B$. We add a binary tree rooted at $v_B$ of depth $\bigoo(\log(\mathrm{deg}(v_B)))$, where the number of leaves is exactly equal to $\mathrm{deg}(v_B)$. We delete each incident edge of $v_B$, and add a new edge connected to an unused leaf node instead of the original $v_B$. Finally, we replace each tree edge by another gadget as illustrated in Figure~\ref{fig:tree-gadget}, to make all tree nodes have the same color. 
\begin{figure}[h]
    \centering
    \includegraphics[width=0.4\linewidth]{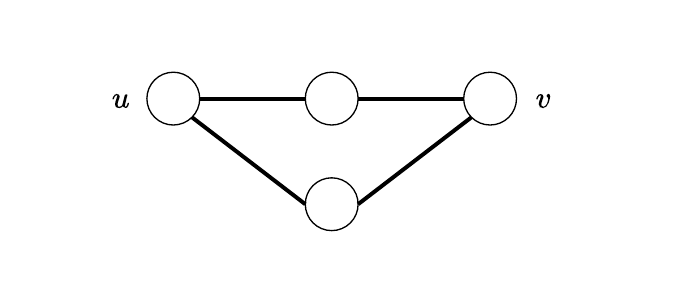}
    \caption{A gadget for a tree edge $(u,v)$. In any valid 3-coloring, $u$ and $v$ must have the same color.}
    \label{fig:tree-gadget}
\end{figure}
This modification increases the total number of nodes to $\bigoo(n \log n)$ while ensuring maximum degree $\bigoo(1)$.

For a given $3$-SAT formula $\phi$, we construct $G_\phi$ in $\mathrm{poly}(n)$ time. Assume that \nonthreecol\ can be certified by an $m$-round $\mathsf{dAM}$ protocol with message size $b$ on constant-degree, $n$-node and $m$-edge graphs. Recall that $G_\phi$ is a $\bigoo(n\log n)$-node $\bigoo(n\log n)$-edge graph. The verification algorithm $\mathcal{A}$ of this $\mathsf{dAM}$  protocol is a poly-time algorithm. Thus, simulating the verification algorithm $\mathcal{A}$ of all nodes gives an $\mathsf{AM}[b',m]$ protocol for some $b'=\mathrm{poly}(n)$. Using Lemma~\ref{lem:parallelization}, this can be converted to an $\mathsf{AMTIME}[2^{o(n)}]$ protocol, as
\[
 (b'\cdot m)^{\bigoo(m)} = \mathrm{poly}(n)^{o\left(\frac{n}{\log n}\right)} = 2^{o(n)}.
\]
Running the converted protocol for $2^{\varepsilon n}$ different instances (created by Lemma~\ref{lem:sparsification}) solves \textsc{Unsat} from the construction of $G_{\phi}$. This implies that for any $\varepsilon > 0$, we have 
$
\textsc{Unsat} \in \mathsf{AMTIME}[2^{\varepsilon n}\cdot 2^{o(n)} \cdot n^{\bigoo(1)}]
$ for any $\varepsilon >0$, implying
$
\textsc{Unsat} \in \mathsf{AMTIME}[2^{o(n)}].
$
\end{proof}

\section{Conclusion}

In this work, based on the robust Sumcheck protocol, we introduce a general framework for studying classical \emph{hard} problems in distributed verification under a statistical zero-knowledge guarantee. Using this framework, we improve the state of the art for two central problems---\nonkcol\ and \subgraphcounting.

Our results leave open the problem of understanding the scope of our Sumcheck-based approach.  In our application to $\nonkcol$, the reduction from $k$-coloring to $k$-SAT preserves the relevant locality: the
resulting constraints are associated with vertices and edges, and the corresponding arithmetization can be evaluated by local computation plus aggregation.  It would be interesting to develop a more systematic theory
of such locality-preserving reductions between distributed graph problems.\footnote{A related notion of reductions has recently been studied in local certification (non-interactive distributed verification model) by Esperet and Zeitoun~\cite{esperet2025reductions}, where local reductions are used to transfer proof-size lower bounds between graph properties. Our question is different, but similar in spirit: we ask which reductions preserve the locality and algebraic structure needed for distributed Sumcheck.}
More generally, one may ask which distributed graph predicates admit a low-degree arithmetization whose terms are locally computable and whose global sum captures the desired property. Such a characterization would
clarify a broader class of Sumcheck-friendly distributed problems. Relatedly, our results suggest that round compression is strongly problem-dependent; it remains open to characterize when the round complexity of distributed Sumcheck can be reduced, as in our protocol for $\subgraphcounting$.

A second direction is to make the framework more black-box. Our general distributed Sumcheck protocol assumes oracle access to the target polynomial, but in concrete graph problems this oracle access must be implemented inside the network, and the resulting transcript must be shown to preserve the statistical zero-knowledge property.  In this work, this step is handled separately for $\nonkcol$ and $\subgraphcounting$, using
problem-specific properties of their arithmetizations.  A natural goal is to abstract this part into a general compiler: given a suitable arithmetization of a distributed predicate, can one automatically obtain a
distributed statistical zero-knowledge proof with comparable round and message complexity? Such a compiler would be interesting in light of existing compilers for distributed zero-knowledge~\cite{zkdef, grilo2025distributed}.  These results provide broad applicability, but they may incur large proof-size overheads or rely
on cryptographic assumptions. A Sumcheck-based compiler could offer a complementary route for problems admitting sufficiently local low-degree arithmetizations. 
A stronger version of this question is whether such a compiler can achieve {\it perfect} zero-knowledge for some nontrivial classes of problems, namely, whether the oracle evaluation part can be simulated exactly rather than only up to small statistical distance. 

{\bf Acknowledgments.} The authors are grateful to the anonymous reviewers for their constructive comments and suggestions, which improved the current presentation of this work.

\bibliographystyle{alpha}
\bibliography{bibliography}

\end{document}